\definecolor{DarkGreen}{rgb}{0.1,0.5,0.1}
\definecolor{DarkRed}{rgb}{0.5,0.1,0.1}
\definecolor{DarkBlue}{rgb}{0.1,0.1,0.5}
\newtheorem{theorem}{Theorem}[section]
\newtheorem{lemma}[theorem]{Lemma}
\newtheorem{claim}[theorem]{Claim}
\newtheorem{corollary}[theorem]{Corollary}
\theoremstyle{definition}
\newtheorem{definition}[theorem]{Definition}
\newcommand{\ignore}[1]{}
\renewcommand{\Pr}{\mathop{\bf Pr\/}}                    
\newcommand{\E}{\mathop{\bf E\/}}
\newcommand{\poly}{\mathrm{poly}}
\newcommand{\R}{\mathbb R}
\renewcommand{\floatc@ruled}[2]{\vspace{2pt}{\@fs@cfont \#1.\:} \#2 \par
 \vspace{1pt}}
\renewenvironment{proof}{\medskip\noindent{\textbf{Proof:}}} {$\blacksquare$\vskip \belowdisplayskip}
\title{New Algorithms for Learning Incoherent and Overcomplete Dictionaries }
 \author{Sanjeev Arora \thanks{arora@cs.princeton.edu, Princeton University, Computer Science Department and Center for Computational Intractability}
\and
Rong Ge \thanks{rongge@microsoft.com, Microsoft Research, New England.Part of this work was done while the author was a graduate student at Princeton University and was supported in part by NSF grants CCF-0832797, CCF-1117309,
CCF-1302518, DMS-1317308, and Simons Investigator}
\and
Ankur Moitra \thanks{moitra@mit.edu, Massachusetts Institute of Technology, Department of Mathematics and CSAIL Part of this work was done while the author was a postdoc at the Institute for Advanced Study and was supported in part 
by NSF grant No.
DMS-0835373 and by an NSF Computing and Innovation Fellowship.}
 }
\begin{document}

\maketitle

\begin{abstract}
In {\em sparse recovery} we are given a matrix $A \in \R^{n\times m}$ (\textquotedblleft the dictionary\textquotedblright) and a vector of the form $A X$ where $X$ is {\em sparse}, and the goal is to recover $X$. This is a central notion in signal processing, statistics and machine learning. But in applications such as {\em sparse coding}, edge detection, compression and super resolution, the dictionary $A$ is unknown and has to be learned from random examples of the form $Y = AX$ where $X$ is drawn from an appropriate distribution --- this is the {\em dictionary learning} problem. In most settings, $A$ is {\em overcomplete}: it has more columns than rows. This paper presents a polynomial-time algorithm for learning overcomplete dictionaries; the only previously known algorithm with provable guarantees is the recent work of \cite{SWW} who gave an algorithm for the undercomplete case, which is rarely the case in applications. Our algorithm applies to {\em incoherent} dictionaries which have been a central object of study since they were introduced in seminal work of \cite{DH}. In particular, a dictionary is $\mu$-incoherent if each pair of columns has inner product at most $\mu / \sqrt{n}$.

The algorithm makes natural stochastic assumptions about the unknown sparse vector $X$, which can contain $k \leq c \min(\sqrt{n}/\mu \log n, m^{1/2 - \eta})$ non-zero entries (for any $\eta > 0$). This is close to the best $k$ allowable by the best sparse recovery algorithms  {\em even if one knows the dictionary $A$ exactly}. Moreover, both the running time and sample complexity depend on $\log 1/\epsilon$, where $\epsilon$ is the target accuracy, and so our algorithms converge very quickly to the true dictionary. Our algorithm can also tolerate substantial amounts of noise provided it is incoherent with respect to the dictionary (e.g., Gaussian). In the noisy setting, our running time and sample complexity depend polynomially on $1/\epsilon$, and this is necessary. 

\end{abstract}

\section{Introduction}


Finding {\em sparse} representations for data ---signals, images, natural language--- is a major focus of computational harmonic analysis~\citep{E,M}. 
This requires having the right dictionary $A \in \R^{n \times m}$ for the dataset, which allows each data point to be written as a sparse linear combination of the columns of $A$.
For images, popular choices for the dictionary include sinusoids, wavelets, ridgelets, curvelets, etc.~\citep{M} and each one is useful for different types of features: wavelets for impulsive events, ridgelets for discontinuities in edges, curvelets for smooth curves, etc. It is common to combine such hand-designed bases into a single dictionary, which is ``redundant" or ``overcomplete" because $m \gg n$. 
This can allow sparse representation even if an image contains many different ``types" of features jumbled together. 
In machine learning dictionaries are also used for {\em feature selection}~\citep{PAE} and for building classifiers on top of sparse coding primitives~\citep{KRL}. 

In many settings hand-designed dictionaries do not do as well as dictionaries that are fit to the dataset using automated methods.
In image processing such discovered dictionaries are used to perform denoising \citep{EA}, edge-detection \citep{MLBHP}, super-resolution \citep{YWHM} and compression. 
The problem of discovering the best dictionary to a dataset is called {\em dictionary learning} and also referred to as {\em sparse coding} in machine learning. 
Dictionary learning is also a basic building block  in the design of deep learning systems~\citep{RBL}. See~\cite{A, E} for further applications. 
In fact, the dictionary learning problem  was identified by  \cite{OF} as part of a study on 
internal image representations in the visual cortex. Their work suggested that basis vectors in learned dictionaries often correspond to  
well-known image filters such as Gabor filters.

Our goal is to design an algorithm for this problem with provable guarantees in the same spirit as recent work on nonnegative matrix factorization \citep{AGKM}, topic models \citep{AGM,AFHKL} and mixtures models \citep{MV,BS}. (We will later discuss why current algorithms in~\cite{LS}, \cite{EAH}, \cite{AEB}, \cite{KMRELS}, \cite{LBRN} do not come with such guarantees.) 
Designing such algorithms for dictionary learning has proved challenging. Even if the dictionary is completely known, it can be NP-hard to represent a vector $u$ as a sparse linear combination of the columns of $A$~\citep{DMA}. 
However for many natural types of dictionaries, the problem of finding a sparse representation is computationally easy. The pioneering work of \cite{DH}, \cite{DE} and \cite{GN} (building on the uncertainty principle of \cite{DS}) presented a number of important examples (in fact, the ones we used above) of dictionaries that are {\em incoherent} and showed that $\ell_1$-minimization can find a sparse representation in a known, incoherent dictionary if one exists.

\begin{definition}[$\mu$-incoherent]
An $n\times m$ matrix $A$ whose columns are {\em unit} vectors is  {\em $\mu$-incoherent} if  $\forall i \ne j$ we have $\langle A_i, A_j \rangle \leq \mu/\sqrt{n}.$
We will refer to $A$ as {\em incoherent} if $\mu$ is $O(\log n)$.
\end{definition}

\noindent A randomly chosen dictionary is incoherent with high probability (even if $m = n^{100}$). \cite{DH} gave many other important examples of incoherent dictionaries, such as one constructed from {\em spikes} and {\em sines}, as well as those built up from wavelets and sines, or even wavelets and ridgelets. There is a rich body of literature devoted to incoherent dictionaries (see additional references in \cite{GMS}). \cite{DH} proved that given $u = Av$ where $v$ has $k$ nonzero entries, where $k \leq \sqrt{n}/2\mu$, {\em basis pursuit} (solvable by a linear program) recovers $v$ exactly and it is unique. \cite{GMS} (and subsequently \cite{TGMS}) gave algorithms for recovering $v$ even in the presence of additive noise. \cite{T} gave a more general {\em exact recovery condition} (ERC) under which the sparse recovery problem for incoherent dictionaries can be algorithmically solved. All of these require $n > k^2 \mu^2$. In a foundational work, \cite{CRT} showed that basis pursuit solves the sparse recovery problem even for $n = O(k \log (m/k))$ if $A$ satisfies the weaker {\em restricted isometry property}~\citep{CT}. 
Also if $A$ is a full-rank square matrix, then we can compute $v$ from $A^{-1}u$, trivially. But our focus here will be on incoherent and overcomplete dictionaries; extending these results
to RIP matrices is left as a major open problem.

The main result in this paper is an algorithm that {\em provably} learns an unknown, incoherent dictionary from random samples $Y = AX$ where $X$ is a vector with at most $k \leq c \min(\sqrt{n}/\mu \log n, m^{1/2 - \eta})$ non-zero entries (for any $\eta > 0$, small enough constant $c>0$ depending on $\eta$). Hence we can allow almost as many non-zeros in the hidden vector $X$ as the best sparse recovery algorithms 
which {\em assume that the dictionary $A$ is known}. 
The precise requirements that we place on the distributional model are described in Section~\ref{sec:results}. We can relax some of these conditions at the cost of increased running time or requiring $X$ to be more sparse. Finally,  our algorithm can tolerate a substantial amount of additive noise, an important consideration in most  applications including sparse coding, provided it is independent and uncorrelated with the dictionary.

%

\medskip

\subsection{Related Works}
\paragraph{Algorithms used in practice}
Dictionary learning is solved in practice by variants of {\em alternating minimization}.  \cite{LS} gave the first approach;
subsequent popular approaches include the {\em method of optimal directions} (MOD) of
\cite{EAH}, and {\em K-SVD} of  \cite{AEB}. The general idea is to maintain a guess for $A$ and $X$  and at every 
step either update $X$ (using basis pursuit) or update $A$ by, say, solving a least squares problem. Provable guarantees for such algorithms have proved difficult because the initial guesses may be very far from the true dictionary, causing basis pursuit to behave erratically.  Also, the algorithms could converge to a dictionary that is not incoherent, and thus unusable for sparse recovery. (In practice, these heuristics do often work.)

\paragraph{Algorithms with guarantees} An elegant paper of \cite{SWW} shows how to provably recover $A$ {\em exactly} if it has full column rank, and $X$ has at most $\sqrt{n}$
nonzeros. 
However, requiring $A$ to be full column rank precludes most interesting applications where the dictionary is redundant and hence cannot have full column rank (see~\cite{DH,E, M}).  Moreover, the algorithm in~\cite{SWW} is not noise tolerant.

After the initial announcement of this work, \cite{AAN, AAJNT} independently gave provable algorithms for learning overcomplete and incoherent dictionaries. Their first paper~\citep{AAN} requires the entries in $X$ to be independent random $\pm 1$ variables. Their second~\citep{AAJNT} gives an algorithm --a version of alternating minimization-- that converges to the correct dictionary given a good initial dictionary (such a good initialization can only be found using \cite{AAN} in special cases, or more generally using this paper). Unlike our algorithms, theirs assume the sparsity of $X$ is at most $n^{1/4}$ or $n^{1/6}$ (assumption A4 in both papers), which are far from the $n^{1/2}$ limit of incoherent dictionaries. The main change from the initial version of our paper is that we have improved the dependence of our algorithms from $\mbox{poly}(1/\epsilon)$ to $\log 1/\epsilon$ (see Section~\ref{sec:refine}).

After this work, \cite{BKS} give an quasi-polynomial time algorithm for dictionary learning using sum-of-squares SDP hierarchy. The algorithm can output an approximate dictionary even when sparsity is almost linear in the dimensions with weaker assumptions. 

\paragraph{Independent Component Analysis}
When the entries of $X$ are independent, algorithms for {\em independent component analysis or ICA}~\citep{C} can recover $A$.
\cite{FJK} gave a provable algorithm that recovers $A$ up to arbitrary accuracy, provided entries in  $X$ are non-Gaussian (when $X$ is Gaussian, $A$ is only determined up to rotations anyway). Subsequent works considered the overcomplete case and gave provable algorithms even when $A$ is $n \times m$ with $m >n$~\citep{DCC,GVX}.

However, these algorithms are incomparable to ours since the algorithms are relying on different assumptions (independence vs. sparsity). With sparsity assumption, we can make much weaker assumptions on how $X$ is generated.  In particular, all these algorithms require the support $\Omega$ of the vector $X$ to be at least 3-wise independent ($\Pr[u,v,w\in \Omega] = \Pr[u\in \Omega]\Pr[v\in \Omega]\Pr[w\in \Omega]$) in the undercomplete case and $4$-wise independence in the overcomplete case . Our algorithm only requires the support $S$ to have bounded moments ($\Pr[u,v,w\in \Omega] \le \Lambda \Pr[u\in \Omega]\Pr[v\in \Omega]\Pr[w\in \Omega]$ where $\Lambda$ is a large constant or even a polynomial depending on $m,n,k$, see Definition~\ref{def:weak}).
Also, because our algorithm relies on the sparsity constraint, we are able to get almost exact recover in the noiseless case (see Theorem~\ref{thm:mainfront} and Section~\ref{sec:refine}). This kind of guarantee is impossible for ICA without sparsity assumption.

\subsection{Our Results}\label{sec:results}

A range of results are possible which trade off more assumptions  with  better performance.
We give two illustrative ones: the first makes the most assumptions but has the best performance;
the second has the weakest assumptions and somewhat worse performance. 
The theorem statements will be cleaner if we use asymptotic notation: the parameters $k, n, m$ will go to infinity and the constants denoted as ``$O(1)$" are arbitrary so long as they do not grow with these parameters. 

First we define the class of distributions that the $k$-sparse vectors must be drawn from.
We will be interested in distributions on $k$-sparse vectors in $\R^m$ where 
each coordinate is nonzero with probability $\Theta(k/m)$ (the constant in $\Theta(\cdot)$
can differ among coordinates). 

\begin{definition}[Distribution class $\Gamma$ and its moments] \label{def:gamma}
The distribution is in class $\Gamma$ if (i) each nonzero $X_i$ has expectation $0$ and lies in $[-C, -1] \cup [1, C]$ where $C =O(1)$. (ii) Conditioned on any subset of coordinates 
in $X$ being nonzero, the values $X_i$ are independent of each other.  

The distribution has {\em bounded $\ell$-wise moments} if the probability that $X$ is nonzero in any 
subset $S$  of $\ell$ coordinates is at most $c^{\ell}$ times 
$\prod_{i \in S}\Pr[X_i \neq 0]$ where $c =O(1)$.
\end{definition}

\paragraph{Remark:}(i) The bounded moments condition trivially holds for any constant $\ell$ if 
the set of nonzero locations is a random subset of size $k$. The {\em values} of these nonzero locations are allowed to be distributed very differently from one another. (ii) The requirement that nonzero $X_i$'s be bounded away from zero in magnitude is similar in spirit to the {\em Spike-and-Slab Sparse Coding} (S3C) model
of \cite{GCB}, which also encourages nonzero latent variables to be bounded away from zero to avoid degeneracy issues that arise when some coefficients are much larger than others.
(iii) In the rest of the paper we will be focusing on the case when $C = 1$, all the proofs generalize directly to the case $C > 1$ by losing constant factors in the guarantees.

Because of symmetry in the problem, we can only hope to learn dictionary $A$ up to permutation and sign-flips. We say two dictionaries are column-wise $\epsilon$-close, if after appropriate permutation and flipping the corresponding columns are within distance $\epsilon$.

\begin{definition}
Two dictionaries $A, B \in \R^{n\times m}$ are column-wise $\epsilon$-close, if there exists a permutation $\pi$ and $\theta \in \{\pm 1\}^m$ such that $\| (A_i) - \theta_i (B)_{\pi(i)}\| \le \epsilon$.
\end{definition}

Later when we are talking about two dictionaries that are $\epsilon$-close, we always assume the columns are ordered correctly so that $\|A_i - B_i\| \le \epsilon$.

\begin{theorem}
\label{thm:mainfront}
There is a polynomial time algorithm to learn a $\mu$-incoherent dictionary $A$ from random examples. With high probability the algorithm returns a dictionary $\hat{A}$ that is column-wise $\epsilon$ close to $A$ 
given random samples of the form $Y = AX$, where $X \in \R^n$ is chosen according to some distribution in $\Gamma$
and $A$ is in $\R^{n \times m}$:

\begin{itemize} \itemsep=0pt

\item If $k\leq c\min(m^{2/5}, \frac{\sqrt{n}}{\mu \log n})$ and the distribution has bounded $3$-wise moments, $c>0$ is a universal constant, then the algorithm requires $p_1$ samples 
and runs in time $\widetilde{O}(p_1^2n)$.

\item If $k \leq c\min(m^{(\ell -1)/(2\ell -1)}, \frac{\sqrt{n}}{\mu \log n})$ and the distribution has bounded $\ell$-wise moments, $c>0$ is a constant only depending on $\ell$, then the algorithm requires $p_2$ samples and runs in time  $\widetilde{O}(p_2^2 n)$


\item Even if each sample is of the form $Y^{(i)} = AX^{(i)} + \eta_i$, where $\eta_i$'s are independent spherical Gaussian noise with standard deviation $\sigma = o(\sqrt{n})$, the algorithms above still succeed provided the number of samples is at least $p_3$ and $p_4$ respectively.


\end{itemize}

\noindent In particular $p_1 = \Omega((m^2/k^2) \log m +  m k^2 \log m + m \log m \log 1/\epsilon)$ and $p_2 = \Omega((m/k)^{\ell -1} \log m + m k^2 \log m  \log 1/\epsilon)$ and $p_3 $ and $p_4$ are larger by a $\sigma^2/\epsilon^2$ factor. 

\end{theorem}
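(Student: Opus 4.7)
\medskip
\noindent\textbf{Proof plan.}
My strategy is to reduce dictionary learning to a graph-clustering problem driven by pairwise (and triple) correlations of the samples, and then to extract each column from the samples in its cluster. First, given two samples $Y^{(u)} = AX^{(u)}$ and $Y^{(v)} = AX^{(v)}$, observe that
\[
\langle Y^{(u)}, Y^{(v)} \rangle = \sum_{i} X^{(u)}_i X^{(v)}_i + \sum_{i\ne j} X^{(u)}_i X^{(v)}_j \langle A_i, A_j\rangle .
\]
If the supports share a common coordinate $i$, the first sum contains a contribution of magnitude at least $1$, while $\mu$-incoherence bounds the off-diagonal cross terms by $O(k^2 \mu / \sqrt n) = o(1)$ under the hypothesis $k \le c\sqrt n/(\mu \log n)$. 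Thus a simple threshold on $|\langle Y^{(u)}, Y^{(v)} \rangle|$ serves as a reliable \emph{intersection test}: the test fires essentially iff $\mathrm{supp}(X^{(u)})\cap\mathrm{supp}(X^{(v)})\ne \emptyset$. The probability of sharing a common coordinate is $\Theta(k^2/m)$ by the bounded-moment hypothesis, and of sharing two or more coordinates is $O((k^2/m)^2)$ in the $3$-wise case (and analogously for $\ell$-wise), which for $k \ll m^{(\ell-1)/(2\ell-1)}$ ensures a substantial fraction of ``good'' pairs sharing exactly one coordinate.

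Next I turn this test into a clustering procedure. Fix an ``anchor'' pair $(u,v)$ that passes the intersection test and, with high probability, pinpoints a unique column $i^*$. Then filter all other samples $Y^{(w)}$ by requiring that both $|\langle Y^{(u)}, Y^{(w)}\rangle|$ and $|\langle Y^{(v)}, Y^{(w)}\rangle|$ exceed the threshold; this selects exactly the samples whose support contains $i^*$. Within this cluster, $Y^{(w)} = X^{(w)}_{i^*} A_{i^*} + \text{(noise from other columns)}$, where the other-column contribution is a sum of $O(k)$ incoherent unit vectors weighted by independent mean-zero variables. Accumulating many such samples and either (a) taking the top singular vector of the reweighted matrix or (b) averaging after aligning signs via the anchor yields an estimate $\widehat A_{i^*}$ whose error is controlled by $\sqrt{k/n}$ times the incoherence parameter, giving a constant-factor approximation to $A_{i^*}$; incoherence ensures the signal direction dominates the spectrum. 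Running this for every anchor pair recovers every column (up to permutation and sign) with high probability, since any given column is hit by $\Omega(1)$ anchors out of $\Omega(p_1)$ sample pairs.

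The bound $p_1 = \Omega(m\log m \cdot \log(1/\epsilon))$ and the $\log(1/\epsilon)$ running-time dependence come from a \emph{refinement} stage: starting from a coarse dictionary $\widehat A$ that is column-wise $O(1/\log n)$-close to $A$, I would apply an iterative update in which, at each step, I use $\widehat A$ to (approximately) decode the sparse support of each sample, then re-average the residuals inside each recovered cluster to sharpen the column estimate. A contraction argument shows each iteration shrinks the per-column error by a constant factor, so $O(\log(1/\epsilon))$ iterations suffice. For the noisy model $Y = AX + \eta$ with spherical $\sigma$-Gaussian noise, both the intersection test and the averaging step degrade by additive $O(\sigma/\sqrt n)$ perturbations that are independent of $A$ and of each other across samples; hence using $\sigma^2/\epsilon^2$ times more samples reduces the averaging variance to $\epsilon$, yielding $p_3, p_4$.

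The main obstacle is the analysis of the intersection and cluster-filter tests in the overcomplete regime. Because $m\gg n$, the cross-term sum in $\langle Y^{(u)}, Y^{(v)}\rangle$ has $\Theta(k^2)$ terms with entries of magnitude $\mu/\sqrt n$, and controlling this both in expectation and with high probability requires the bounded-moment condition (to limit joint support overlap) and the incoherence bound $k\mu \le c\sqrt n/\log n$ (to beat the signal magnitude of $1$ with room for a union bound over all $\binom{p}{2}$ pairs). Handling the $\ell$-wise case requires a more refined moment bound on the number of samples sharing two or more columns, which is exactly where the $m^{(\ell-1)/(2\ell-1)}$ threshold on $k$ enters. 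Verifying that the resulting clusters are sufficiently ``clean'' to support the constant-factor column recovery, and that the refinement contraction actually holds in this regime, will be the most delicate part of the argument.
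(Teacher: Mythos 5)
Your proposal correctly identifies the thresholded inner product as the basic intersection test and the overall pipeline (cluster by coordinate, estimate each column from its cluster, then refine), but the clustering step has a real gap. You filter $w$ into the cluster of $i^*$ by requiring that $w$ be a graph-neighbor of both anchors $u$ and $v$; this is a second-order (pairwise) criterion. A sample $w$ can pass it falsely by sharing some $j\in\Omega^{(u)}\setminus\{i^*\}$ with $u$ and some $j'\in\Omega^{(v)}\setminus\{i^*\}$ with $v$. True positives occur with probability $\Theta(k/m)$ per sample, but these two-leg false positives occur with probability $\Theta(k^4/m^2)$, so the false-positive-to-true-positive ratio scales like $k^3/m$. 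That only goes to zero when $k=o(m^{1/3})$, strictly short of the claimed $k\le cm^{2/5}$. The paper's test is genuinely third-order: to decide whether $w$ belongs, it counts the \emph{common neighbors of the triple} $(u,v,w)$ in the connection graph and thresholds that count; a triple with a common intersection has $\Omega(pk/m)$ common neighbors, while a triple without one has $O(p(k^6/m^3 + k^3/m^2))$, and the binding constraint $k^6/m^3 = o(k/m)$ yields exactly $k\le cm^{2/5}$. Your ``exactly selects the samples whose support contains $i^*$'' claim is false in the parameter range the theorem covers.

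The $\ell$-wise bullet widens the gap further. Reaching $k\le cm^{(\ell-1)/(2\ell-1)}$ is not obtained by sharpening the moment bound on pairwise overlap, as you suggest; it requires passing to $\ell$-tuples of samples and counting common neighbors of $\ell$-tuples, and the combinatorial heart of that analysis is a bound on the number of ways to pierce a collection of $\ell$ size-$k$ sets with bounded pairwise intersections using $s$ points. That piercing-number machinery, which controls the false-positive rate when no common intersection exists, is entirely absent from your outline. Two more minor points: your refinement stage starts from a ``column-wise $O(1/\log n)$-close'' dictionary, but the contraction for the iterative averaging requires the initial error to be $O(1/k)$ (and incoherence $\mu/\sqrt{n}\lesssim 1/(k\log k)$); $O(1/\log n)$ is far too weak for $k$ polynomial in $m$. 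And your claimed $\sqrt{k/n}$ error for the coarse stage does not match the $\zeta=\max(\mu k/\sqrt n,\ \sqrt{k/m})$ error of the SVD-based recovery nor the sample-dependent $\epsilon$ of the sign-aligned averaging; you need the coarse stage to reach $O(1/k)$ accuracy, which forces the $mk^2\log m$ term in $p_1$.
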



\paragraph{Remark:}
The sparsity that our algorithm can tolerate -- the minimum of $\frac{\sqrt{n}}{\mu \log n}$ and $m^{1/2 - \eta}$ -- approaches the sparsity that the best known algorithms require {\em even if $A$ is known}.  

\noindent Although the running time and sample complexity of the algorithm are relatively large polynomials, there are many ways to optimize the algorithm. See the discussion in Section~\ref{sec:implement}. 




Now we describe the other result which requires fewer assumptions on how the samples are generated, but require more stringent bounds on the sparsity:

\begin{definition}[Distribution class ${\mathcal D}$] \label{def:weak}
A distribution is in class ${\mathcal D}$
if (i) the events $X_i \ne 0$ have {\em weakly bounded} second and third moments, in the sense that $\Pr[X_i \ne 0\mbox{ and }X_j \ne 0] \le n^{\epsilon} \Pr[X_i\ne 0] \Pr[X_j\ne 0]$, $\Pr[X_i,X_j,X_t \ne 0] \le o(n^{1/4}) \Pr[X_i\ne 0]\Pr[X_j\ne 0]\Pr[X_t\ne 0]$.
(ii) Each nonzero $X_i$ is in $[-C, -1] \cup [1,C]$ where $C = O(1)$.
\end{definition}

The following theorem is proved similarly to Theorem~\ref{thm:mainfront}, and is sketched in Appendix~\ref{sec:ext}.

\begin{theorem}\label{thm:weakerassumption}
There is a polynomial time algorithm to learn a $\mu$-incoherent dictionary $A$ from random examples of the form $Y = AX$, where $X$ is chosen according to some distribution in
${\mathcal D}$. If $k\leq c\min(m^{1/4}, \frac{n^{1/4 - \epsilon/2}}{\sqrt{\mu}})$ and we are given $p \geq \Omega(\max(m^2/k^2 \log m,  \frac{m n^{3/2} \log m \log n }{k^2 \mu}))$ samples , then the algorithm succeeds with high probability, and the output dictionary is column-wise $\epsilon = O(k\sqrt{\mu}/n^{1/4-\epsilon/2})$ close to the true dictionary. The algorithm runs in time $\tilde{O}(p^2n + m^2p)$. 
The algorithm is also noise-tolerant as in Theorem~\ref{thm:mainfront}.
\end{theorem}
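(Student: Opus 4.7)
The plan is to adapt the algorithm and analysis underlying Theorem~\ref{thm:mainfront}, tracking exactly how the two moment bounds in Definition~\ref{def:weak} enter the proof and re-balancing the resulting error terms. The top-level algorithm is unchanged: (a) use a pairwise statistic on the samples $Y^{(p)}$ to decide, for each pair of samples, whether their supports share a dictionary column; (b) for each column of $A$, collect the samples that share it and extract the column as a top singular direction of those samples; (c) run the same column-refinement step used in Theorem~\ref{thm:mainfront}, but only up to the accuracy that the weaker moments allow.

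The first step to audit is the overlap test. The core inequality is that if $Y^{(1)} = A X^{(1)}$ and $Y^{(2)} = A X^{(2)}$ share exactly one support coordinate $i$, then $\langle Y^{(1)}, Y^{(2)} \rangle$ is bounded below by $1 - O(\mu k/\sqrt{n})$ (using incoherence and the fact that nonzero entries are in $[-C,-1]\cup[1,C]$), while if the supports are disjoint the inner product is $O(\mu k/\sqrt{n})$. Under the weak second-moment assumption $\Pr[X_i\!\neq\! 0, X_j\!\neq\! 0]\le n^{\epsilon}\Pr[X_i\!\neq\! 0]\Pr[X_j\!\neq\! 0]$, the probability that a random pair shares two coordinates is at most $n^{\epsilon}(k/m)^2 \cdot m^2$ times the base rate, so the ``shares exactly one'' event dominates as long as $n^{\epsilon}k^2/m = o(1)$, which is exactly $k\le m^{1/4}$ (with room to spare). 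The choice $k\le n^{1/4-\epsilon/2}/\sqrt{\mu}$ ensures the cross term $\mu k^2/\sqrt{n}$ remains $o(1)$, separating the two cases of the test.

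The second step---extracting a column $A_i$ from samples that all contain coordinate $i$---is where the weak third moment is used. If $\calS_i$ is the set of such samples, then $\E[Y Y^\top \mid i \in \mathrm{supp}(X)]$ has a rank-one component $\E[X_i^2\mid i\in\mathrm{supp}(X)]\,A_i A_i^\top$ plus a cross-term whose spectral norm is controlled by the conditional probabilities $\Pr[j\in \mathrm{supp}(X)\mid i\in \mathrm{supp}(X)]$ and $\Pr[j,t\in\mathrm{supp}(X)\mid i\in\mathrm{supp}(X)]$. These are precisely what the weak second and third moment bounds govern: the second moment bounds the diagonal contribution (of order $n^{\epsilon}k/m$), and the weak third moment gives an off-diagonal spectral bound of $o(n^{1/4})\cdot (k/m)^2$ times incoherence factors. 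Using Wedin/Davis--Kahan on the resulting perturbed rank-one matrix and plugging in the permitted sparsity gives the column-wise error $O(k\sqrt{\mu}/n^{1/4-\epsilon/2})$ claimed in the statement. Sample complexity follows from a union bound over the $\binom{m}{1}$ columns plus a Bernstein-type concentration for each $Y Y^\top$ estimate; the two terms in the $\max$ inside the $\Omega(\cdot)$ correspond respectively to ``enough pairs land in each $\calS_i$'' and ``each $\calS_i$-average concentrates.''

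The main obstacle, and the reason this theorem gives only polynomial accuracy rather than $\log(1/\epsilon)$ dependence, is that the refinement step of Section~\ref{sec:refine} requires strong enough higher-moment control to amplify a coarse estimate of $A_i$ into an arbitrarily accurate one. The weak third-moment bound $o(n^{1/4})$ is too loose to drive this iteration below the error $O(k\sqrt{\mu}/n^{1/4-\epsilon/2})$ produced by the SVD step: attempting to iterate introduces an $n^{1/4}$ blow-up per round that cancels the contraction. Consequently the proof stops at the one-shot recovery bound, and the noise-tolerance claim is obtained by inserting the same Gaussian-noise perturbation analysis as in Theorem~\ref{thm:mainfront} (independent, incoherent noise adds an $O(\sigma/\sqrt{n})$ perturbation to each $Y Y^\top$ estimate, absorbed by inflating the sample count by the stated $\sigma^2/\epsilon^2$ factor).
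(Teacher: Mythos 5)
Your high-level route matches the paper's proof sketch for Theorem~\ref{thm:weakerassumption} (Appendix~\ref{sec:ext}): build the connection graph, find the overlapping clustering, then recover each column by SVD over its cluster, accepting a fixed one-shot accuracy because the refinement of Section~\ref{sec:refine} is unavailable. But two of your attributions are off. First, you place the weak third-moment bound $\Pr[X_i,X_j,X_t\ne 0]\le o(n^{1/4})\prod\Pr[X_\cdot\ne 0]$ inside the SVD/covariance step, bounding an off-diagonal spectral term of $\hat\Sigma_i$. In the paper it is used in the \emph{clustering} step (the analogue of Lemma~\ref{lemma:lower}, bounding the probability that a fresh support $\Omega$ simultaneously hits $\Omega^{(1)},\Omega^{(2)},\Omega^{(3)}$ when the triple has no common intersection); the SVD variance analysis (Lemma~\ref{lemma:var}) is a second-moment computation, and what must change there is the definition of $R_i^2$ --- it is redefined as $\E_{\Gamma_i}[\langle A_i,AX\rangle^2]$ to absorb the cross-covariances $\E[X_iX_j]$ that no longer vanish under $\mathcal D$, together with a degraded cross-term $2k\sqrt{\mu}/n^{1/4}$ in place of $2\mu k/\sqrt{n}$. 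You do not mention this redefinition, which is the substantive change to Section~\ref{sec:recovery2}.

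Second, your explanation for why iterative refinement cannot close the gap --- an $n^{1/4}$ blow-up from the loose third moment cancelling the contraction --- is not the actual obstruction. {\sc IterativeAverage} is analyzed (Claim~\ref{claim:vectorb}) by treating $\sum_j (I-M_i)(A_j-B_j)X^{(i)}_j$ as a sum of \emph{independent mean-zero} vectors; this uses both the independence of the nonzero values and $\E[X_j\mid X_j\ne 0]=0$, neither of which is assumed in $\mathcal D$. Without them the update is biased, so the contraction argument does not even start, independently of how tight the moment bounds are. Your deterministic overlap test via Lemma~\ref{lemma:graph} (with the $k^2\mu/\sqrt n$ bound, not $k\mu/\sqrt n$ as written in your opening lines) is the right replacement for the Hanson--Wright-based Lemma~\ref{lemma:hw}, and your identification of the $k\le m^{1/4}$ and $k\le n^{1/4-\epsilon/2}/\sqrt{\mu}$ constraints is consistent with the paper.
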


\subsection{Proof Outline}

%


The key observation in the algorithm is that we can test whether two samples share the same dictionary element (see Section~\ref{sec:graph}). Given this information, we can build a graph whose vertices are the samples, and edges correspond to samples that share the same dictionary element. A large cluster in this graph corresponds to the set of all samples with $X_i \ne 0$. In Section~\ref{sec:overlap} we give an algorithm for finding all the large clusters. Then we show how to recover the dictionary given the clusters in Section~\ref{sec:recovery}. This allows us to get a rough estimate of the dictionary matrix. Section~\ref{sec:refine} gives an algorithm for refining the solution in the noiseless case. The three main parts of the techniques are:

\medskip
\noindent{\bf Overlapping Clustering:} Heuristics such as MOD \citep{EAH} or K-SVD \citep{AEB} have a cyclic dependence: If we knew $A$, we could solve for $X$ and if we knew all of the $X$'s we could solve for $A$. Our main idea is to break this cycle by (without knowing $A$) finding all of the samples where $X_i \neq 0$. We can think of this as a cluster $\mathcal{C}_i$. Although our strategy is to cluster a random graph, what is crucial is that we are looking for an {\em overlapping} clustering since each sample $X$ belongs to $k$ clusters! Many of the algorithms which have been designed for finding overlapping clusterings (e.g.  \cite{AGSS}, \cite{BBBCT}) have a poor dependence on the maximum number of clusters that a node can belong to. Instead, we give a simple combinatorial algorithm based on triplet (or higher-order) tests that recovers the underlying, overlapping clustering. In order to prove correctness of our combinatorial algorithm, we rely on tools from discrete geometry, namely the {\em piercing number} \citep{Mat,AK}.

\medskip 
\noindent{\bf Recovering the Dictionary:} Next, we observe that there are a number of natural algorithms for recovering the dictionary once we know the clusters $\mathcal{C}_i$. We can think of a random sample from $\mathcal{C}_i$ as applying a filter to the samples we are given, and filtering out only those samples where $X_i \neq 0$. The claim is that this distribution will have a much larger variance along the direction $A_i$ than along other directions, and this allows us to recovery the dictionary either using a certain averaging algorithm, or by computing the largest singular vector of the samples in $\mathcal{C}_i$. In fact, this latter approach is similar to K-SVD \citep{AEB} and hence our analysis yields insights into why these heuristics work. 

\medskip 
\noindent{\bf Fast Convergence:} The above approach yields provable algorithms for dictionary learning whose running time and sample complexity depend polynomially on $1/\epsilon$. However once we have a suitably good approximation to the true dictionary, can we converge at a much faster rate? We analyze a simple alternating minimization algorithm {\sc Iterative Average} and we derive a formula for its updates where we can analyze it by thinking of it instead as a noisy version of the matrix power method (see Lemma~\ref{lemma:noisypower}). This analysis is inspired by recent work on analyzing alternating minimization for the matrix completion problem \citep{JNS,H}, and we obtain algorithms whose running time and sample complexity depends on $\log 1/\epsilon$. Hence we get algorithms that converge rapidly to the true dictionary while simultaneously being able to handle almost the same sparsity as in the sparse recovery problem where $A$ is known!

\medskip

\noindent {\bf NOTATION:}
Throughout this paper, we will use $Y^{(i)}$ to denote the $i^{th}$ sample and $X^{(i)}$ as the vector that generated it -- i.e. $Y^{(i)} = A X^{(i)}$. Let $\Omega^{(i)}$ denote the support of $X^{(i)}$. For a vector $X$ let $X_i$ be the $i^{th}$ coordinate.
For a matrix $A \in \R^{n\times m}$ (especially the dictionary matrix), we use $A_i$ to denote the $i$-th column (the $i$-th dictionary element). Also, for a set $S\subset \{1,2,...,m\}$, we use $A_S$ to denote the submatrix of $A$ with columns in $S$.
We will use $\|A\|_F$ to denote the Frobenius norm and $\|A\|$ to denote the spectral norm. Moreover we will use $\Gamma$ to denote the distribution on $k$-sparse vectors $X$ that is used to generate our samples, and $\Gamma_i$ will denote the restriction of this distribution to vectors $X$ where $X_i \neq 0$. When we are working with a graph $G$ we will use $\Gamma_G(u)$ to denote the set of neighbors of $u$ in $G$. 
Throughout the paper ``with high probability'' means the probability is at least $1-n^{-\Delta}$ for large enough $\Delta$.

\section{The Connection Graph}\label{sec:graph}

In this part we show how to test whether two samples share the same dictionary element, i.e., whether the supports $\Omega^{(i)}$ and $\Omega^{(j)}$ intersect. The idea is we can check the inner-product of $Y^{(i)}$ and $Y^{(j)}$, which can be decomposed into the sum of inner-products of dictionary elements $$\langle Y^{(i)}, Y^{(j)}\rangle = \sum_{p\in \Omega^{(i)},q\in\Omega^{(j)}} \langle A_p,A_q\rangle X^{(i)}_pX^{(j)}_q$$ If the supports are disjoint, then each of the terms above is small since $\langle A_p,A_q\rangle \leq \mu / \sqrt{n}$ by the incoherence assumption. 
To prove the sum is indeed small, we will appeal to the classic Hanson-Wright inequality:
\begin{theorem} [Hanson-Wright] \citep{HW}
Let $X$ be a vector of independent, sub-Gaussian random variables with mean zero and variance one. Let $M$ be a symmetric matrix. Then $$ Pr[|X^T M X - tr(M)| > t] \leq 2 exp\{-c \min(t^2/\|M\|_F^2, t/\|M\|_2)\}$$
\end{theorem}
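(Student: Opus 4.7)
The plan is to prove the Hanson--Wright inequality by the standard decoupling approach, following essentially Rudelson and Vershynin. The key idea is to split the quadratic form into diagonal and off-diagonal contributions and handle each with a different tool. Write
$$X^T M X - \operatorname{tr}(M) = \sum_{i} M_{ii}(X_i^2 - 1) + \sum_{i \neq j} M_{ij} X_i X_j,$$
and show each piece is concentrated on scale $t/2$; a union bound then gives the conclusion.

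For the diagonal part, because each $X_i$ is sub-Gaussian with variance $1$, the centered variable $X_i^2 - 1$ is sub-exponential with a universal Orlicz norm. Bernstein's inequality for sums of independent sub-exponential variables then yields
$$\Pr\!\left[\left|\textstyle\sum_i M_{ii}(X_i^2 - 1)\right| > t\right] \leq 2\exp\!\left\{-c\min\!\left(\frac{t^2}{\sum_i M_{ii}^2},\, \frac{t}{\max_i |M_{ii}|}\right)\right\},$$
and one uses $\sum_i M_{ii}^2 \leq \|M\|_F^2$ together with $\max_i |M_{ii}| \leq \|M\|_2$ to reach the desired form.

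For the off-diagonal sum $S = \sum_{i\ne j} M_{ij} X_i X_j$, the key step is decoupling: by a standard inequality (going back to de la Peña and Montgomery-Smith, or Bourgain), the moments of $S$ are controlled up to a universal constant by those of $T = \sum_{i,j} M'_{ij} X_i X'_j$, where $X'$ is an independent copy of $X$ and $M'$ is the off-diagonal part of $M$ (so $\|M'\|_F \leq \|M\|_F$ and $\|M'\|_2 \leq 2\|M\|_2$). Condition on $X'$: then $T = \sum_i X_i (M' X')_i$ is a sum of independent sub-Gaussians, so Hoeffding gives $\Pr[|T| > t \mid X'] \leq 2\exp(-ct^2/\|M' X'\|_2^2)$. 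It remains to show that $\|M' X'\|_2^2$ concentrates; this quadratic form has expectation $\|M'\|_F^2$ and one can either (i) iterate the Hanson--Wright argument for $(M')^T M'$ (whose spectral and Frobenius norms are still bounded by $\|M\|_2^2$ and $\|M\|_2 \|M\|_F$), or (ii) integrate over $X'$ directly using a Gaussian comparison, losing only constants. Either way, combining the conditional sub-Gaussian bound with the tail on $\|M'X'\|_2^2$ and optimizing the split between the two regimes of $t$ yields the claimed two-regime bound.

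The main obstacle is the off-diagonal piece, specifically the fact that conditioning on $X'$ leaves a random scale $\|M'X'\|_2$ whose own tail must be controlled with a matching $\min(t^2/\|M\|_F^2, t/\|M\|_2)$ behavior. The cleanest route around this is to work with moments rather than tails: bound $\E|S|^p$ directly via decoupling and Khintchine, obtain $\|S\|_p \lesssim \sqrt{p}\,\|M\|_F + p\,\|M\|_2$, and then convert this $\psi_1$-type growth to the stated tail by Markov's inequality in the exponential.
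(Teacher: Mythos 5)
The paper does not prove this statement; it is the classical Hanson--Wright inequality, cited from \cite{HW} and used as a black box. So there is no ``paper's own proof'' to compare against, and your task here is simply to supply one from scratch.

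Your outline is essentially the modern Rudelson--Vershynin proof, and it is correct in its main structure: split off the diagonal, handle it with Bernstein for sub-exponential variables (using $\sum_i M_{ii}^2 \le \|M\|_F^2$ and $\max_i |M_{ii}| \le \|M\|_2$ since $M_{ii} = e_i^T M e_i$), then decouple the off-diagonal part and control it conditionally on the independent copy. The one place where the sketch as written is shaky is option (i) for controlling the random scale $\|M'X'\|_2^2$: ``iterate Hanson--Wright for $(M')^T M'$'' is circular as stated, and the usual workaround (diagonalize the PSD matrix and apply Bernstein to $\sum_i \lambda_i(Z_i^2-1)$) does not go through for general sub-Gaussian $X'$, because $Z = U^T X'$ no longer has independent coordinates. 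Your final paragraph correctly identifies the honest escape: either the Gaussian comparison on the level of moment generating functions (replace $X$ and then $X'$ in turn by Gaussians, compute the MGF of $g^T M' g'$ by SVD), or the moment-growth route $\|S\|_p \lesssim \sqrt{p}\,\|M\|_F + p\,\|M\|_2$ followed by the standard conversion to a two-regime tail. With that clarification --- dropping option (i) or flagging it as circular --- the argument is a sound proof sketch of the cited result.
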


This will allow us to determine if $\Omega^{(1)}$ and $\Omega^{(2)}$ intersect but with false negatives:

\begin{lemma}\label{lemma:hw}
Suppose $k \mu < \frac{\sqrt{n}}{C' \log n}$ for large enough constant $C'$ (depending on $C$ in Definition~\ref{def:gamma}). Then if $\Omega^{(i)}$ and $\Omega^{(j)}$ are disjoint, with high probability $| \langle Y^{(i)}, Y^{(j)} \rangle | < 1/2$.
\end{lemma}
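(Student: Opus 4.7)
The plan is to recognize $\langle Y^{(i)},Y^{(j)}\rangle$ as a bilinear form in independent sub-Gaussian variables and apply Hanson--Wright. Condition on the supports $\Omega^{(i)}$ and $\Omega^{(j)}$, which by assumption are disjoint. Writing $u \in \R^{|\Omega^{(i)}|}$ for the nonzero entries of $X^{(i)}$ and $v \in \R^{|\Omega^{(j)}|}$ for those of $X^{(j)}$, and letting $M$ be the $|\Omega^{(i)}|\times|\Omega^{(j)}|$ matrix with $M_{pq}=\langle A_p,A_q\rangle$, we have $\langle Y^{(i)},Y^{(j)}\rangle = u^{T}Mv$. By Definition~\ref{def:gamma}, conditioned on the supports, the entries of $u$ and $v$ are independent, mean-zero, and bounded in $[-C,-1]\cup[1,C]$, hence sub-Gaussian with an $O(C)$ parameter; since $\Omega^{(i)}\cap\Omega^{(j)}=\emptyset$, the concatenation $z = [u;v]$ is a vector of independent sub-Gaussian coordinates.

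Second, bound the two norms of $M$ using incoherence. Every entry satisfies $|M_{pq}|\le \mu/\sqrt{n}$ because $p\ne q$, so $\|M\|_F^{2}\le |\Omega^{(i)}|\cdot|\Omega^{(j)}|\cdot\mu^{2}/n\le k^{2}\mu^{2}/n$, and in particular $\|M\|_{2}\le\|M\|_F\le k\mu/\sqrt{n}$. Symmetrize by forming $\widetilde M=\tfrac12\begin{pmatrix}0 & M\\ M^{T} & 0\end{pmatrix}$, so that $u^{T}Mv=z^{T}\widetilde M z$, with $\operatorname{tr}(\widetilde M)=0$, $\|\widetilde M\|_F = \|M\|_F/\sqrt{2}$, and $\|\widetilde M\|_{2}=\|M\|_{2}/2$.

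Third, apply Hanson--Wright (absorbing the $O(C)$ sub-Gaussian constant into the universal $c$) with $t=1/2$. We need
\[
\min\!\left(\frac{t^{2}}{\|\widetilde M\|_F^{2}},\ \frac{t}{\|\widetilde M\|_{2}}\right)\ \ge\ \Omega\!\left(\min\!\left(\frac{n}{k^{2}\mu^{2}},\ \frac{\sqrt{n}}{k\mu}\right)\right)\ \gtrsim\ \log n,
\]
which under $k\mu<\sqrt{n}/(C'\log n)$ for $C'$ large enough is dominated by the linear term $\sqrt{n}/(k\mu)\ge C'\log n$; the quadratic term is then even larger. Hanson--Wright therefore gives $\Pr[|u^{T}Mv|>1/2]\le 2\exp(-\Omega(\log n)) = n^{-\Omega(1)}$, with the constant in the exponent tunable via $C'$ to get ``with high probability'' in the sense of the paper. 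Since the bound holds for every disjoint pair of supports, it holds unconditionally on the event that $\Omega^{(i)}\cap\Omega^{(j)}=\emptyset$.

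The only real subtlety is that Hanson--Wright is usually stated for i.i.d.\ unit-variance sub-Gaussians whereas here the $X_p^{(i)}$ are only conditionally independent and bounded in $[-C,-1]\cup[1,C]$; this is handled by conditioning on the supports (so the ``independent'' hypothesis applies) and rescaling by $C$, which merely shifts the universal constant. Everything else is bookkeeping: the incoherence bound controls $\|M\|_F$ and $\|M\|_{2}$, and the hypothesis on $k\mu$ is calibrated precisely so that the Hanson--Wright exponent exceeds $\log n$.
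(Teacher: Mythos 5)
Your proof is correct and is essentially the same as the paper's: you symmetrize the bilinear form into a block anti-diagonal matrix (the paper's $M$ is exactly your $\widetilde{M}$), bound its Frobenius and spectral norms via incoherence, and apply Hanson--Wright with $t=1/2$ after conditioning on the disjoint supports. The only difference is stylistic bookkeeping; your explicit treatment of the sub-Gaussian parameter and the separate Frobenius/operator bounds is, if anything, a bit more careful than the paper's sketch.
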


\begin{proof}
Let $N$ be the $k \times k$ submatrix resulting from restricting $A^T A$ to the locations where $X^{(i)}$ and $X^{(j)}$ are non-zero. Set $M$ to be a $2k \times 2k$ matrix where the $k \times k$ submatrices in the top-left and bottom-right are zero, and the $k \times k$ submatrices in the bottom-left and top-right are $(1/2) N$ and $(1/2) N^T$ respectively. Here we think of the vector $X$ as being a length $2k$ vector whose first $k$ entries are the non-zero entries in $X^{(i)}$ and whose last $k$ entries are the non-zero entries in $X^{(j)}$. And by construction, we have that $$\langle Y^{(i)}, Y^{(j)} \rangle = X^T M X$$

We can now appeal to the Hanson-Wright inequality (above). Note that since $\Omega^{(i)}$ and $\Omega^{(j)}$ do not intersect, the entries in $M$ are each at most $\mu/\sqrt{n}$ and so the Frobenius norm of $M$ is at most $\frac{\mu k}{ \sqrt{2n}}$. This is also an upper-bound on the spectral norm of $M$. We can set $t = 1/2$, and for $k \mu < \sqrt{n}/C' \log n$ both terms in the minimum are $\Omega(\log n)$ and this implies the lemma. 
\end{proof}

We will also make use of a weaker bound (but whose conditions allow us to make fewer distributional assumptions):  

\begin{lemma}\label{lemma:graph}
If $k^2 \mu < \sqrt{n}/2$ then $| \langle Y^{(i)}, Y^{(j)} \rangle | > 1/2$ implies that $\Omega^{(i)}$ and $\Omega^{(j)}$ intersect
\end{lemma}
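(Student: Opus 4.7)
The plan is to prove the contrapositive: if $\Omega^{(i)}$ and $\Omega^{(j)}$ are disjoint, then $|\langle Y^{(i)}, Y^{(j)}\rangle| \le 1/2$. Unlike Lemma~\ref{lemma:hw}, which used Hanson-Wright to exploit cancellations in the random signs of the $X$'s, here we make no distributional assumption and instead use a crude deterministic triangle-inequality bound; this is why the hypothesis $k^2\mu < \sqrt{n}/2$ is quadratically stronger than the $k\mu < \sqrt{n}/(C'\log n)$ hypothesis of Lemma~\ref{lemma:hw}.

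Concretely, I would expand the inner product using $Y^{(i)} = AX^{(i)}$ and $Y^{(j)} = AX^{(j)}$ to write
\[
\langle Y^{(i)}, Y^{(j)}\rangle \;=\; \sum_{p \in \Omega^{(i)}} \sum_{q \in \Omega^{(j)}} \langle A_p, A_q\rangle\, X^{(i)}_p\, X^{(j)}_q.
\]
Because $\Omega^{(i)} \cap \Omega^{(j)} = \emptyset$, every pair $(p,q)$ in the double sum satisfies $p \ne q$, so by the $\mu$-incoherence assumption $|\langle A_p, A_q\rangle| \le \mu/\sqrt{n}$. By the distributional assumption in Definition~\ref{def:gamma} (with $C=1$ as noted in the paper), each nonzero coordinate of $X$ lies in $[-1,1]$, so $|X^{(i)}_p|, |X^{(j)}_q| \le 1$. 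There are at most $k$ indices in each of $\Omega^{(i)}$ and $\Omega^{(j)}$, so the double sum has at most $k^2$ terms.

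Applying the triangle inequality and these bounds gives
\[
|\langle Y^{(i)}, Y^{(j)}\rangle| \;\le\; k^2 \cdot \frac{\mu}{\sqrt{n}} \cdot 1 \;<\; \frac{1}{2},
\]
where the last inequality uses the assumption $k^2 \mu < \sqrt{n}/2$. Taking the contrapositive yields the lemma. There is no real obstacle here: the argument is purely a worst-case bound, and the only subtlety worth flagging is that we are using the $C=1$ normalization convention mentioned in the Remark following Definition~\ref{def:gamma} (the general case $C = O(1)$ just rescales the threshold by $C^2$ and changes no constants of importance).
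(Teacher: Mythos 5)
Your proof is correct and is essentially identical to the paper's: both expand the inner product over the disjoint supports, apply the triangle inequality together with the incoherence bound $|\langle A_p, A_q\rangle|\le \mu/\sqrt{n}$ and $|X_i|\le 1$, and conclude $|\langle Y^{(i)},Y^{(j)}\rangle|\le k^2\mu/\sqrt{n}<1/2$. The only cosmetic difference is that you spell out the $C=1$ normalization explicitly, which the paper leaves implicit.
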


\begin{proof}
Suppose $\Omega^{(i)}$ and $\Omega^{(j)}$ are disjoint. Then the following upper bound holds: $$ | \langle Y^{(i)}, Y^{(j)} \rangle | \leq \sum_{p \neq q} | \langle A_p, A_q \rangle  X^{(i)}_p X^{(j)}_q | \leq k^2 \mu / \sqrt{n} < 1/2$$ and this implies the lemma.
\end{proof}

\noindent This only works up to $k = O(n^{1/4} / \sqrt{\mu})$. In comparison, the stronger bound of Lemma~\ref{lemma:hw} makes use of the randomness of the signs of $X$ and works up to $k = O(\sqrt{n}/\mu \log n)$.

In our algorithm, we build the following graph:

\begin{definition}
Given $p$ samples $Y^{(1)}, Y^{(2)}, ..., Y^{(p)}$, build a {\em connection graph} on $p$ nodes where $i$ and $j$ are connected by an edge if and only if $| \langle Y^{(i)}, Y^{(j)} \rangle | > 1/2$.
\end{definition}

\noindent This graph will ``miss" some edges, since if a pair $X^{(i)}$ and $X^{(j)}$ have intersecting support we do not necessarily meet the above condition. But by Lemma~\ref{lemma:hw} (with high probability) this graph will not have any false positives:

\begin{corollary}
With high probability, each edge $(i, j)$ present in the connection graph corresponds to a pair where $\Omega^{(i)}$ and $\Omega^{(j)}$ have non-empty intersection.
\end{corollary}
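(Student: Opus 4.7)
The plan is to prove the corollary by taking the contrapositive of Lemma~\ref{lemma:hw} and applying a union bound over all pairs of samples. Specifically, Lemma~\ref{lemma:hw} says that if $\Omega^{(i)}$ and $\Omega^{(j)}$ are disjoint, then with high probability (i.e.\ probability at least $1 - n^{-\Delta}$ for large $\Delta$, by the convention stated at the end of Section~1) we have $|\langle Y^{(i)}, Y^{(j)}\rangle| < 1/2$, so no edge $(i,j)$ is placed in the connection graph. Thus, contrapositively, for any fixed pair $(i,j)$, the event that an edge is present but $\Omega^{(i)} \cap \Omega^{(j)} = \emptyset$ occurs with probability at most $n^{-\Delta}$.

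Next I would apply a union bound over all $\binom{p}{2} \le p^2$ pairs of samples. Since $p$ is polynomial in $n$ (by the sample complexity bounds stated in Theorem~\ref{thm:mainfront}), choosing $\Delta$ sufficiently large relative to the exponent in $p = \poly(n)$ absorbs the $p^2$ factor, and the total failure probability is still at most $n^{-\Delta'}$ for some large $\Delta'$. Therefore, simultaneously for every edge $(i,j)$ in the connection graph, the supports $\Omega^{(i)}$ and $\Omega^{(j)}$ must intersect, which is the claim.

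The only thing to check is that the hypotheses of Lemma~\ref{lemma:hw} are in force — namely, $k\mu < \sqrt{n}/(C'\log n)$, which is exactly the sparsity regime we are working in per the theorem statements. No other obstacle arises: the sub-Gaussianity and mean-zero conditions needed to invoke Hanson--Wright have already been absorbed into Lemma~\ref{lemma:hw} itself (they follow from the boundedness and zero-mean assumptions of Definition~\ref{def:gamma}), so we simply inherit that lemma and union-bound. The main (and only) subtlety is making sure that ``with high probability'' in Lemma~\ref{lemma:hw} really is strong enough to survive the union bound over all $\poly(n)$ pairs, which is why the lemma is stated with a parameter $\Delta$ that can be taken as large as desired by adjusting the constant $C'$.
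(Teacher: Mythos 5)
Your proposal is correct and matches the paper's (implicit) reasoning: the corollary is stated without a written proof because it follows directly from Lemma~\ref{lemma:hw} by contrapositive plus a union bound over the $O(p^2)$ pairs, exactly as you spell out.
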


Consider a sample $Y^{(1)}$ for which there is an edge to both $Y^{(2)}$ and $Y^{(3)}$. This means that there is some coordinate $i$ in both $\Omega^{(1)}$ and $\Omega^{(2)}$ and some coordinate $i'$ in both $\Omega^{(1)}$ and $\Omega^{(3)}$. However the challenge is that we do not immediately know if  $\Omega^{(1)}, \Omega^{(2)}$ and $\Omega^{(3)}$ have a common intersection or not. 


\section{Overlapping Clustering}\label{sec:overlap}

Our goal in this section is to determine which samples $Y$ have $X_i \neq 0$ just from the connection graph. To do this, we will identify a combinatorial condition that allows us to decide whether or not a set of three samples $Y^{(1)}, Y^{(2)}$ and $Y^{(3)}$ that have supports $\Omega^{(1)}, \Omega^{(2)}$ and $\Omega^{(3)}$ respectively -- have a common intersection or not. From this condition, it is straightforward to give an algorithm that correctly groups together all of the samples $Y$ that have $X_i \neq 0$. In order to reduce the number of letters used we will focus on the first three samples $Y^{(1)}, Y^{(2)}$ and $Y^{(3)}$ although all the claims and lemmas hold for all triples.

Suppose we are given two samples $Y^{(1)}$ and $Y^{(2)}$ with supports $\Omega^{(1)}$ and $\Omega^{(2)}$ where $\Omega^{(1)} \cap \Omega^{(2)} = \{i\}$. We will prove that this pair can be used to recover all the samples $Y$ for which $X_i \neq 0$. This will follow because we will show that  the expected number of common neighbors between $Y^{(1)}, Y^{(2)}$ and $Y$ will be large if $X_i \neq 0$ and otherwise will be small. So throughout this subsection let us consider a sample $Y = AX$ and let $\Omega$ be its support. We will need the following elementary claim. 

\begin{claim}\label{claim:upper}
Suppose $\Omega^{(1)} \cap \Omega^{(2)} \cap \Omega^{(3)} \neq \emptyset$, then $Pr_Y[ \mbox{ for all } j = 1, 2, 3, |\langle Y, Y^{(j)} \rangle | > 1/2] \geq \frac{k}{2m}$
\end{claim}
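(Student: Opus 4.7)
The plan is to pick any index $i \in \Omega^{(1)}\cap\Omega^{(2)}\cap\Omega^{(3)}$ (which exists by hypothesis) and lower-bound the probability $\Pr_Y$ by conditioning on the event $E_i = \{X_i \neq 0\}$. By Definition~\ref{def:gamma}(i) we have $\Pr[E_i] = \Theta(k/m) \geq k/m$ (absorbing constants). So it suffices to prove that, conditionally on $E_i$, the three events $\{|\langle Y, Y^{(j)}\rangle| > 1/2\}$ hold simultaneously with probability at least $1/2$; multiplying yields the stated bound $k/(2m)$.

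\paragraph{Decomposition.} For each $j \in \{1,2,3\}$, the key algebraic identity is
$$\langle Y, Y^{(j)}\rangle \;=\; X_i X^{(j)}_i \;+\; R_j,$$
where $R_j$ collects all cross-terms $\langle A_p, A_q\rangle X_p X^{(j)}_q$ with $(p,q) \neq (i,i)$. Since Definition~\ref{def:gamma}(i) forces $|X_i|, |X^{(j)}_i| \geq 1$, the leading term satisfies $|X_i X^{(j)}_i| \geq 1$. By the triangle inequality, the task reduces to showing $|R_j| < 1/2$ with conditional probability at least $5/6$ for each $j$, which combines via union bound over $j=1,2,3$.

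\paragraph{Bounding $R_j$.} I would split $R_j$ into three pieces: (a) the deterministic piece $X_i \sum_{q \in \Omega^{(j)} \setminus \{i\}} \langle A_i, A_q\rangle X^{(j)}_q$, which is bounded by $O(k\mu/\sqrt{n}) = o(1)$ by $\mu$-incoherence and $k$-sparsity of $X^{(j)}$; (b) the linear-in-$X$ piece $X^{(j)}_i \sum_{p \in \Omega \setminus \{i\}} \langle A_p, A_i\rangle X_p$, which (conditioning on the support $\Omega$ of $X$, so that the $X_p$ are independent bounded mean-zero by Definition~\ref{def:gamma}(ii)) is $o(1)$ with high probability by Bernstein's inequality; and (c) the bilinear residual $\langle Y - A_i X_i,\, Y^{(j)} - A_i X^{(j)}_i\rangle$, whose two factors are samples on supports $\Omega \setminus \{i\}$ and $\Omega^{(j)} \setminus \{i\}$. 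I would further condition on the event that these supports are disjoint for all three $j$: by the bounded second-moment condition, the probability of a coincidence outside $\{i\}$ is at most $O(k^2/m)$, which is $o(1)$ in the sparsity regime $k \leq m^{1/2-\eta}$ of Theorem~\ref{thm:mainfront}. Under disjointness, piece (c) is exactly the object controlled by Lemma~\ref{lemma:hw}.

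\paragraph{Main obstacle.} The delicate step is invoking Lemma~\ref{lemma:hw} on (c): that lemma applies Hanson--Wright to the joint randomness of both samples, while in our setting $Y^{(j)}$ is fixed. The cleanest workaround is to integrate over the original draw of $Y^{(j)}$: since Lemma~\ref{lemma:hw} already holds with probability at least $1 - n^{-\Delta}$ over the joint distribution of the two samples, Fubini implies that for all but an $n^{-\Delta/2}$-fraction of realizations of $X^{(j)}$, the conditional-on-$X^{(j)}$ probability over the fresh $X$ that $|R_j| > 1/2$ is at most $n^{-\Delta/2}$. The atypical $X^{(j)}$'s are absorbed into the $o(1)$ slack already collected from the support-disjointness step. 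Combining (a), (b), and (c) and union-bounding across $j = 1,2,3$ then drives $\Pr[\,|R_j| \geq 1/2 \text{ for some } j \mid E_i\,]$ below $1/2$, which completes the argument.
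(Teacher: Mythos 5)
Your proof is correct and follows essentially the same route as the paper's: both condition on the event that the fresh sample's support hits $\Omega^{(1)}\cap\Omega^{(2)}\cap\Omega^{(3)}$ at some $i$ and avoids the rest of $\Omega^{(1)}\cup\Omega^{(2)}\cup\Omega^{(3)}$ (an event of probability $\Omega(k/m)$ by the bounded second moment condition), and then argue that under this event each $\langle Y, Y^{(j)}\rangle$ equals $X_i X^{(j)}_i$ plus a residual that is small with high probability. The only real difference is that you explicitly unpack the paper's terse \textquotedblleft using ideas similar to Lemma~\ref{lemma:hw}\textquotedblright{} into the decomposition (a)--(c), and you correctly flag and repair the conditioning subtlety --- Lemma~\ref{lemma:hw} is proved over the joint randomness of both samples while here $Y^{(j)}$ is held fixed --- which the paper silently elides but your Fubini argument handles.
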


\begin{proof}
Using ideas similar to Lemma~\ref{lemma:hw}, we can show if $|\Omega\cap \Omega^{(1)}| = 1$ (that is, the new sample has a {\em unique} intersection with $\Omega^{(1)}$), then $|\langle Y, Y^{(1)} \rangle| > 1/2$.

Now let $i \in \Omega^{(1)} \cap \Omega^{(2)} \cap \Omega^{(3)}$, let $\mathcal{E}$ be the event that $\Omega\cap \Omega^{(1)} = \Omega \cap \Omega^{(2)}=\Omega \cap \Omega^{(3)} = \{i\}$. Clearly, when event $\mathcal{E}$ happens, for all $j = 1, 2, 3, |\langle Y, Y^{(j)} \rangle | > 1/2$. The probability of $\mathcal{E}$ is at least $$\Pr[i\in \Omega] \Pr[(\Omega^{(1)} \cup \Omega^{(2)} \cup \Omega^{(3)}\backslash\{i\})\cap \Omega = \emptyset | i\in \Omega] = k/m \cdot (1-O(k/m)\cdot 3k) \ge k/2m.$$

Here we used bounded second moment property for the conditional probability and union bound.
\end{proof}

This claim establishes a lower bound on the expected number of common neighbors of a triple, if they have a common intersection. Next we establish an upper bound, if they don't have a common intersection. Suppose $\Omega^{(1)} \cap \Omega^{(2)} \cap \Omega^{(3)} = \emptyset$. In principle we should be concerned that $\Omega$ could still intersect each of $\Omega^{(1)}$, $\Omega^{(2)}$ and $ \Omega^{(3)}$ in different locations.
Let $a = |\Omega^{(1)} \cap \Omega^{(2)}|$, $b = |\Omega^{(1)} \cap \Omega^{(3)} |$ and $c = |\Omega^{(2)} \cap \Omega^{(3)} |$.

\begin{lemma}\label{lemma:lower}
Suppose that $\Omega^{(1)} \cap \Omega^{(2)} \cap \Omega^{(3)} = \emptyset$. Then the probability that $\Omega$ intersects each of $\Omega^{(1)}$, $\Omega^{(2)}$ and $ \Omega^{(3)}$ is at most $$ \frac{k^6}{m^3} + \frac{3k^3(a+b+c)}{m^2}$$
\end{lemma}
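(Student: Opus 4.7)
The plan is a union bound over witnesses to the three intersections, organized by a case analysis on how many of the witnesses coincide. If $\Omega$ intersects each of $\Omega^{(1)},\Omega^{(2)},\Omega^{(3)}$, then by definition there exist indices $p\in\Omega\cap\Omega^{(1)}$, $q\in\Omega\cap\Omega^{(2)}$, $r\in\Omega\cap\Omega^{(3)}$. The hypothesis $\Omega^{(1)}\cap\Omega^{(2)}\cap\Omega^{(3)}=\emptyset$ forbids $p=q=r$, so either $p,q,r$ are pairwise distinct, or exactly two of them coincide. I would bound each case separately.

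\textbf{All three distinct.} Union-bound over ordered triples $(p,q,r)\in\Omega^{(1)}\times\Omega^{(2)}\times\Omega^{(3)}$ with distinct coordinates. For each such triple, the bounded $3$-wise moment property of Definition~\ref{def:gamma} gives $\Pr[p,q,r\in\Omega]=O((k/m)^{3})$, and there are at most $|\Omega^{(1)}|\cdot|\Omega^{(2)}|\cdot|\Omega^{(3)}|=k^{3}$ triples, producing a contribution of $O(k^{6}/m^{3})$.

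\textbf{Exactly two coincide.} Three symmetric subcases. If $p=q$ then the shared index lies in $\Omega^{(1)}\cap\Omega^{(2)}$, so there are at most $a$ choices for it; the index $r$ then ranges in $\Omega^{(3)}$, giving at most $k$ choices. The bounded $2$-wise moment property gives $\Pr[\{p,r\}\subseteq\Omega]=O((k/m)^{2})$, for a total contribution of $O(ak^{3}/m^{2})$. The $p=r$ and $q=r$ subcases are analogous and contribute $O(bk^{3}/m^{2})$ and $O(ck^{3}/m^{2})$ respectively. Adding the four contributions and absorbing the implicit moment constants yields the advertised $\frac{k^{6}}{m^{3}}+\frac{3k^{3}(a+b+c)}{m^{2}}$.

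The proof is essentially bookkeeping --- no concentration inequality is needed, only the moment assumption on the support distribution. The only real content is the observation that the empty triple-intersection forces the three witnesses to be ``spread out,'' which is what rules out the would-be dominant term $k^{3}/m$ (the probability that $\Omega$ contains a single element common to all three supports). The main thing to be careful about is the case split: I need to make sure I have not double-counted the configurations where the witness indices can be chosen in multiple ways. Since I am union-bounding over \emph{some} choice of witnesses $(p,q,r)$ per outcome, overcounting only worsens the upper bound by constants, which are absorbed. The $a+b+c$ structure of the second term is a clean record of the three ``degenerate'' geometries where two of the witnesses collide in a pairwise intersection.
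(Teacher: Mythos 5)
Your proof is correct and follows essentially the same two-case decomposition as the paper's: distinct witnesses contribute $k^6/m^3$, and witnesses colliding in a pairwise intersection contribute $(a+b+c)k^3/m^2$. Your direct union bound over ordered witness triples is, if anything, a cleaner phrasing than the paper's appeal to ``non-positive correlation,'' since the bounded-moment assumption in Definition~\ref{def:gamma} is literally an upper bound of product form rather than a genuine negative-association statement.
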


\begin{proof}
We can break up the event whose probability we would like to bound into two (not necessarily disjoint) events: (1) the probability that $\Omega$ intersects each of $\Omega^{(1)}$, $\Omega^{(2)}$ and $ \Omega^{(3)}$ disjointly (i.e. it contains a point $i \in \Omega^{(1)}$ but $i \notin \Omega^{(2)}, \Omega^{(3)}$, and similarly for the other sets ). (2) the probability that $\Omega$ contains a point in the common intersection of two of the sets, and one point from the remaining set. Clearly if $\Omega$ intersects the each of $\Omega^{(1)}$, $\Omega^{(2)}$ and $ \Omega^{(3)}$ then at least one of these two events must occur.

The probability of the first event is at most the probability that $\Omega$ contains at least one element from each of three disjoint sets of size at most $k$. The probability that $\Omega$ contains an element of just one such set is at most the expected intersection which is $\frac{k^2}{m}$, and since the expected intersection of $\Omega$ with each of these sets are non-positively correlated (because they are disjoint) we have that the probability of the first event can be bounded by $\frac{k^6}{m^3}$. 

Similarly, for the second event: consider the probability that $\Omega$ contains an element in $\Omega^{(1)} \cap \Omega^{(2)}$. Since $\Omega^{(1)} \cap \Omega^{(2)} \cap \Omega^{(3)} = \emptyset$, then $\Omega$ must also contain an element in $\Omega^{(3)}$ too. The expected intersection of $\Omega$ and $\Omega^{(1)} \cap \Omega^{(2)}$ is $\frac{ka}{m}$ and the expected intersection of $\Omega$ and $\Omega^{(3)}$ is $\frac{k^2}{m}$, and again the expectations are non-positively correlated since the two sets $\Omega^{(1)} \cap \Omega^{(2)}$ and $\Omega^{(3)}$ are disjoint by assumption. Repeating this argument for the other pairs completes the proof of the lemma. 
\end{proof}

\noindent Note that if $\Gamma$ has bounded higher order moment, the probability that two sets of size $k$ intersect in at least $Q$ elements is at most $(\frac{k^2}{m})^Q$. Hence we can assume that with high probability there is {\em no} pair of samples whose supports intersect in more than a constant number of locations. When $\Gamma$ only has bounded 3-wise moment see Appendix~\ref{asec:boundedmoment}.


Let us quantitatively compare our lower and upper bound: If $k \leq c m^{2/5}$ then the expected number of common neighbors for a triple with $\Omega^{(1)} \cap \Omega^{(2)} \cap \Omega^{(3)} \neq \emptyset$ is much larger than the expected number of common neighbors of a triple whose common intersection is empty. Under this condition, if we take $p = O(m^2/k^2 \log n)$ samples each triple with a common intersection will have at least $T$ common neighbors, and each triple whose common intersection is empty will have less than $T/2$ common neighbors. 

Hence we can search for a triple with a common intersection as follows: We can find a pair of samples $Y^{(1)}$ and $Y^{(2)}$ whose supports intersect. We can take a neighbor $Y^{(3)}$ of $Y^{(1)}$ in the connection graph (at random), and by counting the number of common neighbors of $Y^{(1)}$, $Y^{(2)}$ and $Y^{(3)}$ we can decide whether or not their supports have a common intersection. 

\begin{fragment*}[t]
\caption{
\label{alg:ocluster}{\sc OverlappingCluster}, \textbf{Input: } $p$ samples $Y^{(1)}, Y^{(2)}, ..., Y^{(p)}$ 
}
\begin{enumerate} \itemsep 0pt
\small 
\item Compute a graph $G$ on $p$ nodes where there is an edge between $i$ and $j$ iff $|\langle Y^{(i)}, Y^{(j)} \rangle | > 1/2$
\item Set $T = \frac{pk}{10m}$
\item Repeat $\Omega(m \log^2 m)$ times:
\item $\qquad$ Choose a random edge $(u, v)$ in $G$
\item $\qquad$ Set $S_{u,v} = \{w : |\Gamma_G(u) \cap \Gamma_G(v) \cap \Gamma_G(w)| \geq T \} \cup \{u, v\}$
\item Delete any set $S_{u,v}$ where $u, v$ are contained in a strictly smaller set $S_{a,b}$ (also delete any duplicates)
\item Output the remaining sets $S_{u,v}$
\end{enumerate} 
\end{fragment*}

\begin{definition}
\label{def:identifyingpair}
We will call a pair of samples $Y^{(1)}$ and $Y^{(2)}$ an {\em identifying pair} for coordinate $i$ if the intersection of $\Omega^{(1)}$ and $\Omega^{(2)}$  is exactly $\{i\}$. 
\end{definition}

\begin{theorem}\label{thm:ocluster}
The output of {\sc OverlappingCluster} is an overlapping clustering where each set corresponds to some $i$ and contains all $Y^{(j)}$ for which $i \in \Omega^{(j)}$. The algorithm runs in time $\widetilde{O}(p^2 n)$ and succeeds with high probability if $k \leq c \min(m^{2/5}, \frac{\sqrt{n}}{\mu \log n})$ and if $p =  \Omega(\frac{m^2 \log m}{k^2})$ 
\end{theorem}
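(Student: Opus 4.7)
My plan is to show (a) that whenever the random edge $(u,v)$ is an identifying pair for some coordinate $i$, the set $S_{u,v}$ produced is \emph{exactly} $\{j : i \in \Omega^{(j)}\}$, (b) that whenever $(u,v)$ is not an identifying pair, $S_{u,v}$ is a strict superset of some true cluster and is removed by the subsumption step, and (c) that in $\Omega(m \log^2 m)$ tries we hit an identifying pair for every $i$. First, I will condition on the high-probability event from the remark after Lemma~\ref{lemma:lower}, namely that no pair of samples has $|\Omega^{(u)} \cap \Omega^{(v)}|$ exceeding some fixed constant $\ell_0$, and on the event (from Lemma~\ref{lemma:hw}) that $G$ contains no false-positive edges.

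For the core counting step, fix $(u,v,w)$ and let $N_{u,v,w}$ be the number of $j$ with $(u,j),(v,j),(w,j)\in E(G)$. If $\Omega^{(u)}\cap\Omega^{(v)}\cap\Omega^{(w)}\ni i$, Claim~\ref{claim:upper} gives $\E[N_{u,v,w}]\ge pk/(2m)$; if the intersection is empty, Lemma~\ref{lemma:lower} combined with the constant bound on pairwise overlaps gives $\E[N_{u,v,w}] \le p(k^6/m^3 + O(k^3/m^2)) = o(pk/m)$ when $k \le c m^{2/5}$. Because each $(u,j),(v,j),(w,j)$ event depends only on sample $j$, the indicator variables across $j$ are independent, and a standard Chernoff bound shows that with our choice $p=\Omega(m^2\log m/k^2)$ we have $N_{u,v,w}\ge T=pk/(10m)$ in the first case and $N_{u,v,w}<T/2$ in the second, simultaneously for all triples with high probability by a union bound over $p^3\le\mathrm{poly}(m)$ triples.

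With this two-sided bound in hand, if $(u,v)$ is an identifying pair for $i$, then $w\in S_{u,v}$ iff $\Omega^{(u)}\cap\Omega^{(v)}\cap\Omega^{(w)}=\{i\}\cap\Omega^{(w)}\neq\emptyset$, i.e., iff $i\in\Omega^{(w)}$; so $S_{u,v}$ is exactly the $i$-cluster $\mathcal C_i$. If $(u,v)$ is not identifying, say $\Omega^{(u)}\cap\Omega^{(v)}=\{i_1,\dots,i_\ell\}$ with $\ell\ge 2$, the same criterion yields $S_{u,v}=\mathcal C_{i_1}\cup\dots\cup\mathcal C_{i_\ell}$, a strict superset of each $\mathcal C_{i_r}$. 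Now I need to show that for each of these $i_r$ some identifying pair is found: once this is done, $S_{u,v}$ is deleted since $u,v\in\mathcal C_{i_1}\cap\mathcal C_{i_2}\subset\mathcal C_{i_1}=S_{a,b}$ (strictly smaller) for the identifying pair $(a,b)$ for $i_1$.

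For the covering step (c), I will lower bound the probability that a uniformly random edge of $G$ is an identifying pair for a fixed $i$. Each coordinate $i$ appears in $\Theta(pk/m)$ samples, and for two such samples $u,v$ the chance they share no other coordinate is $1-O(k^2/m)=\Omega(1)$ by the bounded-moment assumption; hence the number of identifying pairs for $i$ is $\Omega((pk/m)^2)$. The total number of edges of $G$ is at most $\sum_i \binom{|\mathcal C_i|}{2}\cdot\ell_0 = O(m(pk/m)^2)$, so a random edge is identifying for $i$ with probability $\Omega(1/m)$, and in $\Omega(m\log^2 m)$ rounds we hit every $i$ with probability $1-m^{-\omega(1)}$ by Chernoff and a union bound over $i$. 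The hardest step conceptually is handling the non-identifying pairs cleanly; the key observation that makes it work is that the triple-common-neighbor count is monotone in the pairwise intersection structure, so multi-intersection pairs yield unions (not arbitrary sets) and are strictly dominated by each component identifying cluster. Finally, the runtime bound follows since building $G$ costs $O(p^2 n)$ inner products and each of the $\widetilde O(m)$ rounds computes a triple intersection in $\widetilde O(p^2)$ using bitmap representations of neighborhoods, giving $\widetilde O(p^2 n)$ total when absorbed into the bitmap word size.
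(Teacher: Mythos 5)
Your proposal is correct and follows essentially the same route as the paper's proof: no false-positive edges from Lemma~\ref{lemma:hw}, a two-sided concentration bound on triple common-neighbor counts from Claim~\ref{claim:upper} and Lemma~\ref{lemma:lower} to decide common intersection, identifying pairs yielding exact clusters and non-identifying pairs yielding supersets removed by the subsumption step, and a coupon-collector argument over $\Omega(m\log^2 m)$ rounds. You are slightly more explicit than the paper in a few places (deriving the $\Omega(1/m)$ identifying-pair probability via an edge count, and noting the conditional independence across $j$ that makes Chernoff applicable), but the argument is the same; the only loose spot is the parenthetical appeal to ``bitmap word size'' for the loop cost, which is unnecessary since the loop is $\widetilde O(mp^2) \le \widetilde O(p^2 n)$ anyway given $p = \Omega(m^2\log m/k^2) \ge m$ and $k^2 \le n$.
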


\begin{proof}
We can use Lemma~\ref{lemma:hw} to conclude that each edge in $G$ corresponds to a pair whose support intersects. We can appeal to Lemma~\ref{lemma:lower} and Claim~\ref{claim:upper} to conclude that for $p =  \Omega(m^2/k^2 \log m)$, with high probability each triple with a common intersection has at least $T$ common neighbors, and each triple without a common intersection has at most $T/2$ common neighbors. 

In fact, for a random edge $(Y^{(1)}, Y^{(2)})$, the probability that the common intersection of $\Omega^{(1)}$ and $\Omega^{(2)}$ is exactly $\{i\}$ is $\Omega(1/m)$ because we know that they do intersect, and that intersection has a constant probability of being size one and it is uniformly distributed over $m$ possible locations. 
Appealing to a coupon  collector argument we conclude that if the inner loop is run at least $\Omega(m \log^2 m)$ times then the algorithm finds an identifying pair $(u, v)$ for each column $A_i$ with high probability. 

Note that we may have pairs that are not an identifying triple for some coordinate $i$. However, any other pair $(u, v)$ found by the algorithm must have a common intersection. Consider for example a pair $(u, v)$ where $u$ and $v$ have a common intersection $\{i, j\}$. Then we know that there is some other pair $(a, b)$ which is an identifying pair for $i$ and hence $S_{a, b} \subset S_{u, v}$. (In fact this containment is strict, since $S_{u, v}$ will also contain a set corresponding to an identifying pair for $j$ too). Hence the second-to-last step in the algorithm will necessarily delete all such non-identifying pairs $S_{u, v}$. 

What is the running time of this algorithm? We need $O(p^2 n)$ time to build the connection graph, and the loop takes $\widetilde{O}(pmn)$ time. Finally, the deletion step requires time $\widetilde{O}(m^2)$ since there will be $\widetilde{O}(m)$ pairs found in the previous step and for each pair of pairs, we can delete $S_{u, v}$ if and only if there is a strictly smaller $S_{a, b}$ that contains $u$ and $v$. This concludes the proof of correctness of the algorithm, and its running time analysis. 
\end{proof}

\section{Recovering the Dictionary}
\label{sec:recovery}
\subsection{Finding the Relative Signs}\label{sec:recovery1}


Here we show how to recover the column $A_i$ once we have learned which samples $Y$ have $X_i \neq 0$. We will refer to this set of samples as the ``cluster" $\mathcal{C}_i$. The key observation is that if $\Omega^{(1)}$ and $\Omega^{(2)}$ uniquely intersect in index $i$ then the sign of $\langle Y^{(1)}, Y^{(2)}\rangle$ is equal to the sign of $X^{(1)}_i X^{(2)}_i$. If there are enough such pairs $Y^{(1)}$ and $Y^{(2)}$, we can determine not only which samples $Y$ have $X_i \neq 0$ but also which pairs of samples $Y$ and $Y'$ have $X_i, X'_i \neq 0$ and $\mbox{sign}(X_i) = \mbox{sign}(X'_i)$. This is the main step of the algorithm {\sc OverlappingAverage}.

\begin{theorem}\label{thm:avgmain}
If the input to {\sc OverlappingAverage} $\mathcal{C}_1,...,\mathcal{C}_m$ are the true clusters $\{j:i\in \Omega^{(j)}\}$ up to permutation, 
then the algorithm outputs a dictionary $\hat{A}$ that is column-wise $\epsilon$-close to $A$ 
with high probability if $k \leq  \min(\sqrt{m}, \frac{\sqrt{n}}{\mu })$ and if $p = \Omega \left ( \max(m^2\log m /k^2 ,  m \log m / \epsilon^2) \right )$
Furthermore the algorithm runs in time $O(p^2)$.
\end{theorem}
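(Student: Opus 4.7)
The plan is to recover each column $A_i$ by averaging the samples in its cluster $\calC_i$ after correctly aligning signs. For every $Y \in \calC_i$ we have $Y = X_i A_i + \sum_{j \in \Omega \setminus \{i\}} X_j A_j$, so if $s(Y) := \mathrm{sign}(X_i)$ then
\[
s(Y)\, Y \;=\; |X_i|\, A_i \;+\; \mathrm{sign}(X_i) \sum_{j \in \Omega \setminus \{i\}} X_j A_j,
\]
whose leading term is a positive multiple of $A_i$ and whose remainder has mean zero by the conditional independence of values in Definition~\ref{def:gamma}. Since the dictionary is only identifiable up to per-column sign flips, it suffices to recover $s$ up to one global flip $\theta_i \in \{\pm 1\}$ per cluster.

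First I would extract relative signs from pairwise inner products. For an identifying pair $(Y^{(a)}, Y^{(b)})$ in $\calC_i$,
\[
\langle Y^{(a)}, Y^{(b)} \rangle \;=\; X^{(a)}_i X^{(b)}_i \;+\; \sum_{\substack{p \in \Omega^{(a)} \setminus \{i\} \\ q \in \Omega^{(b)} \setminus \{i\}}} X^{(a)}_p X^{(b)}_q\, \langle A_p, A_q \rangle,
\]
whose leading term has magnitude in $[1, C^2]$ and whose cross term is below $1/2$ in absolute value by a Hanson--Wright-style bound analogous to Lemma~\ref{lemma:hw} whenever $k\mu \leq \sqrt{n}$. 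Thus the sign of the inner product reveals $\mathrm{sign}(X^{(a)}_i X^{(b)}_i)$. I would build a signed graph $H_i$ on $\calC_i$ with an edge between $Y^{(a)}$ and $Y^{(b)}$ (labeled by this sign) whenever $|\langle Y^{(a)}, Y^{(b)} \rangle| > 1/2$. Since within $\calC_i$ each other coordinate lies in $\Omega^{(a)}$ independently with probability $O(k/m)$, the conditional probability that $(a,b)$ is an identifying pair is $1 - O(k^2/m) = \Omega(1)$ for $k \leq c\sqrt{m}$. Combined with $|\calC_i| = \Theta(pk/m) = \Omega(\log m)$ under the stated sample bound, a standard random-graph argument makes $H_i$ connected with high probability; a union bound over $i$ handles all clusters. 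Fixing a root and propagating along a spanning tree of $H_i$ then yields $s(Y) = \theta_i \cdot \mathrm{sign}(X_i)$ for all $Y \in \calC_i$.

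Having the signs, I would set $z_i := \tfrac{1}{|\calC_i|} \sum_{Y \in \calC_i} s(Y)\,Y$ and output $\hat A_i := z_i / \|z_i\|$. Decomposing,
\[
z_i \;=\; \theta_i \cdot \frac{1}{|\calC_i|} \sum_{Y \in \calC_i} |X_i|\, A_i \;+\; \theta_i \cdot \frac{1}{|\calC_i|} \sum_{Y \in \calC_i} \mathrm{sign}(X_i) \sum_{j \in \Omega \setminus \{i\}} X_j A_j.
\]
The first sum concentrates around $\theta_i \cdot \E[|X_i| \mid X_i \neq 0]\, A_i$, a $\Theta(1)$ multiple of $A_i$. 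Each summand of the second sum is a mean-zero vector whose norm is $O(\sqrt{k})$ (using $\mu$-incoherence to bound $\|A_S\|$ for $|S| \leq k$ and the bound $\|X\|_\infty \leq C$), so vector Bernstein yields $\|z_i - \E z_i\| = O(\sqrt{k \log m / |\calC_i|}) = O(\sqrt{m \log m / p})$. Choosing $p = \Omega(m \log m / \epsilon^2)$ makes this $\leq \epsilon$, and normalization preserves $O(\epsilon)$ closeness. The $O(p^2)$ inner products needed to build the signed graphs match the claimed runtime.

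The main obstacle is that $s(Y)$ is extracted from the data itself and is therefore, a priori, entangled with the ``noise'' coordinates $\{X_j\}_{j \neq i}$, ruling out black-box concentration. The resolution rests on Definition~\ref{def:gamma}(ii): conditional on the support of $X$, the values are mutually independent, so $\mathrm{sign}(X_i)$ is independent of every $X_j$ with $j \neq i$, and $\mathrm{sign}(X_i) X_j$ has the same symmetric distribution as $X_j$. On the high-probability event that $H_i$ is connected, the derived $s(Y)$ coincides with $\mathrm{sign}(X_i)$ up to a single global $\theta_i$, so we may condition on this event and treat the noise contributions across samples in $\calC_i$ as independent mean-zero vectors, reducing the analysis to a clean application of vector Bernstein.
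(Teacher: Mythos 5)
Your approach matches the paper's at a high level: recover relative signs within a cluster from pairwise inner products (the sign of $\langle Y^{(a)},Y^{(b)}\rangle$ equals $\mathrm{sign}(X^{(a)}_iX^{(b)}_i)$ whenever the supports meet only at $i$), propagate these signs from a root to cover the cluster, then average and apply vector Bernstein. The averaging analysis is essentially identical to the paper's: the paper averages the one sign class $\mathcal{C}^{\pm}_i$ directly and notes the samples are then i.i.d.\ from $\Gamma$ conditioned on $X_i>0$, whereas you average sign-corrected vectors over all of $\mathcal{C}_i$; both yield the same variance bound $O(k)$ per sample and the same requirement $p = \Omega(m\log m/\epsilon^2)$.

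There is, however, a genuine gap in the sign-propagation step. You assert that since each pair $(a,b)\in\mathcal{C}_i$ is an identifying pair with constant conditional probability and $|\mathcal{C}_i| = \Omega(\log m)$, ``a standard random-graph argument makes $H_i$ connected with high probability.'' But the events $\{(a,b)\text{ is identifying}\}$ are not independent across pairs: whether $(a,b)$ is labeled depends on the support overlap $\Omega^{(a)}\cap\Omega^{(b)}$, and a single sample $a$ whose support happens to intersect $\mathcal{C}_j$ heavily for some $j\ne i$ can make many of its pairs unlabeled simultaneously. An Erd\H{o}s--R\'enyi-style union bound over cuts does not go through without controlling these dependencies. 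The paper fills precisely this hole with Claim~\ref{claim:intersect}: with high probability every pair of clusters $\mathcal{C}_i,\mathcal{C}_j$ shares at most $2pk^2/m^2$ nodes, so (conditioned on that event, deterministically) each $u\in\mathcal{C}_i$, lying in at most $k-1$ other clusters, has at most $k\cdot 2pk^2/m^2 \ll |\mathcal{C}_i|/3$ unlabeled pairs. This gives minimum degree $\geq 2|\mathcal{C}_i|/3$ in the labeled graph and hence every $v$ is reachable from the root $u_i$ by a length-two labeled path --- which is the property the actual Algorithm~\ref{alg:oaverage} uses (step 7 looks only for length-two paths, not arbitrary spanning-tree propagation). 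Without something like Claim~\ref{claim:intersect} you have no quantitative handle on how many pairs fail to be identifying, and without the length-two reachability statement your argument analyzes a different (spanning-tree) algorithm rather than the one in the theorem. Both fixes are straightforward, but as written the connectivity claim is unsupported.

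As a minor remark, your discussion of why the data-dependent sign $s(Y)$ does not bias the averaging is a nice explicit treatment of a subtlety the paper sidesteps (since the paper averages one sign class, the conditional i.i.d.\ structure is immediate once Lemma~\ref{lemma:signset} holds); your resolution via conditional independence of the values given the support is correct.
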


Intuitively, the algorithm works because the sets $\mathcal{C}^\pm_i$ correctly identify samples with the same sign. This is summarized in the following lemma.

\begin{lemma}\label{lemma:signset}
In Algorithm~\ref{alg:oaverage}, $\mathcal{C}^{\pm}_i$ is either $\{u:X^{(u)}_i > 0\}$ or $\{u: X^{(u)}_i < 0\}$.
\end{lemma}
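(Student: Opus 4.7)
The plan is to show that Algorithm~\ref{alg:oaverage} effectively labels pairs in $\mathcal{C}_i$ by whether $X^{(u)}_i$ and $X^{(v)}_i$ agree in sign, and then reads off $\mathcal{C}^\pm_i$ as the two sides of the resulting bipartition. First I would expand
\begin{equation*}
\langle Y^{(u)}, Y^{(v)}\rangle = \sum_{p\in\Omega^{(u)}\cap\Omega^{(v)}} X^{(u)}_p X^{(v)}_p + \sum_{\substack{p\in\Omega^{(u)},\,q\in\Omega^{(v)}\\p\neq q}} X^{(u)}_p X^{(v)}_q \langle A_p, A_q\rangle.
\end{equation*}
When $(u,v)$ is an identifying pair for $i$ in the sense of Definition~\ref{def:identifyingpair}, the first sum is exactly $X^{(u)}_i X^{(v)}_i$, which has magnitude at least $1$; the second sum is a quadratic form in independent mean-zero sub-Gaussian variables with all off-diagonal entries bounded in absolute value by $\mu/\sqrt{n}$. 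Applying Hanson-Wright exactly as in the proof of Lemma~\ref{lemma:hw}, this ``noise'' term has magnitude less than $1/2$ with high probability in the regime $k\mu \le \sqrt{n}/(C'\log n)$. Hence for every identifying pair $(u,v)$ for $i$,
\begin{equation*}
\mathrm{sign}\bigl(\langle Y^{(u)}, Y^{(v)}\rangle\bigr) = \mathrm{sign}\bigl(X^{(u)}_i X^{(v)}_i\bigr),
\end{equation*}
so that $u$ and $v$ land on the same side of the sign partition iff the inner product is positive.

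Next I would interpret the algorithm as constructing, within each cluster $\mathcal{C}_i$, a signed graph $H_i$ whose edges are the pairs $(u,v)\in\mathcal{C}_i\times\mathcal{C}_i$ with $|\langle Y^{(u)},Y^{(v)}\rangle|>1/2$, labeled by the sign of the inner product, and then defining $\mathcal{C}^\pm_i$ to be the two classes produced by a greedy propagation of signs from an arbitrary seed along the edges of $H_i$. By the previous paragraph, every identifying edge carries a correct label, so the sign-propagation agrees with $\mathrm{sign}(X^{(u)}_i)$ on the entire connected component of identifying edges. The union-bound over $O(p^2)$ pairs and over $m$ choices of $i$ is absorbed by the usual ``with high probability'' slack.

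Two issues remain, and I expect the second to be the main obstacle. The first is that $\mathcal{C}_i$ may contain some pairs $(u,v)$ with $|\Omega^{(u)}\cap\Omega^{(v)}|\ge 2$, which could inject wrongly-signed edges into $H_i$; however the bounded $3$-wise moments assumption and the hypothesis $k\le \sqrt m$ make the probability that a random pair in $\mathcal{C}_i$ has a non-unique intersection at most $O(k^2/m) = o(1)$, so the algorithm can safely discard any edge whose sign conflicts with a majority of length-two sign-propagations between its endpoints. The second and harder issue is connectivity: I need $H_i$ restricted to its identifying edges to be connected on all of $\mathcal{C}_i$. To handle this, I would pick a random ``seed'' $u_0\in\mathcal{C}_i$ and show that, for every other $v\in\mathcal{C}_i$, the probability that $(u_0,v)$ is an identifying pair is $\Omega(1)$ (because, conditioned on $i\in\Omega^{(u_0)}\cap\Omega^{(v)}$, the expected size of further intersection is $O(k^2/m)$). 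Then, using $|\mathcal{C}_i|=\Theta(pk/m)$ and $p=\Omega(m\log m/k^2)$, each $v$ is reachable from $u_0$ via an identifying edge with failure probability at most $m^{-\Delta}$, and a union bound over $v$ and $i$ finishes the proof that the two sides of $H_i$ are exactly $\{u:X^{(u)}_i>0\}$ and $\{u:X^{(u)}_i<0\}$.
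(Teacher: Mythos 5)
Your first two paragraphs match the paper: write $\langle Y^{(u)},Y^{(v)}\rangle = X^{(u)}_i X^{(v)}_i + (\text{noise})$, invoke Hanson--Wright as in Lemma~\ref{lemma:hw} to show the noise is below $1/2$, so for an identifying pair the sign of the inner product equals $\mathrm{sign}(X^{(u)}_iX^{(v)}_i)$. From there the two arguments diverge, and yours has two problems.

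First, your ``issue (1)'' --- wrongly-signed edges from pairs with $|\Omega^{(u)}\cap\Omega^{(v)}|\ge 2$ --- is not an issue in the algorithm as written. Step 2 of Algorithm~\ref{alg:oaverage} only labels a pair $(u,v)$ if it does \emph{not} appear in any other $\mathcal{C}_j$, i.e.\ only identifying pairs are labeled. There is nothing to discard, and the ``majority vote over length-two sign-propagations'' repair you propose is a modification of the algorithm, not an analysis of it.

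Second, and more seriously, your connectivity argument does not go through. You claim that for every $v\in\mathcal{C}_i$, the probability that the single direct pair $(u_0,v)$ is an identifying pair is $\Omega(1)$, and then conclude that ``each $v$ is reachable from $u_0$ via an identifying edge with failure probability at most $m^{-\Delta}$.'' An event of constant probability does not become an event of probability $1-m^{-\Delta}$, and the algorithm does not search for arbitrary paths in $H_i$: step 7 only looks for a length-two path $u_i \to w \to v$. What you need, and what the paper proves, is a \emph{counting} statement: after conditioning on the event (Claim~\ref{claim:intersect}) that any two clusters share at most $2pk^2/m^2$ nodes, a fixed $u\in\mathcal{C}_i$ lies in at most $k-1$ other clusters, so the number of $v\in\mathcal{C}_i$ for which $(u,v)$ is \emph{not} labeled is at most $(k-1)\cdot 2pk^2/m^2 \ll |\mathcal{C}_i|/3$. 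Applying this to both $u_i$ and $v$ gives at least $|\mathcal{C}_i|/3 > 0$ common vertices $w$ with both $(u_i,w)$ and $(w,v)$ labeled, which is exactly the length-two path the algorithm uses. Your proposal does not supply this deterministic degree bound, which is the crux of the proof; the Hanson--Wright part you handled correctly is the easy half.
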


\begin{proof}
It suffices to prove the lemma at the start of Step~\ref{item:reverse}, since this step only takes the complement of $\mathcal{C}^{\pm}_i$ with respect to $\mathcal{C}_i$. Appealing to Lemma~\ref{lemma:hw} we conclude that $\Omega^{(u)}$ and $\Omega^{(v)}$ uniquely intersect in coordinate $i$ then the sign of $\langle Y^{(u)}, Y^{(v)}\rangle$  is equal to the sign of $X^{(u)}_i X^{(v)}_i$. Hence when Algorithm~\ref{alg:oaverage} adds an element to $\mathcal{C}^{\pm}_i$ it must have the same sign as the $i^{th}$ component of $X^{(u_i)}$. What remains is to prove that each node $v \in \mathcal{C}_i$ is correctly labeled. We will do this by showing that for any such vertex, there is a length two path of labeled pairs that connects $u_i$ to $v$, and this is true because the number of labeled pairs is large. We need the following simple claim:

\begin{claim}
\label{claim:intersect}
If $p > m^2\log m/k^2$ then with high probability any two clusters share at most $2pk^2/m^2$ nodes in common.
\end{claim}

This follows since the probability that a node is contained in any fixed pair of clusters is at most $k^2/m^2$. Then for any node $u \in \mathcal{C}_i$, we would like to lower bound the number of labeled pairs it has in $\mathcal{C}_i$. Since $u$ is in at most $k-1$ other clusters $\mathcal{C}_{i_1}, ..., \mathcal{C}_{i_{k-1}}$, the number of pairs $u,v$ where $v\in \mathcal{C}_i$ that are not labeled for $\mathcal{C}_i$ is at most $$\sum_{t=1}^{k-1} |\mathcal{C}_{i_t} \cap \mathcal{C}_i| \le k \cdot 2pk^2/m^2 \ll pk/3m = |\mathcal{C}_i|/3$$ Therefore for a fixed node $u$ for at least a $2/3$ fraction of the other nodes $w \in \mathcal{C}_i$ the pair $u, w$ is labeled. Hence we conclude that for each pair of nodes $u_i, v \in \mathcal{C}_i$ the number of $w$ for which both $u_i, w$ and $w, v$ are labeled is at least $|\mathcal{C}_i|/3 > 0$ and so for every $v$, there is a labeled path of length two connecting $u_i$ to $v$.
\end{proof}

\begin{fragment*}[t]
\caption{
\label{alg:oaverage}{\sc OverlappingAverage}, \textbf{Input: } $p$ samples $Y^{(1)}, Y^{(2)}, ... Y^{(p)}$ and overlapping clusters $\mathcal{C}_1,  \mathcal{C}_2, ..., \mathcal{C}_m$ \vspace*{0.01in}
}

\begin{enumerate} \itemsep 0pt
\small 
\item For each $\mathcal{C}_i$
\item $\qquad$ For each pair $(u,v)\in \mathcal{C}_i$ that does not appear in any other $\mathcal{C}_j$ 
($X^{(u)}$ and $ X^{(v)}$ have a unique intersection)
\item $\qquad$ $\qquad$ Label the pair $+1$ if $\langle Y^{(u)},Y^{(v)}\rangle > 0$ and otherwise label it $-1$.
\item $\qquad$ Choose an arbitrary $u_i\in \mathcal{C}_i$, and set $\mathcal{C}^{\pm}_i = \{u_i\}$
\item $\qquad$ For each $v \in \mathcal{C}_i$
\item $\qquad$ $\qquad$ If the pair $u_i, v$ is labeled $+1$ add $v$ to $\mathcal{C}^{\pm}_i$
\item $\qquad$ $\qquad$ Else if there is $w \in \mathcal{C}_i$ where the pairs $u_i, w$ and $v, w$ have the same label, add $v$ to $\mathcal{C}^{\pm}_i$.
\item {\label{item:reverse}}$\qquad$ If $|\mathcal{C}^{\pm}_i| \le |\mathcal{C}_i|/2$ set $\mathcal{C}^{\pm}_i \leftarrow \mathcal{C}_i\backslash \mathcal{C}^{\pm}_i$.
\item $\qquad$ Let $\hat{A}_i = \sum_{v\in \mathcal{C}^{\pm}_i} Y^{(v)} / \|\sum_{v\in \mathcal{C}^{\pm}_i} Y^{(v)}\|$
\item Output $\hat{A}$, where each column is $\hat{A}_i$ for some $i$
\end{enumerate} 

\end{fragment*}

Using this lemma, we are ready to prove Algorithm~\ref{alg:oaverage} correctly learns all columns of $A$. 

\begin{proof}
We can invoke Lemma~\ref{lemma:signset} and conclude that $\mathcal{C}^{\pm}_i$ is either $\{u:X^{(u)}_i > 0\}$ or $\{u:X^{(u)}_i < 0\}$, whichever set is larger. Let us suppose that it is the former. Then each  $Y^{(u)}$ in $\mathcal{C}^{\pm}_i$ is an independent sample from the distribution conditioned on $X_i > 0$, which we call $\Gamma_i^+$. We have that $\E_{\Gamma_i^+}[AX] = c A_i$ where $c$ is a constant in $[1,C]$ because $\E_{\Gamma_i^+}[X_j] = 0$ for all $j\ne i$.

Let us compute the variance:
$$\E_{\Gamma_i^+}[\|AX - \E_{\Gamma_i^+}AX\|^2] \le \E_{\Gamma_i^+} X_i^2 + \sum_{j\ne i} \E_{\Gamma_i^+} [X_j^2]
\le C^2 + \sum_{j\ne i} C^2 k/m \le C^2 (k+1),$$
\noindent Note that there are no cross-terms because the signs of each $X_j$ are independent. Furthermore we can bound the norm of each vector $Y^{(u)}$ via incoherence. We can conclude that if $|\mathcal{C}^{\pm}_i| > C^2k \log m/\epsilon^2$, then with high probability $\|\hat{A}_i - A_i\| \le \epsilon$ using vector Bernstein's inequality (\cite{G}, Theorem 12). This latter condition holds because we set $\mathcal{C}^{\pm}_i$ to itself or its complement based on which one is larger. 
\end{proof}

\subsection{An Approach via SVD}\label{sec:recovery2}

Here we give an alternative algorithm for recovering the dictionary based instead on SVD.
Intuitively if we take all the samples whose support contains index $j$, then every such sample $Y^{(i)}$ has a component along direction $A_j$. Therefore direction $A_j$ should have the largest variance and can be found by SVD. The advantage is that methods like K-SVD which are quite popular in practice also rely on finding directions of maximum variance, so the analysis we provide here yields insights into why these approaches work. However, the crucial difference is that we rely on finding the correct overlapping clustering in the first step of our dictionary learning algorithms, whereas K-SVD and approaches like approximate it via their current guess for the dictionary. 

 Let us fix some notation: Let $\Gamma_i$ be the distribution conditioned on $X_i \neq 0$. Then once we have found the overlapping clustering, each cluster is a set of random samples from $\Gamma_i$. Also let $\alpha = | \langle u, A_i \rangle|$. 

\begin{definition}\label{def:Ri}
Let $R_i^2 = 1 + \sum_{j \neq i} \langle A_i, A_j \rangle^2 E_{\Gamma_i}[X_j^2]$.
\end{definition}

\noindent Note that $R_i^2$ is the projected variance of $\Gamma_i$ on the direction $u = A_i$. Our goal is to show that for any $u \neq A_i$ (i.e. $\alpha \neq 1$), the variance is strictly smaller. 

\begin{lemma}\label{lemma:var}
The projected variance of $\Gamma_i$ on $u$ is at most $$\alpha^2R_i^2 + \alpha\sqrt{(1-\alpha^2)}\frac{2 \mu k}{\sqrt{n}} + (1-\alpha^2)(\frac{k}{m} + \frac{\mu k}{\sqrt{n}})$$
\end{lemma}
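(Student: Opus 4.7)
I would begin by observing that the projected variance equals the second moment $\E_{\Gamma_i}[\langle u,Y\rangle^2]$: each nonzero $X_j$ has mean zero and, conditioned on the support, the $X_j$'s are independent (Definition~\ref{def:gamma}), so $\E_{\Gamma_i}[X_j]=0$ for every $j$ (for $j=i$ the conditioning event is precisely ``$X_i\ne 0$'') and $\E_{\Gamma_i}[X_j X_l]=0$ for $j\ne l$. All cross terms in the expansion of $\langle u,Y\rangle^2=\sum_{j,l}\langle u,A_j\rangle\langle u,A_l\rangle X_j X_l$ therefore vanish, and after using $\E_{\Gamma_i}[X_i^2]=1$ (the general $C>1$ case only changes constants) one obtains
\begin{equation*}
\Var_{\Gamma_i}[\langle u,Y\rangle]=\langle u,A_i\rangle^2+\sum_{j\ne i}\langle u,A_j\rangle^2\,\E_{\Gamma_i}[X_j^2].
\end{equation*}

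Next I would decompose $u=\alpha A_i+\sqrt{1-\alpha^2}\,v$ with $v$ a unit vector orthogonal to $A_i$ (absorbing a sign into $v$), and expand
\begin{equation*}
\langle u,A_j\rangle^2=\alpha^2\langle A_i,A_j\rangle^2+2\alpha\sqrt{1-\alpha^2}\,\langle A_i,A_j\rangle\langle v,A_j\rangle+(1-\alpha^2)\langle v,A_j\rangle^2.
\end{equation*}
Because $\langle v,A_i\rangle=0$, the $j=i$ piece reduces to $\alpha^2$; combined with the diagonal-squared contributions from $j\ne i$ it is exactly $\alpha^2R_i^2$ by Definition~\ref{def:Ri}. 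What remains is a cross term $T_{\mathrm{cr}}=2\alpha\sqrt{1-\alpha^2}\sum_{j\ne i}\langle A_i,A_j\rangle\langle v,A_j\rangle\E_{\Gamma_i}[X_j^2]$ and an orthogonal term $T_v=(1-\alpha^2)\sum_{j\ne i}\langle v,A_j\rangle^2\E_{\Gamma_i}[X_j^2]$.

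I would bound $T_{\mathrm{cr}}$ directly using incoherence $|\langle A_i,A_j\rangle|\le\mu/\sqrt{n}$ and $|\langle v,A_j\rangle|\le 1$, reducing it to $(\mu/\sqrt{n})\sum_{j\ne i}\E_{\Gamma_i}[X_j^2]$; the bounded 2-wise moment assumption gives $\Pr_{\Gamma_i}[X_j\ne 0]=O(k/m)$ for each $j\ne i$, so this sum is $O(k)$ and $|T_{\mathrm{cr}}|\le\alpha\sqrt{1-\alpha^2}\cdot 2\mu k/\sqrt{n}$. For $T_v$ I would view the sum as the quadratic form $v^T M v$ with $M=A_{-i}DA_{-i}^T$ and $D$ diagonal of entries $\E_{\Gamma_i}[X_j^2]$, then bound it by the spectral norm $\|M\|\le\|A_{-i}\|^2\|D\|$. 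Applying Gershgorin to $A_{-i}^T A_{-i}$ (unit diagonal, off-diagonals $\le\mu/\sqrt{n}$) gives $\|A_{-i}\|^2\le 1+m\mu/\sqrt{n}$, while the same moment estimate yields $\|D\|=O(k/m)$, so $T_v\le(1-\alpha^2)(k/m+\mu k/\sqrt{n})$. Summing the three contributions $\alpha^2R_i^2+T_{\mathrm{cr}}+T_v$ produces the claimed bound.

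The main obstacle is $T_v$: a termwise application of $|\langle v,A_j\rangle|\le 1$ would cost a factor of $k$ and be far too loose, so it is essential to package the incoherence into the global operator-norm bound $\|A_{-i}\|^2\le 1+O(m\mu/\sqrt{n})$ and apply it to the quadratic form in one shot rather than estimating the projections $\langle v,A_j\rangle$ one dictionary element at a time.
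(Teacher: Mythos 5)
Your proposal is correct and follows essentially the same route as the paper's proof: the same decomposition of $u$ into its component along $A_i$ and the orthogonal remainder, the same pointwise incoherence bound for the cross term (using $|\langle A_i,A_j\rangle|\le\mu/\sqrt{n}$ and Cauchy--Schwarz on the perpendicular factor), and the same packaging of the orthogonal term as a quadratic form in $A_{-i}$ controlled by Gershgorin. The only cosmetic difference is that you keep a general diagonal matrix $D$ and bound $\|D\|$, whereas the paper factors out the constant $\E_{\Gamma_i}[X_j^2]=(k-1)/(m-1)$ before applying the spectral-norm bound to $A_{-i}A_{-i}^T$; these are interchangeable.
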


\begin{proof}
Let $u^{||}$ and $u^{\perp}$ be the components of $u$ in the direction of $A_i$ and perpendicular to $A_i$. Then we want bound $E_{\Gamma_i}[ \langle u, Y \rangle^2]$ where $Y$ is sampled from $\Gamma_i$. Since the signs of each $X_j$ are independent, we can write $$E_{\Gamma_i}[ \langle u, Y \rangle^2] = \sum_j E_{\Gamma_i}[ \langle u, A_j X_j \rangle^2] =  \sum_j E_{\Gamma_i}[ \langle u^{||} + u^{\perp}, A_j X_j \rangle^2] $$ Since $\alpha = \| u^{||}\|$ we have: $$E_{\Gamma_i}[ \langle u, Y \rangle^2] = \alpha^2 R_i^2 + E_{\Gamma_i}[\sum_{j \neq i} (2 \langle u^{||} , A_j \rangle \langle u^{\perp} , A_j \rangle + \langle u^{\perp} , A_j \rangle^2 ) X_j^2]$$ Also $E_{\Gamma_i}[X_j^2] = (k-1)/(m-1)$. 
Let $v$ be the unit vector in the direction $u^{\perp}$. We can write $$E_{\Gamma_i}[\sum_{j \neq i} \langle u^{\perp} , A_j \rangle^2  X_j^2] = (1-\alpha^2) ( \frac{k-1}{m-1} ) v^T A_{-i} A_{-i}^T v$$where $A_{-i}$ denotes the dictionary $A$ with the $i^{th}$ column removed. The maximum over $v$ of $v^T A_{-i} A_{-i}^T v$ is just the largest singular value of $A_{-i} A_{-i}^T$ which is the same as the largest singular value of $A_{-i}^T A_{-i}$ which by the Greshgorin Disk Theorem (see e.g. \cite{HJ}) is at most $1 + \frac{\mu}{\sqrt{n}} m$. And hence we can bound $$ E_{\Gamma_i}[\sum_{j \neq i} \langle u^{\perp} , A_j \rangle^2  X_j^2] \leq  (1- \alpha^2) (\frac{k}{m} + \frac{\mu k}{\sqrt{n}})$$ Also since $  |\langle u^{||} , A_j \rangle| = \alpha |\langle A_i , A_j \rangle| \leq \alpha \mu/\sqrt{n}$ we obtain:
 $$E[\sum_{j \neq i} 2 \langle u^{||} , A_j \rangle \langle u^{\perp} , A_j \rangle X_j^2] \leq \alpha \sqrt{(1-\alpha^2)}\frac{2 \mu k}{ \sqrt{n}}$$
and this concludes the proof of the lemma.
\end{proof}

\begin{definition}
Let $\zeta = \max\{\frac{\mu k}{\sqrt{n}}, \sqrt{\frac{k}{m}}\}$, so the expression in Lemma~\ref{lemma:var} can be  be an upper bounded by $\alpha^2R_i^2 + 2\alpha\sqrt{1-\alpha^2} \cdot \zeta  + (1-\alpha^2)\zeta^2$. 
\end{definition}

\begin{fragment*}[t]
\caption{
\label{alg:osvd}{\sc OverlappingSVD}, \textbf{Input: } $p$ samples $Y^{(1)}, Y^{(2)}, ... Y^{(p)}$ \vspace*{0.01in}
}

\begin{enumerate} \itemsep 0pt
\small 
\item Run {\sc OverlappingCluster} (or {\sc OverlappingCluster2}) on the $p$ samples
\item Let $\mathcal{C}_1,  \mathcal{C}_2, ... \mathcal{C}_m$ be the $m$ returned overlapping clusters
\item Compute $\hat{\Sigma}_i = \frac{1}{|\mathcal{C}_i|} \sum_{Y \in \mathcal{C}_i} YY^T$
\item Compute the first singular value $\hat{A}_i$ of $\hat{\Sigma}_i$
\item Output $\hat{A}$, where each column is $\hat{A}_i$ for some $i$
\end{enumerate} 

\end{fragment*}

We will show that an approach based on SVD recovers the true dictionary up to additive accuracy $\pm \zeta$. Note that here $\zeta$ is a parameter that converges to zero as the size of the problem increases, but is not a function of the number of samples. So unlike the algorithm in the previous subsection, we cannot make the error in our algorithm arbitrarily small by increasing the number of samples, but this algorithm has the advantage that it succeeds even when $\E[X_i] \neq 0$. 

\begin{corollary}\label{cor:sep}
The maximum singular value of $\Gamma_i$ is at least $R_i$ and the direction $u$ satisfies $\| u - A_i\| \leq O(\zeta)$. Furthermore the second largest singular value is bounded by $O(R_i^2 \zeta^2)$. 
\end{corollary}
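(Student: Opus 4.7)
The plan is to combine the exact lower bound coming from Definition~\ref{def:Ri} with the upper bound from Lemma~\ref{lemma:var}, and then read off the geometric consequence by trigonometry.

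First, by Definition~\ref{def:Ri}, choosing $u = A_i$ (so $\alpha = 1$) gives projected variance exactly $R_i^2$. Hence the top singular value of $\Sigma_i := \E_{\Gamma_i}[YY^T]$ is at least $R_i^2$ (equivalently, $R_i$ in the ``standard deviation'' sense used in the corollary statement). Let $u$ be the actual top singular direction and let $\alpha = |\la u, A_i \ra|$. By maximality, the projected variance along $u$ is at least that along $A_i$, so applying Lemma~\ref{lemma:var} yields
\begin{equation*}
R_i^2 \;\le\; \alpha^2 R_i^2 + 2\alpha\sqrt{1-\alpha^2}\,\zeta + (1-\alpha^2)\zeta^2.
\end{equation*}
Rearranging and writing $\beta = \sqrt{1-\alpha^2}$, this becomes $\beta^2(R_i^2 - \zeta^2) \le 2\alpha\beta\,\zeta$. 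Since $R_i \ge 1$ and $\zeta = o(1)$ under our parameter regime, we have $R_i^2 - \zeta^2 = \Omega(1)$, so $\beta = O(\zeta)$. By flipping the sign of $u$ if necessary we may assume $\la u, A_i\ra = \alpha \ge 0$, and then
\begin{equation*}
\|u - A_i\|^2 \;=\; 2 - 2\alpha \;\le\; 2(1-\alpha^2) \;=\; 2\beta^2 \;=\; O(\zeta^2),
\end{equation*}
which gives $\|u - A_i\| = O(\zeta)$ as required.

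For the second singular value, note that any unit vector $v$ orthogonal to $u$ satisfies
\begin{equation*}
|\la v, A_i\ra| \;=\; |\la v, A_i - u\ra| \;\le\; \|A_i - u\| \;=\; O(\zeta),
\end{equation*}
so applying Lemma~\ref{lemma:var} with $\alpha' = |\la v, A_i\ra| = O(\zeta)$ bounds the projected variance along $v$ by $\alpha'^2 R_i^2 + 2\alpha'\sqrt{1-\alpha'^2}\,\zeta + (1-\alpha'^2)\zeta^2 = O(R_i^2 \zeta^2)$. Since the second singular value is the maximum of this quadratic form over $v \perp u$, we obtain the claimed $O(R_i^2\zeta^2)$ bound.

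The only mildly delicate point is step~2: one has to notice that the inequality from Lemma~\ref{lemma:var} together with the tight lower bound $R_i^2$ forces $1 - \alpha^2$ to be small (rather than, say, only forcing $\alpha$ away from zero), and then to convert the closeness of $\alpha$ to $1$ into closeness of $u$ to $A_i$ via the identity $\|u - A_i\|^2 = 2 - 2\alpha$. The rest is routine bookkeeping, and the spectral gap in the corollary follows immediately from the step~3 bound together with the step~1 lower bound of $R_i^2$ on the top singular value.
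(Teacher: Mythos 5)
Your proof is correct and follows essentially the same approach as the paper: lower-bound the top variance by plugging $u = A_i$ into Definition~\ref{def:Ri}, use the maximality of the top singular direction together with the upper bound of Lemma~\ref{lemma:var} to force $1-\alpha^2 = O(\zeta^2)$, and then bound $\sigma_2$ by applying Lemma~\ref{lemma:var} to vectors orthogonal to the top singular vector. Your write-up is in fact slightly more careful than the paper's (you correctly maximize over $v\perp u$ rather than $v\perp A_i$, and you spell out the cancellation of $\beta$), but the substance is the same.
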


\begin{proof}
The bound in Lemma~\ref{lemma:var} is only an upper bound, however the direction $\alpha =1$ has variance $R_i^2 > 1$ and hence the direction of maximum variance must correspond to $\alpha \in [1-O(\zeta^2), 1]$. Then we can appeal to the variational characterization of singular values (see \cite{HJ}) that $$\sigma_2(\Sigma_i) = \max_{u \perp A_i} \frac{u^T \Sigma_i u}{u^T u}$$ Then condition that $\alpha \in [-O(\zeta), O(\zeta)]$ for the second singular value implies the second part of the corollary. 
\end{proof}

Since we have a lower bound on the separation between the first and second singular values of $\Sigma_i$, we can apply Wedin's Theorem and show that we can recover $A_i$ approximately even in the presence of noise. 

\begin{theorem}[Wedin]\label{thm:sine} \citep{W}
Let $\delta = \sigma_1(M) - \sigma_2(M)$ and let $M' = M + E$ and furthermore let $v_1$ and $v_1'$ be the first singular vectors of $M$ and $M'$ respectively. Then $$\sin \Theta(v_1, v_1') \leq C \frac{\| E\|}{\delta}$$
\end{theorem}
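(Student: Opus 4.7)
The strategy is to reduce Wedin's singular-vector bound to the classical Davis--Kahan eigenvector bound for symmetric matrices via the Jordan--Wielandt dilation. I would form
\[
\tilde{M} = \begin{pmatrix} 0 & M \\ M^T & 0 \end{pmatrix}, \qquad \tilde{E} = \begin{pmatrix} 0 & E \\ E^T & 0 \end{pmatrix},
\]
so that $\tilde{M}' := \tilde{M} + \tilde{E}$ is the dilation of $M'$. A direct calculation shows $\tilde{M}$ is symmetric with eigenvalues $\pm\sigma_i(M)$ and corresponding eigenvectors $\tfrac{1}{\sqrt{2}}(u_i;\pm v_i)$, where $u_i, v_i$ are the left/right singular vectors of $M$. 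Thus $\|\tilde{E}\| = \|E\|$, the top eigenvalue of $\tilde{M}$ is $\sigma_1$, and since every other eigenvalue is either some $\sigma_j \le \sigma_2$ or some $-\sigma_j \le 0$, the gap between the top eigenvalue and the rest of the spectrum is exactly $\delta = \sigma_1 - \sigma_2$. Moreover, the first $n$ and last $n$ coordinates of the top eigenvector of $\tilde{M}'$ recover the top left and right singular vectors of $M'$, so an angle bound for the eigenvectors of $\tilde{M}$ vs.\ $\tilde{M}'$ immediately yields a bound on $\sin\Theta(v_1,v_1')$.

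Next I would establish the Davis--Kahan-type bound in the symmetric case directly. Let $A$ be symmetric with spectral decomposition $A = \sum_i \lambda_i v_i v_i^T$, top eigenvalue $\lambda_1$, and gap $\delta$ to the rest of the spectrum. Let $A' = A + E$ have top eigenpair $(\lambda_1', v_1')$. Expanding $v_1' = \sum_i c_i v_i$ and projecting the identity $(A - \lambda_1' I)v_1' = -E v_1'$ onto each $v_i$ yields $c_i(\lambda_i - \lambda_1') = -\langle v_i, E v_1'\rangle$. Weyl's inequality gives $|\lambda_1' - \lambda_1| \le \|E\|$, hence $|\lambda_1' - \lambda_i| \ge \delta - \|E\|$ for every $i \ge 2$. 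Summing yields
\[
\sin^2\Theta(v_1, v_1') = \sum_{i\ge 2} c_i^2 \le \frac{\|E v_1'\|^2}{(\delta - \|E\|)^2} \le \frac{\|E\|^2}{(\delta - \|E\|)^2}.
\]
In the regime $\|E\| \le \delta/2$ this gives $\sin\Theta(v_1,v_1') \le 2\|E\|/\delta$, and the complementary regime $\|E\| > \delta/2$ is handled trivially by enlarging the absolute constant $C$ (since $\sin\Theta \le 1$ always).

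The main place where care is needed is the dilation step: one must check that the spectral gap of $\tilde{M}$ really equals $\sigma_1 - \sigma_2$ and is not reduced by the symmetrization around $0$. Because singular values are nonnegative, the eigenvalue of $\tilde{M}$ closest to $\sigma_1$ in $\{\sigma_2,\sigma_3,\ldots\} \cup \{-\sigma_1,-\sigma_2,\ldots\}$ is always $\sigma_2$ (as $\sigma_1 - \sigma_2 \le \sigma_1 \le \sigma_1 + \sigma_j$ for every $j$), so the gap after dilation is exactly $\delta$ and nothing is lost. The rest is the classical eigenvector expansion argument above; no heavier machinery (e.g., resolvent contours or operator-valued Sylvester equations) is needed for the single-vector version stated here, and the whole argument yields the claim with an explicit constant $C = 2$.
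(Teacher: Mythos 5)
The paper does not prove this statement; it is an external result cited from Wedin's 1972 paper, so there is no internal proof to compare against. Your proof attempt is a correct and standard route: reduce to the symmetric case via the Jordan--Wielandt dilation, then establish a Davis--Kahan-type bound directly by the eigenvector-expansion argument. Both halves are sound; the dilation does preserve the spectral gap exactly (for rectangular $M$ the dilated spectrum is $\{\pm\sigma_i\}$ together with some zeros, and $\sigma_1 - \sigma_2$ is still the distance from $\sigma_1$ to the nearest other eigenvalue since all other candidates are at distance at least $\sigma_1 \geq \sigma_1 - \sigma_2$), and the Weyl-plus-expansion computation yields $\sin\Theta \le \|E\|/(\delta - \|E\|)$ correctly, with the regime split giving $C = 2$ in the symmetric case.

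One step is glossed over a bit too quickly. You write that the eigenvector angle bound for $\tilde{M}$ vs.\ $\tilde{M}'$ ``immediately yields'' the bound on $\sin\Theta(v_1,v_1')$. It does not follow with the same constant, only up to another constant factor. The top eigenvectors of the dilated matrices are $\tfrac{1}{\sqrt 2}(u_1; v_1)$ and $\tfrac{1}{\sqrt 2}(u_1'; v_1')$, whose angle $\tilde\Theta$ controls $\langle u_1, u_1'\rangle + \langle v_1, v_1'\rangle$ jointly, not $\langle v_1, v_1'\rangle$ alone. One way to close the gap: after fixing the sign so that $\cos\tilde\Theta \ge 0$,
\[
\sin^2\Theta(v_1,v_1') \le \|v_1 - v_1'\|^2 \le \|u_1 - u_1'\|^2 + \|v_1 - v_1'\|^2 = 2\bigl(2 - 2\cos\tilde\Theta\bigr) \le 4\sin^2\tilde\Theta,
\]
so $\sin\Theta(v_1,v_1') \le 2\sin\tilde\Theta$. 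This inflates the final constant to $C = 4$, which is still fine for the statement as given (the paper asserts only an unspecified absolute constant), but the extra factor should be made explicit rather than claimed to be immediate. With that small patch the argument is complete.
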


\noindent Hence even if we do not have access to $\Sigma_i$ but rather an approximation to it $\hat{\Sigma}_i$ (e.g. an empirical covariance matrix computed from our samples), we can use the above perturbation bound to show that we can still recover a direction that is close to $A_i$ -- and in fact converges to $A_i$ as we take more and more samples. 

\begin{theorem}\label{thm:main}
If the input to {\sc OverlappingSVD} is the correct clustering, then the algorithm outputs a dictionary $\hat{A}$ so that for each $i$, $\|A_i - \hat{A}_i\| \leq \zeta$ with high probability if $k \leq c \min(\sqrt{m}, \frac{\sqrt{n}}{\mu \log n})$ and if $$p \geq \max(m^2 \log m /k^2,  \frac{m n \log m \log n}{\zeta^2})$$
\end{theorem}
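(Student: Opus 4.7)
The plan is to combine the singular-value separation of the true second moment $\Sigma_i = \E_{\Gamma_i}[YY^T]$ (already established in Corollary~\ref{cor:sep}) with matrix concentration for the empirical estimator $\hat{\Sigma}_i$, and then invoke Wedin's theorem.

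First I would show that each cluster has the expected size. Since every sample independently has $X_i \neq 0$ with probability $\Theta(k/m)$, a Chernoff bound gives $|\mathcal{C}_i| = \Omega(pk/m)$ for every $i$ simultaneously, with high probability. By hypothesis on $p$, this means $|\mathcal{C}_i| \geq \Omega\bigl(\max(m\log m, \, nk\log m \log n/\zeta^2)\bigr)$. Conditioned on their indices, the samples inside $\mathcal{C}_i$ are i.i.d.\ draws from $\Gamma_i$, so $\hat{\Sigma}_i$ is an i.i.d.\ empirical average of $YY^T$ over $\Gamma_i$.

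Next I would control the spectral norms needed for matrix Bernstein. For any sample $Y = AX$ from $\Gamma_i$, the incoherence bound combined with Gershgorin gives $\|A_\Omega^T A_\Omega\| \leq 1 + \mu k/\sqrt{n} = O(1)$, so $\|Y\|^2 \leq C^2\|A_\Omega\|^2 k = O(k)$ deterministically, hence $\|YY^T\| = O(k)$. For the variance proxy, $\E_{\Gamma_i}[(YY^T)^2] = \E_{\Gamma_i}[\|Y\|^2 \, YY^T]$ has spectral norm at most $O(k)\cdot\|\Sigma_i\| = O(k)$, since $\|\Sigma_i\| = R_i^2 = O(1)$. Matrix Bernstein (Tropp) on the $N = |\mathcal{C}_i|$ samples then yields, with high probability,
\[
\|\hat{\Sigma}_i - \Sigma_i\| \;\leq\; O\!\left(\sqrt{\frac{k\log n}{N}} + \frac{k\log n}{N}\right) \;\leq\; O(\zeta),
\]
using $N \geq \Omega(nk\log m \log n/\zeta^2)$ and $\zeta \geq \sqrt{k/m}$, so the first term dominates and is at most $O(\zeta/\sqrt{n})$, well within $O(\zeta)$.

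Finally I would plug into Wedin. By Corollary~\ref{cor:sep} the top singular value of $\Sigma_i$ is $\geq R_i^2 \geq 1$ while the second is $O(R_i^2 \zeta^2)$, giving a gap $\delta = \Omega(1)$. Then
\[
\sin\Theta(\hat{A}_i, A_i) \;\leq\; C \, \frac{\|\hat{\Sigma}_i - \Sigma_i\|}{\delta} \;=\; O(\zeta),
\]
so after choosing the correct sign (which is the standard ambiguity in singular-vector recovery and can be fixed by comparing to any single sample with $X_i \neq 0$), we get $\|\hat{A}_i - A_i\| = O(\zeta)$. Taking a union bound over the $m$ clusters and rescaling the constant in $\zeta$ completes the proof. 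The main technical obstacle is the matrix-Bernstein step: one must exploit that $\|\Sigma_i\| = O(1)$ (rather than trivially $O(k)$) and that $\|Y\|^2 = O(k)$ via incoherence, otherwise the dependence of $p$ on $n$ and $\zeta$ would be much worse than claimed.
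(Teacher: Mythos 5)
Your proof is correct and follows the same structure as the paper's: concentration of the empirical second moment $\hat{\Sigma}_i$ around $\Sigma_i$, the spectral gap from Corollary~\ref{cor:sep}, and Wedin's theorem. The only substantive difference is the concentration tool — you run matrix Bernstein explicitly, using incoherence to bound $\|YY^T\| = O(k)$ and the variance proxy, whereas the paper invokes Rudelson's covariance-concentration result as a black box; both yield the stated sample complexity (one small nit: $R_i^2 = A_i^T\Sigma_i A_i$ is only a lower bound on $\|\Sigma_i\|$, not an equality, though both are $\Theta(1)$ so this does not affect the argument).
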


\begin{proof}
Appealing to Theorem~\ref{thm:ocluster}, we have that with high probability the call to {\sc OverlappingCluster} returns the correct overlapping clustering. Then given $\frac{n \log n}{\zeta^2}$ samples from the distribution $\Gamma_i$ the classic result of Rudelson implies that the computed empirical covariance matrix $\hat{\Sigma}_i $ is close in spectral norm to the true co-variance matrix \cite{R}. This, combined with the separation of the first and second singular values established in Corollary~\ref{cor:sep} and Wedin's Theorem \ref{thm:sine} imply that we recover each column of $A$ up to an additive accuracy of $\epsilon$ and this implies the theorem. Note that since we only need to compute the first singular vector, this can be done via power iteration \cite{GV} and hence the bottleneck in the running time is the call to {\sc OverlappingCluster}. 
\end{proof}


\subsection{Noise Tolerance}

Here we elaborate on why the algorithm can tolerate noise provided that the noise is {\em uncorrelated} with the dictionary (e.g. Gaussian noise). The observation is that in constructing the connection graph, we only make use of the inner products between pairs of samples $Y^{(1)}$ and $Y^{(2)}$, the value of which is roughly preserved under various noise models. In turn, the overlapping clustering is a purely combinatorial algorithm that only makes use of the connection graph. Finally, we recover the dictionary $A$ using singular value decomposition, which is well-known to be stable under noise (e.g. Wedin's Theorem~\ref{thm:sine}). 

\section{Refining the Solution}\label{sec:refine}

Earlier sections gave noise-tolerant algorithms for the dictionary learning problem with sample complexity $O(\poly(n,m,k)/\epsilon^2)$. This dependency on $\epsilon$ is necessary for any noise-tolerant algorithm since even if the dictionary has only one vector, we need $O(1/\epsilon^2)$ samples to estimate the vector in presence of noise. However when $Y$ is exactly equal to $AX$ we can hope to recover the dictionary with better running time and much fewer samples. In particular, \cite{GWW} recently established that $\ell_1$-minimization is locally correct for incoherent dictionaries, therefore it seems plausible that given a very good estimate for $A$ there is some algorithm that computes a refined estimate of $A$ whose running time and sample complexity have a better dependence on $\epsilon$. 

In this section we analyze the local-convergence of an algorithm that is similar to K-SVD~\citep{AEB}; see Algorithm~\ref{alg:iteraverage} {\sc IterativeAverage}. Recall $B_S$ denotes the submatrix of $B$ whose columns are indices in $S$; also, $P^+ = (P^TP)^{-1}P^T$ is the left-pseudoinverse of the matrix $P$. Hence $P^+P = I$, $PP^+$ is the projection matrix to the span of columns of $P$.

\begin{fragment*}[t]
\caption{
\label{alg:iteraverage}{\sc IterativeAverage}, \textbf{Input:} Initial estimation $B$, $\|B_i - A_i\| \le \epsilon$, $q$ samples (independent of $B$) $Y^{(1)}, Y^{(2)}, ... Y^{(q)}$ \vspace*{0.01in}
}

\begin{enumerate} \itemsep 0pt
\small 
\item For each sample $i$, let $\Omega^{(i)} = \{j: |\langle Y^{(i)},B_j\rangle| > 1/2\}$
\item For each dictionary element $j$
\item $\quad$ Let $\mathcal{C}_j^+$ be the set of samples that have inner product more than $1/2$ with $B^{(j)}$ ($\mathcal{C}_j^+ =\{i:\langle Y^{(i)}, B_j\rangle > 1/2\}$)
\item $\quad$ For each sample $i$ in $\mathcal{C}_j^+$
\item $\quad$ $\quad$ Let $\hat{X}^{(i)} = B_{\Omega^{(i)}}^+ Y^{(i)}$
\item $\quad$ $\quad$ Let $Q_{i,j} = Y^{(i)} - \sum_{t\in \Omega^{(i)}\backslash\{j\}} B_t \hat{X}^{(i)}_t $
\item $\quad$ Let $B'_j = \sum_{i\in\mathcal{C}_j^+} Q_{i,j}/\|\sum_{i\in\mathcal{C}_j^+} Q_{i,j}\|$.
\item Output $B'$.
\end{enumerate} 

\end{fragment*}

The key lemma of this section shows the error decreases by a constant factor in each round of {\sc IterativeAverage} (provided that it was suitably small to begin with). Let  $\epsilon_0 \leq 1/100k$. 

\begin{theorem}\label{thm:inter}
Suppose the dictionary $A$ is $\mu$-incoherent with $\mu/\sqrt{n} < 1/k\log k$, initial solution is $\epsilon < \epsilon_0$ close to the true solution (i.e. for all $i$ $\|B_i-A_i\| \le \epsilon$). With high probability the output of {\sc IterativeAverage} is a dictionary $B'$ that satisfies $\|B'_i - A_i\| \le (1 - \delta) \epsilon$, where $\delta$ is a universal positive constant. Moreover, the algorithm runs in time $O(qnk^2)$ and succeeds with high probability when number of samples $q = \Omega(m\log^2 m)$. 
\end{theorem}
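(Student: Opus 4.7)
The plan is to view one round of \textsc{IterativeAverage} as an update on each column separately: the new estimate $B'_j$ is an empirical average of vectors $Q_{i,j}$ whose conditional expectation is $\alpha_j A_j + v$ with $\|v\|$ shrinking by a constant factor relative to $\|E_j\| := \|B_j - A_j\|$, after which normalization yields the contraction $\|B'_j - A_j\|\le (1-\delta)\epsilon$. The first step is to check that the computed sets $\Omega^{(i)} = \{j : |\langle Y^{(i)}, B_j\rangle|>1/2\}$ and $\mathcal{C}^+_j$ are exactly the true support and true positive-sign set. For any sample with true support $S$, split $\langle Y^{(i)}, B_j\rangle = \langle Y^{(i)}, A_j\rangle + \langle Y^{(i)}, B_j-A_j\rangle$. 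For $j\in S$ the first term equals $X^{(i)}_j$ plus an incoherence cross-term of size $O(k\mu/\sqrt n)=o(1)$, while the perturbation is bounded by $\|Y^{(i)}\|\cdot\epsilon = O(\epsilon\sqrt k)=o(1)$ using $\epsilon<1/(100k)$. For $j\notin S$ both are $o(1)$, so thresholding at $1/2$ exactly recovers $S$; a union bound over samples handles failure.

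Next I would establish an algebraic identity for $Q_{i,j}$ that isolates the contribution of $A_j$. Writing $F := B_S - A_S$ and using $B^+_S A_S = I - B^+_S F$, one gets $\hat X^{(i)} = X^{(i)}_S - B^+_S F X^{(i)}_S$; substituting into the definition of $Q_{i,j}$ and cancelling the $A_{S\setminus j}X^{(i)}_{S\setminus j}$ block produces
\[
Q_{i,j} \;=\; A_j X^{(i)}_j \;-\; (B_{S\setminus j}-A_{S\setminus j})\, X^{(i)}_{S\setminus j} \;+\; B_{S\setminus j}\,\bigl(B^+_S F\, X^{(i)}_S\bigr)_{S\setminus j}.
\]
The crucial observation is that $E_j = B_j - A_j$ itself does \emph{not} appear on the right-hand side (apart from its role in the support test), so errors propagating into $B'_j$ can only come from $E_t$ for $t\ne j$. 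This is the algebraic source of contraction, mirroring the structure of a noisy power iteration as promised in the proof outline.

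The third step is to take expectations conditional on $i\in\mathcal{C}^+_j$ and then apply concentration. Conditional on $X^{(i)}_j>0$, the signs of $X^{(i)}_t$ for $t\ne j$ remain independent with mean zero on the support, so $\E[X^{(i)}_t\,\mathbf{1}(t\in S)\mid X^{(i)}_j>0]=0$, killing the first ``linear-in-$F$'' term in the display. The third term is quadratic in $F$: since $B^+_S F = I - B^+_S A_S$ has norm $O(\epsilon\sqrt k)$ and its image against $X^{(i)}_S$ and $B_{S\setminus j}$ contributes at most $O(k\epsilon^2 + \epsilon k\mu/\sqrt n)$ in expectation, which is $o(\epsilon)$ under the hypotheses. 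Thus $\E[Q_{i,j}\mid i\in\mathcal{C}^+_j] = \alpha_j A_j + v$ with $\alpha_j = \E[X^{(i)}_j\mid X^{(i)}_j>0]\in [1,C]$ and $\|v\|\le (1-\delta)\alpha_j\epsilon$. Each $Q_{i,j}$ has norm $O(\sqrt k)$, so vector Bernstein over $|\mathcal{C}^+_j|=\Theta(qk/m)$ independent samples gives an empirical average within $o(\epsilon)$ of its expectation once $q=\Omega(m\log^2 m)$. Applying $\|u/\|u\|-v/\|v\|\|\le 2\|u-v\|/\max(\|u\|,\|v\|)$ to pass from $\sum Q_{i,j}$ to its normalization, together with $\alpha_j=\Theta(1)$, delivers $\|B'_j-A_j\|\le (1-\delta)\epsilon$.

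The main obstacle lies in the third step: establishing that the \emph{leading} error in $\E[Q_{i,j}]$ is quadratic, not linear, in $E$, with explicit constants small enough to leave a strict contraction factor. This reduction depends sensitively on two facts that must be argued jointly: (i) the conditional mean-zero/independence properties of class $\Gamma$ ensure that single $X^{(i)}_t$ expectations vanish, and (ii) the pseudoinverse deviation $B^+_S - A^+_S$ can be controlled by the Gershgorin/incoherence bound on $(B_S^T B_S)^{-1}$, which needs $\epsilon\sqrt k\ll 1$ (hence the choice $\epsilon_0\le 1/(100k)$) and uses $\mu/\sqrt n<1/(k\log k)$ to keep cross-terms subdominant. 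If any genuinely linear-in-$E$ contribution survived expectation—say through a biased conditional distribution or through a coupling between the support-recovery event and the sign of $X^{(i)}_t$—the argument would collapse to an error of $O(\epsilon)$ and the iteration would no longer contract.
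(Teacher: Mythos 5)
Your setup steps (support and sign recovery; the identity $\hat X^{(i)} = X^{(i)}_S - B_S^+ F X^{(i)}_S$ with $F = B_S - A_S$; the resulting expansion of $Q_{i,j}$) are correct and in fact reproduce, in slightly different notation, the paper's decomposition $Q_{i,1} = (I-M_i)Y^{(i)}$ with $M_i = (0, B_{\Omega^{(i)}\setminus 1})B_{\Omega^{(i)}}^+$. The identity in your display, and the observation that conditioning on $X^{(i)}_j>0$ kills the term linear in $X^{(i)}_{S\setminus j}$, are both fine.

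The gap is in the claim that ``$E_j = B_j - A_j$ itself does not appear on the right-hand side'' and that the surviving third term is quadratic in $F$. Neither is true. The column $E_j$ is a column of $F$, and when you take the conditional expectation, you get $\E\bigl[B_{S\setminus j}(B_S^+ F X^{(i)}_S)_{S\setminus j}\bigr] = \alpha_j\, B_{S\setminus j}(B_S^+ F e_j)_{S\setminus j} = \alpha_j\, B_{S\setminus j}(B_S^+ E_j)_{S\setminus j}$, since $\E[X^{(i)}_S] = \alpha_j e_j$ and $F e_j = E_j$. So the leading error in $\E[Q_{i,j}]$ is exactly $\alpha_j M_i E_j$ in the paper's notation: it is \emph{linear} in $E_j$, of size up to $(1+o(1))\epsilon$ by the crude operator-norm bound $\|M_i\|\le 1+o(1)$, and you have no argument that it is strictly smaller than $\epsilon$. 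By your own diagnosis this is precisely the failure mode that would make the iteration stop contracting, so the proof as written does not establish the theorem.

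What is missing is the paper's key lemma bounding $\E_S[\|M_i E_j\|]\le(1-\delta)\epsilon$. This cannot be done by norm bounds alone. The paper decomposes $E_j$ into its component along $B_j$ (which has norm only $\epsilon^2/2$ since $\langle A_j-B_j,B_j\rangle = \|A_j-B_j\|^2/2$) and its perpendicular component $y$, then couples two independent random supports $\Omega^{(1)},\Omega^{(2)}\ni j$: when their only intersection is $\{j\}$ (which happens with probability $\ge 2/3$), the sets $B_{\Omega^{(1)}\setminus j}$ and $B_{\Omega^{(2)}\setminus j}$ are disjoint and nearly orthogonal, so $\|B_{\Omega^{(1)}}^T y\|^2 + \|B_{\Omega^{(2)}}^T y\|^2 \le (1+o(1))\|y\|^2$ forces $\|M_1 y\|+\|M_2 y\|\le(1+o(1))\sqrt 2\,\|y\|$. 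Averaging gives an expected contraction factor strictly below $1$. Without this geometric averaging argument over random supports (the ``noisy power method'' step), the linear-in-$E_j$ term is not controlled and the proof does not go through.
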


\noindent We will analyze the update made to the first column $B_1$, and the same argument will work for all columns (and hence we can apply a union bound to complete the proof). To simplify the proof, we will let $\xi$ denote arbitrarily small constants (whose precise value will change from line to line). First, we establish some basic claims that will be the basis for our analysis of {\sc IterativeAverage}. 


\begin{claim}\label{claim:signset}
Suppose $A$ is a $\mu$ incoherent matrix with $\mu/\sqrt{n} < 1/k\log k$. If for all $i$, $\|B_i - A_i\| \le \epsilon_0$ then {\sc IterativeAverage} recovers the correct support for each sample (i.e. $\Omega^{(i)} = \mbox{supp}(X^{(i)})$) and the correct sign (i.e. $\mathcal{C}^+_j = \{j:X^{(i)}_j > 0\}$) \footnote{Notice that this is not a ``with high probability'' statement, the support is {\em always} correctly recovered. That is why we use $\Omega^{(i)}$ both in the algorithm and for the true support}
\end{claim}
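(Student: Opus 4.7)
The plan is to show that, deterministically, $\langle Y^{(i)}, B_j\rangle$ is an approximation of $X^{(i)}_j$ that is correct to within $1/2$ in magnitude and also has the correct sign, so the thresholding rule $|\langle Y^{(i)}, B_j\rangle| > 1/2$ exactly recovers the support and the $\mathcal{C}^+_j$ construction sorts those samples by $\mathrm{sign}(X^{(i)}_j)$. The whole argument avoids probabilistic inequalities and uses only $\mu$-incoherence of $A$, the perturbation bound $\|B_j - A_j\| \le \epsilon_0$, unit-length columns of both $A$ and $B$, and the magnitude bound $|X^{(i)}_t| \le 1$ (from Definition~\ref{def:gamma} in the $C=1$ normalization); this is precisely why the footnote can drop ``with high probability.''

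First I would write $B_j = A_j + E_j$ with $\|E_j\| \le \epsilon_0$. Cauchy--Schwarz plus unit norms give $\langle A_j, B_j\rangle \in [1 - \epsilon_0^2/2,\, 1]$, while incoherence plus Cauchy--Schwarz give $|\langle A_t, B_j\rangle| \le \mu/\sqrt{n} + \epsilon_0$ for every $t \ne j$. Expanding
\[
\langle Y^{(i)}, B_j\rangle \;=\; \sum_{t \in \mathrm{supp}(X^{(i)})} X^{(i)}_t\, \langle A_t, B_j\rangle
\]
and using $|X^{(i)}_t| \le 1$, every off-diagonal contribution (indices $t \ne j$) is bounded by $\mu/\sqrt{n} + \epsilon_0$; since the support has size at most $k$, the total off-diagonal mass is at most $k(\mu/\sqrt{n} + \epsilon_0) \le 1/\log k + 1/100$ by the hypotheses $\mu/\sqrt{n} < 1/(k \log k)$ and $\epsilon_0 \le 1/(100k)$.

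Now I split into two cases. If $j \in \mathrm{supp}(X^{(i)})$, the diagonal term $X^{(i)}_j \langle A_j, B_j\rangle$ has magnitude in $[1 - \epsilon_0^2/2,\,1]$ and the same sign as $X^{(i)}_j$; adding the off-diagonal perturbation still leaves $|\langle Y^{(i)}, B_j\rangle| > 1/2$ with the correct sign, so $j \in \Omega^{(i)}$ and $i \in \mathcal{C}^+_j$ iff $X^{(i)}_j > 0$. If $j \notin \mathrm{supp}(X^{(i)})$, every one of the at most $k$ terms is off-diagonal, hence $|\langle Y^{(i)}, B_j\rangle| \le k(\mu/\sqrt{n} + \epsilon_0) < 1/2$ and $j \notin \Omega^{(i)}$. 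The only obstacle is a little bookkeeping on the numerical constants --- verifying that $1/\log k + 1/100$ is comfortably below $1/2$ in the regime where the hypothesis is meant to apply --- but this amounts to a single triangle-inequality calculation and is not a conceptual difficulty.
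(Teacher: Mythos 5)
Your proof is correct and is essentially the same argument as the paper's (which just says the off-diagonal contribution is at most $1/3$ and declares the claim follows); you have filled in the details the paper leaves implicit, namely the split $B_j = A_j + E_j$, the bounds $\langle A_j,B_j\rangle \ge 1-\epsilon_0^2/2$ and $|\langle A_t,B_j\rangle| \le \mu/\sqrt n + \epsilon_0$, and the resulting case analysis.
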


\begin{proof}
We can compute $\langle Y^{(i)}, B_1 \rangle = \sum_{j \in \Omega^{(i)}} X^{(i)}_j \langle A_j, B_1 \rangle$ and the total contribution of all of the terms besides $X^{(i)}_1 \langle A_1, B_1 \rangle$ for $j \neq 1$ is at most $1/3$. This implies the claim. 
\end{proof}

\begin{claim}\label{claim:incoherent}
The set of columns $\{B_i\}_i$ is $\mu' = \mu+O(k/\sqrt{n})$-incoherent where $\mu'/\sqrt{n} \le 1/10k$.
\end{claim}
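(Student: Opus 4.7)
The plan is to expand, for every distinct pair $i,j$,
\[
\langle B_i, B_j\rangle = \langle A_i, A_j\rangle + \langle A_i, B_j - A_j\rangle + \langle B_i - A_i, A_j\rangle + \langle B_i - A_i, B_j - A_j\rangle,
\]
and bound the four terms separately. The first is at most $\mu/\sqrt{n}$ by the $\mu$-incoherence of $A$; the two ``cross'' terms are each at most $\epsilon$ by Cauchy--Schwarz, using $\|A_i\| = \|A_j\| = 1$ together with $\|B_i - A_i\|,\|B_j - A_j\| \le \epsilon$; and the last term is at most $\epsilon^2$ by Cauchy--Schwarz. Adding these gives $|\langle B_i, B_j\rangle| \le \mu/\sqrt{n} + 2\epsilon + \epsilon^2$.

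To read this as an incoherence bound, I would multiply by $\sqrt{n}$ to get $\mu' \le \mu + (2\epsilon + \epsilon^2)\sqrt{n}$, which, substituting $\epsilon \le \epsilon_0 \le 1/(100k)$, matches the stated additive perturbation. For the concrete numerical conclusion $\mu'/\sqrt{n} \le 1/(10k)$, I would sum three pieces: $\mu/\sqrt{n} < 1/(k\log k)$ from the hypothesis, $2\epsilon \le 1/(50k)$, and $\epsilon^2 = O(1/k^2)$. For $k$ with $\log k$ moderately large, these comfortably fit under $1/(10k)$; for smaller $k$ one absorbs the slack into the constant in $\epsilon_0$.

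One minor technicality to dispense with is that the columns $B_i$ are not a priori unit vectors; from $\|A_i\| = 1$ and $\|B_i - A_i\| \le \epsilon$ we only get $\|B_i\| \in [1-\epsilon, 1+\epsilon]$. If the claim is interpreted with respect to the normalized columns $B_i/\|B_i\|$, then replacing $B_i$ by its normalized version perturbs each inner product by only an additional $O(\epsilon)$, which is harmlessly absorbed into the same final bound.

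I do not anticipate any real obstacle here: the argument is two applications of the triangle inequality together with Cauchy--Schwarz. The only genuine work is constant-tracking to verify that the final bound $\mu'/\sqrt{n} \le 1/(10k)$ closes with the choice $\epsilon_0 \le 1/(100k)$, which is the reason the hypothesis needs the slightly stronger condition $\mu/\sqrt{n} < 1/(k\log k)$ rather than just $\mu/\sqrt{n} \le 1/(10k)$.
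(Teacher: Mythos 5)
The paper actually states Claim~\ref{claim:incoherent} without proof, so there is no official argument to compare against; your four-term expansion together with Cauchy--Schwarz is the natural route, and your handling of the fact that $B_i$ need not be exactly unit-norm is appropriate.

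There is one genuine slip you should correct rather than gloss over. With $\epsilon \le \epsilon_0 \le 1/(100k)$, the additive perturbation you derive is $(2\epsilon+\epsilon^2)\sqrt{n} = \Theta(\sqrt{n}/k)$, which does \emph{not} match the claim's stated $O(k/\sqrt{n})$; in the paper's regime $k \lesssim \sqrt{n}/(\mu\log n)$ forces $n \gg k^2$, so $\sqrt{n}/k$ and $k/\sqrt{n}$ differ by a factor $n/k^2 \gg 1$ and cannot be conflated. Your write-up asserts the two ``match,'' which is incorrect; what you have actually shown is $\mu' = \mu + O(\sqrt{n}/k)$, and the $O(k/\sqrt{n})$ in the claim appears to be a typo. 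This does not damage the downstream conclusion $\mu'/\sqrt{n} \le 1/(10k)$ (which is what is actually used, e.g.\ in the Gershgorin bound for $\|M_i\|$), because normalizing by $\sqrt{n}$ gives $\mu'/\sqrt{n} \le \mu/\sqrt{n} + 2\epsilon + \epsilon^2 < 1/(k\log k) + 1/(50k) + O(1/k^2)$, which is $\le 1/(10k)$ once $\log k$ exceeds a modest constant — exactly the asymptotic argument you sketch. So the substance of your proof is right; you should simply state that the derived perturbation is $O(\sqrt{n}/k)$ and note the discrepancy with the stated $O(k/\sqrt{n})$, rather than asserting a match that does not hold.
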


To simplify the notation, let us permute the samples so that $\mathcal{C}^+_1 = \{1,2,...,l\}$. The probability that $X^{(i)}_1 > 0$ is $\Theta(k/m)$ and so for $q = \Theta(m \log^2 m)$ samples with high probability the number of samples $l$ where $X^{(i)}_1 > 0$ is $\Omega(qk/m) = \Omega(k \log^2 m)$. 

\begin{definition}
Let $M_i$ be the matrix $(0,B_{\Omega^{(i)}\backslash \{1\}})B_{\Omega^{(i)}}^+$.
\end{definition}

\noindent Then we can write $Q_{i,1} = (I-M_i) Y^{(i)}$. Let us establish some basic properties of $M_i$ that we will need in our analysis:

\begin{claim}
$M_i$ has the following properties: (1) $M_i B_1 = 0$ (2) For all $j\in \Omega^{(i)}\backslash\{1\}$, $M_i B_j = B_j$ and (3) $\|M_i\| \le 1 + \xi$
\end{claim}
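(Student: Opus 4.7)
The plan is to unpack the definition $M_i = (0, B_{\Omega^{(i)}\setminus\{1\}}) B_{\Omega^{(i)}}^+$ and read off parts (1) and (2) directly from the identity $P^+P = I$, then bound the operator norm in (3) via the near-orthonormality of $B_{\Omega^{(i)}}$'s columns.

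Let $P = B_{\Omega^{(i)}}$ and let $P'$ denote $P$ with the column corresponding to index $1 \in \Omega^{(i)}$ replaced by the zero vector, so $M_i = P' P^+$. Since $P^+P = I$, we have $P^+ B_1 = e_1$ and $P^+ B_j = e_j$ for any $j \in \Omega^{(i)} \setminus \{1\}$ (indexing the coordinates of $e_\cdot$ by the elements of $\Omega^{(i)}$). Applying $P'$ to $e_1$ returns its first column, which is zero by construction, giving (1). Applying $P'$ to $e_j$ for $j \ne 1$ returns the $j$-th column of $P'$, which is exactly $B_j$, giving (2).

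For (3) I would use Claim~\ref{claim:incoherent}, which says that the columns of $B$ are $\mu'$-incoherent with $\mu'/\sqrt{n} \le 1/(10k)$. The Gram matrix $P^T P$ is then a $k'\times k'$ matrix (with $k' = |\Omega^{(i)}| \le k$) whose diagonal entries are $1$ and off-diagonal entries are at most $\mu'/\sqrt{n}$ in absolute value. By the Gershgorin Disk Theorem, every eigenvalue of $P^T P$ lies in $[1 - k\mu'/\sqrt{n},\, 1 + k\mu'/\sqrt{n}]$, so
\[
\|P\|^2 \le 1 + k\mu'/\sqrt{n}, \qquad \|P^+\|^2 = \|(P^T P)^{-1}\| \le \frac{1}{1 - k\mu'/\sqrt{n}}.
\]
The same reasoning applied to $P'$ (whose nonzero columns are a subset of those of $P$, hence are also $\mu'$-incoherent) gives $\|P'\|^2 \le 1 + k\mu'/\sqrt{n}$. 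Combining,
\[
\|M_i\| \le \|P'\| \cdot \|P^+\| \le \sqrt{\frac{1 + k\mu'/\sqrt{n}}{1 - k\mu'/\sqrt{n}}}.
\]
Under the hypothesis $\mu/\sqrt{n} < 1/(k\log k)$, the quantity $k\mu'/\sqrt{n}$ is $o(1)$, so the right-hand side is $1 + \xi$ for an arbitrarily small constant $\xi$, as desired.

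I do not anticipate a real obstacle here: (1) and (2) are formal consequences of $P^+P = I$, and the only quantitative step is the Gershgorin bound on the Gram matrix, which is already the workhorse used elsewhere in the paper (e.g., in the proof of Lemma~\ref{lemma:var}). The only thing to be a little careful about is that $P'$ is not simply the restriction of $P$ to a subset of columns but has a literal zero column; however, zeroing a column can only decrease the spectral norm, so the incoherence bound on $P$ transfers to $P'$ without loss.
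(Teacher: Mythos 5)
Your proof is correct and follows the same route as the paper: (1) and (2) come from reading $P^+P = I$ against the zeroed column of $P'$, and (3) is the Gershgorin bound on the Gram matrix of the (incoherent) columns of $B$. The paper states these steps as a one-liner, and you have simply filled in the details.
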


\begin{proof}
The first and second property follow immediately from the definition of $M_i$, and the third property follows from the Gershgorin disk theorem. 
\end{proof}

For the time being, we will consider the vector $\hat{B}_1 = \sum_{i=1}^l Q_{i,1} / \sum_{i=1}^l X^{(i)}_1$. We cannot compute this vector directly (note that $\hat{B}_1$ and $B'_1$ are in general different) but first we will show that $\hat{B}_1$ and $A_1$ are suitably close. To accomplish this, we will first find a convenient expression for the error:

\begin{lemma}\label{lemma:noisypower}
\begin{equation}
A_1 - \hat{B}_1 = \sum_{i=1}^l \frac{X^{(i)}_1}{\sum_{i=1}^l X^{(i)}_1}M_i(A_1 - B_1) - \frac{\sum_{i=1}^l \sum_{j\in \Omega^{(i)}\backslash\{1\}} (I-M_i) (A_j-B_j) X^{(i)}_j}{\sum_{i=1}^l X^{(i)}_1}.\label{eqn:iter}
\end{equation}
\end{lemma}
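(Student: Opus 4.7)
The plan is straightforward algebraic expansion using the projection-like properties of $M_i$ established in the preceding claim, namely $(I-M_i)B_1 = B_1$ and $(I-M_i)B_j = 0$ for every $j \in \Omega^{(i)}\setminus\{1\}$. These identities let us isolate the $A_1$-contribution from each sample and collect the remaining terms as error.

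First I would expand $Y^{(i)} = \sum_{j \in \Omega^{(i)}} A_j X^{(i)}_j$ and write each true dictionary column as $A_j = B_j + (A_j - B_j)$. Substituting into $Q_{i,1} = (I-M_i)Y^{(i)}$ gives
\begin{equation*}
Q_{i,1} \;=\; (I-M_i)\sum_{j \in \Omega^{(i)}} B_j X^{(i)}_j \;+\; (I-M_i)\sum_{j \in \Omega^{(i)}} (A_j - B_j) X^{(i)}_j.
\end{equation*}
The first sum collapses dramatically: by the properties of $M_i$, only the $j=1$ term survives and it equals $B_1 X^{(i)}_1$. In the second sum, I would split off $j=1$ and write the $j=1$ term as $(I-M_i)(A_1-B_1)X^{(i)}_1 = (A_1 - B_1)X^{(i)}_1 - M_i(A_1-B_1)X^{(i)}_1$. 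Combining $B_1 X^{(i)}_1 + (A_1-B_1)X^{(i)}_1 = A_1 X^{(i)}_1$ yields
\begin{equation*}
Q_{i,1} \;=\; A_1 X^{(i)}_1 \;-\; M_i(A_1-B_1)X^{(i)}_1 \;+\; \sum_{j \in \Omega^{(i)}\setminus\{1\}} (I-M_i)(A_j-B_j) X^{(i)}_j.
\end{equation*}

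Then I would sum this identity over $i = 1, \ldots, l$ and divide by $\sum_{i=1}^l X^{(i)}_1$. The first term contributes $A_1$, so rearranging $A_1 - \hat{B}_1$ produces exactly the right-hand side of~(\ref{eqn:iter}).

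There is no real obstacle here; the content of the lemma is just the book-keeping identity that will later be interpreted as a noisy power-method step. The only subtle point worth verifying is that division by $\sum_{i=1}^l X^{(i)}_1$ is legitimate, which follows from Claim~\ref{claim:signset} (the cluster $\mathcal{C}^+_1$ consists exactly of samples with $X^{(i)}_1 > 0$, and these values lie in $[1,C]$, so the denominator is at least $l > 0$). The heavy lifting — bounding the two terms on the right-hand side to derive the contraction $\|B'_1 - A_1\| \le (1-\delta)\epsilon$ — is deferred to subsequent lemmas and is not needed here.
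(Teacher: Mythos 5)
Your proof is correct and is essentially the same computation as the paper's: both are direct algebraic expansions that exploit $(I-M_i)B_1 = B_1$ and $(I-M_i)B_j = 0$ for $j\in\Omega^{(i)}\setminus\{1\}$; the paper happens to organize the bookkeeping by first computing $\hat{B}_1 - B_1$ and then subtracting from $A_1 - B_1$, while you expand $Q_{i,1}$ to isolate $A_1 X^{(i)}_1$ directly, but these differ only in presentation.
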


\begin{proof}
The proof is mostly carefully reorganizing terms and using properties of $M_i$'s to simplify the expression.

Let us first compute $\hat{B}_1 - B_1$:
\begin{align*}
\hat{B}_1 - B_1 & = \frac{\sum_{i=1}^l X^{(i)}_1((I-M_i)A_1 - B_1) + \sum_{i=1}^l \sum_{j\in \Omega^{(i)}\backslash\{1\}} (I-M_i) A_j X^{(i)}_j}{\sum_{i=1}^l X^{(i)}_1}\\
& = \sum_{i=1}^l \frac{ X^{(i)}_1}{\sum_{i=1}^l X^{(i)}_1}(I-M_i)(A_1 - B_1) + \frac{\sum_{i=1}^l \sum_{j\in \Omega^{(i)}\backslash\{1\}} (I-M_i) (A_j-B_j) X^{(i)}_j}{\sum_{i=1}^l X^{(i)}_1}.
\end{align*}
The last equality uses the first and second properties of $M_i$ from the above claim. Consequently we have
\begin{align*}
A_1 - \hat{B}_1 & = (A_1 - B_1) - (\hat{B}_1 - B_1) \\
& = \sum_{i=1}^l \frac{X^{(i)}_1}{\sum_{i=1}^l X^{(i)}_1}M_i(A_1 - B_1) - \frac{\sum_{i=1}^l \sum_{j\in \Omega^{(i)}\backslash\{1\}} (I-M_i) (A_j-B_j) X^{(i)}_j}{\sum_{i=1}^l X^{(i)}_1}.
\end{align*}
And this is our desired expression.
\end{proof}

We will analyze the two terms in the above equation separately. The second term is the most straightforward to bound, since it is the sum of independent vector-valued random variables (after we condition on the support $\Omega^{(i)}$ of each sample in $\mathcal{C}^+_1$. 


\begin{claim}\label{claim:vectorb}
If $l > \Omega(k\log^2 m)$, then with high probability the second term of Equation (\ref{eqn:iter}) is bounded by $\epsilon/100$.
\end{claim}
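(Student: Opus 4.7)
The plan is to view the numerator as a sum of $l$ independent mean-zero random vectors and apply a vector Bernstein inequality, then lower-bound the denominator deterministically.

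\textbf{Conditioning.} I would first condition on all supports $\{\Omega^{(i)}\}_{i=1}^q$ and on the signs of the first coordinates $\{\mathrm{sign}(X^{(i)}_1)\}_{i=1}^q$. Under this conditioning, the set $\mathcal{C}_1^+$, its cardinality $l$, and every matrix $M_i$ are all determined, since $M_i$ depends only on $\Omega^{(i)}$. By the conditional-independence property in Definition~\ref{def:gamma}, the remaining values $\{X^{(i)}_j : i \in \mathcal{C}_1^+,\; j \in \Omega^{(i)} \setminus \{1\}\}$ are independent, mean zero, and bounded in absolute value by $C = O(1)$.

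\textbf{Bernstein setup.} Let $V_i = \sum_{j \in \Omega^{(i)} \setminus \{1\}}(I - M_i)(A_j - B_j)\, X^{(i)}_j$, so the numerator is $\sum_{i=1}^l V_i$, a sum of independent mean-zero random vectors. Using the spectral bound $\|I - M_i\| \le 2 + \xi$ from the preceding claim, together with $\|A_j - B_j\| \le \epsilon$ and $|\Omega^{(i)}| \le k$, one gets the deterministic bound $\|V_i\| \le R := O(k\epsilon)$. For the variance, the independence and zero-mean of the values kill all cross terms, yielding $E[\|V_i\|^2] \le \|I - M_i\|^2 \sum_j \|A_j - B_j\|^2\, E[(X^{(i)}_j)^2] = O(k\epsilon^2)$, and hence $\sigma^2 := \sum_i E[\|V_i\|^2] = O(l k \epsilon^2)$.

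\textbf{Applying Bernstein and handling the denominator.} The vector Bernstein inequality (Theorem~12 of \cite{G}, as already used in the proof of Theorem~\ref{thm:avgmain}) with $t = l\epsilon/100$ then yields $\|\sum_i V_i\| \le l\epsilon/100$ with failure probability $\exp(-\Omega(\min(t^2/\sigma^2,\, t/R))) = \exp(-\Omega(l/k))$, which is inverse-polynomial in $m$ whenever $l = \Omega(k\log m)$. Meanwhile the denominator satisfies $\sum_{i=1}^l X^{(i)}_1 \ge l$, since each $X^{(i)}_1 \ge 1$. A standard Chernoff bound shows that $l = \Theta(qk/m) = \Omega(k\log^2 m)$ with high probability when $q = \Omega(m\log^2 m)$, which is more than the $\Omega(k\log m)$ needed. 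Dividing numerator by denominator gives the claimed $\epsilon/100$ bound, and a union bound over the $m$ columns extends it to all columns simultaneously.

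\textbf{Main obstacle.} The only real subtlety is setting up the conditioning correctly so that $\mathcal{C}_1^+$ and each $M_i$ become deterministic while the ``active'' coordinate values remain independent and mean zero; once this is done, the variance and uniform bounds feeding into Bernstein are essentially forced by the incoherence hypothesis and $\|A_j - B_j\| \le \epsilon$. The randomness of $l$ itself is handled cleanly by the Chernoff step. Everything else is routine bookkeeping.
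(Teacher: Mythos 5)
Your proof is correct and follows essentially the same route as the paper: fix the supports so that $\mathcal{C}_1^+$ and the matrices $M_i$ become deterministic, observe that the remaining values $X^{(i)}_j$ (for $j\neq 1$) are independent and mean zero by Definition~\ref{def:gamma}, apply vector Bernstein to the numerator, and bound the denominator by $l$ since each $X^{(i)}_1\geq 1$. The only cosmetic difference is the grouping: the paper treats the numerator as a sum of $lk$ vectors each of norm $O(C\epsilon)$, whereas you group by sample into $l$ vectors $V_i$ of norm $O(k\epsilon)$ and compute $E[\|V_i\|^2]=O(k\epsilon^2)$; both give the same final bound after normalization, and your explicit treatment of the conditioning step is a useful clarification rather than a departure.
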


\begin{proof}
The denominator is at least $l$ and the numeratoris the sum of at most $lk$ independent random vectors with mean zero, and whose length is at most $3C\epsilon$. We can invoke the vector Bernstein's inequality~\cite{G}, and conclude that the sum is bounded by $O(C\sqrt{lk}\log m \epsilon)$ with high probability. After normalization the second term is bounded by $\epsilon / 100$.
\end{proof}

All that remains is to bound the first term. Note that the coefficient of $\|M_i (A_1-B_1)\|$ is independent of the support, and so the first term will converge to its expectation \-- namely $\E[\|M_i (A_1-B_1)\|]$. So it suffices to bound this expectation. 


\begin{lemma}
$\E[\|M_i (A_1-B_1)\|] \le (1 - \delta) \epsilon.$
\end{lemma}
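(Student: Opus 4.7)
The plan is to get an explicit expression for $M_i(A_1 - B_1)$, bound $\E\|M_i(A_1-B_1)\|^2$ using incoherence of $B$ together with the randomness of the support $\Omega^{(i)}$, and then convert to a bound on $\E\|M_i(A_1-B_1)\|$ via Jensen's inequality. Write $w \defeq A_1 - B_1$, $\Omega \defeq \Omega^{(i)}$, and $c \defeq B_\Omega^+ w$. By the definition of $M_i$ we have $M_i w = B_{\Omega \setminus \{1\}}\, c_{-1}$, where $c_{-1}$ drops the first coordinate of $c$.

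Step one is to reduce $\|M_i w\|^2$ to $\sum_{j \in \Omega} \langle B_j, w\rangle^2$. The key tool is Claim~\ref{claim:incoherent} and the Gershgorin disc theorem: since $B$ is $\mu'$-incoherent with $\mu'/\sqrt{n} \le 1/(10k)$, we have $\|B_\Omega^T B_\Omega - I\| \le (k-1)\mu'/\sqrt{n}$, which is well below $1$. Hence both $B_\Omega^+$ and $B_{\Omega \setminus \{1\}}$ act as near-isometries (up to $1+o(1)$ factors), giving
\[
\|M_i w\|^2 \;\le\; (1+o(1))\,\|c\|^2 \;\le\; (1+o(1))\,\|B_\Omega^T w\|^2 \;=\; (1+o(1)) \sum_{j \in \Omega} \langle B_j, w\rangle^2.
\]

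Step two is to take expectation over $\Omega$ (conditioned on $1 \in \Omega$, which holds since $i \in \mathcal{C}^+_1$ by Claim~\ref{claim:signset}). The $j = 1$ contribution is $\langle B_1, w\rangle = -\tfrac{1}{2}\|A_1 - B_1\|^2$, so its square is $O(\epsilon^4) = o(\epsilon^2)$. For $j \ne 1$, the bounded $2$-wise moment assumption of Definition~\ref{def:gamma} gives $\Pr[j \in \Omega \mid 1 \in \Omega] \le O(k/m)$, so
\[
\E \sum_{j \in \Omega \setminus \{1\}} \langle B_j, w\rangle^2 \;\le\; O(k/m) \sum_{j \ne 1} \langle B_j, w\rangle^2 \;\le\; O(k/m)\,\|B\|^2\,\epsilon^2.
\]
A further Gershgorin bound on $B^T B$ (unit diagonal, off-diagonals $\le \mu'/\sqrt{n}$) gives $\|B\|^2 \le 1 + (m-1)\mu'/\sqrt{n}$. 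Plugging in,
\[
\E\|M_i w\|^2 \;\le\; O\!\left(k/m + k\mu'/\sqrt{n}\right)\epsilon^2 \;\le\; O(1/\log k)\,\epsilon^2,
\]
using the hypothesis $\mu/\sqrt{n} < 1/(k\log k)$ and $k/m = o(1)$. Jensen's inequality then yields $\E\|M_i w\| \le \sqrt{\E\|M_i w\|^2} \le (1-\delta)\epsilon$ for a universal constant $\delta > 0$.

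The main obstacle I foresee is carefully tracking the compounding $(1+o(1))$ factors so that the final constant in front of $\epsilon$ is genuinely strictly less than $1$ — each of ``$B_\Omega^+$ is near-isometric'', ``$B_{\Omega \setminus \{1\}}$ is near-isometric'', and ``$B^T B \approx I$ up to off-diagonals'' contributes a small multiplicative slack, and these must be combined without blowup. Conceptually the lemma is plausible because $M_i$ is close to the orthogonal projection onto a ``thin'' random $k$-dimensional subspace $\spn(B_{\Omega \setminus \{1\}})$ of $\R^n$, so a fixed direction $w$ has only roughly a $\sqrt{k/n}$-fraction of its length inside it on average; the above bookkeeping simply makes this heuristic quantitative using the incoherence of $B$ and the bounded-moment structure of $\Omega$.
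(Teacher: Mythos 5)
Your approach is genuinely different from the paper's, and the high-level strategy (bound $\E\|M_iw\|^2$ via the near-isometry of $B_\Omega$ and the moment condition on $\Omega$, then apply Jensen) is reasonable. The paper instead decomposes $w = A_1 - B_1$ into its component along $B_1$ (which has length $\le \epsilon^2/2$, so is negligible) and its component $y$ orthogonal to $B_1$, and then uses a pairing trick: for two random samples whose supports $\Omega^{(1)},\Omega^{(2)}$ intersect only at $\{1\}$ (an event of probability $\ge 2/3$), the concatenation $B_S^T y$ with $S = (\Omega^{(1)}\cup\Omega^{(2)})\setminus\{1\}$ yields $\|M_1 y\| + \|M_2 y\| \le (1+\xi)\sqrt{2}\,\|y\|$, which after averaging against the remaining $1/3$ probability mass gives a coefficient $(2/3)(\sqrt{2}/2) + (1/3) \approx 0.805$, strictly less than $1$ essentially for free. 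This structural argument does not require $\E\|M_iw\|$ to be \emph{much} smaller than $\epsilon$ — only a constant factor smaller — and so it is insensitive to the precise implicit constants in the bounded-moment assumption.

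There is, however, a concrete error in your calculation. You bound $\sum_{j\ne 1}\langle B_j,w\rangle^2 \le \|B\|^2\epsilon^2$ and then $\|B\|^2 \le 1 + (m-1)\mu'/\sqrt{n}$, which is correct, but the $\mu'$ appearing here is the incoherence parameter of the approximate dictionary $B$ (Claim~\ref{claim:incoherent}), \emph{not} of the true $A$. You then write ``$\le O(1/\log k)\,\epsilon^2$, using the hypothesis $\mu/\sqrt{n} < 1/(k\log k)$'', silently swapping $\mu'$ for $\mu$. But $\mu'/\sqrt{n} \le \mu/\sqrt{n} + 2\epsilon + \epsilon^2$, and the initial error $\epsilon$ can be as large as $\epsilon_0 = 1/(100k)$, so $k\mu'/\sqrt{n} \le 1/\log k + 1/50 + o(1/k)$, which is a fixed constant (roughly $1/50$), not $O(1/\log k)$. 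Consequently your bound is $\E\|M_iw\|^2 \le C\cdot\epsilon^2$ for some constant $C$ depending on the bounded-moment constant $c$, the constant hidden in $\Pr[X_j\ne 0] = \Theta(k/m)$, the slack in the ``near-isometry'' factors (each $\approx 1.1$, not $1+o(1)$, since $(k-1)\mu'/\sqrt{n}$ is only $\le 1/10$), and the factor $1/50$. Whether this $C$ is actually $< 1$ is not verified and does not follow from the stated hypotheses alone — e.g.\ if the bounded-moment constant $c$ in Definition~\ref{def:gamma} were, say, $5$, your chain of inequalities would give $C > 1$ and Jensen would yield nothing. To repair your route you would either need to make the dependence on $c$ and $\Theta(\cdot)$ explicit and impose tighter numerical hypotheses, or weaken $\epsilon_0$ by a sufficiently large constant so that the $k\epsilon$ contribution to $k\mu'/\sqrt{n}$ becomes genuinely small; the paper's pairing argument sidesteps this bookkeeping entirely.
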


\begin{proof}
We will break up $A_1-B_1$ onto its component ($x$) in the direction $B_1$ and its orthogonal component ($y$) in $B_1^\perp$. First we bound the norm of $x$: $$\|x\| = |\langle A_1-B_1,B_1\rangle | = |\langle A_1-B_1,A_1-B_1\rangle|/2 \le \epsilon^2/2$$
Next we consider the component $y$. Consider the supports $\Omega^{(1)}$ and $\Omega^{(2)}$ of two random samples from $\mathcal{C}^+_1$. These sets certainly intersect at least once, since both contain $\{1\}$. Yet with probability at least $2/3$ this is their only intersection (e.g. see Claim~\ref{claim:intersect}). If so, let $S = (\Omega^{(1)}\cup \Omega^{(2)})\backslash \{1\}$. Recall that $\|B_S^T \| \leq 1 + \xi$. However $B_S^Ty$ is the concatenation of $B_{\Omega^{(1)}}^T y$ and $B_{\Omega^{(2)}}^Ty$ and so we conclude that $\|B_{\Omega^{(1)}}^T y\| + \|B_{\Omega^{(2)}}^Ty\| \leq (1+ \xi)\sqrt{2}$. Since the spectral norm of $(0,B_{\Omega^{(i)}\backslash \{1\}})$ is bounded, we conclude that $\|M_1 y\| + \|M_2y\| \leq (1+ \xi)\sqrt{2}$. This implies that $$\E[\|M_i (A_1-B_1)\|] \leq \E[\|M_i x\|] + \E[\|M_i y\|] \leq (2/3) (1+ \xi)(\sqrt{2}/2)\epsilon + (1/3) (1+ \xi)\epsilon + \epsilon^2/2$$
And this is indeed at most $(1-\delta)\epsilon$ which concludes the proof of the lemma.
\end{proof}

Combining the two claims, we know that with high probability $\hat{B}_1$ has distance at most $(1-\delta)\epsilon$ to $A_1$. However, $B'_1$ is not equal to $\hat{B}_1$ (and we cannot compute $\hat{B}_1$ because we do not know the normalization factor). The key observation here is $\hat{B}_1$ is a multiple of $B'_1$, the vector $B'_1$ and $A_1$ all have unit norm, so if $\hat{B}_1$ is close to $A_1$ the vector $B'_1$ must also be close to $A_1$.

\begin{claim}\label{claim:vectors}
If $x$ and $y$ are unit vectors, and $x'$ is a multiple of $x$ then $\|x' - y\| \leq \epsilon < 1$ implies that $\|x - y\| \leq  \epsilon \sqrt{1 + \epsilon^2}$
\end{claim}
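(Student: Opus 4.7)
\textbf{Proof plan for Claim~\ref{claim:vectors}.} The plan is to reduce everything to the inner product $\langle x, y\rangle$, which controls $\|x-y\|^2 = 2 - 2\langle x, y\rangle$. First I would write $x' = c x$ for some scalar $c$, and observe that for any fixed pair of unit vectors $x, y$, the function $c \mapsto \|cx - y\|^2 = c^2 - 2c\langle x,y\rangle + 1$ is minimized at $c^* = \langle x, y\rangle$, with minimum value $1 - \langle x, y\rangle^2$. Therefore
\begin{equation*}
\epsilon^2 \;\geq\; \|x' - y\|^2 \;\geq\; 1 - \langle x, y\rangle^2,
\end{equation*}
which gives $\langle x, y\rangle^2 \geq 1-\epsilon^2$. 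In the intended use of the claim $x'$ is a \emph{positive} multiple of $x$ (since $B'_1$ is obtained from $\hat B_1$ by normalizing its length), so $c>0$; together with the inequality $c\langle x,y\rangle \geq c^2/2$ implied by $\|cx-y\|<1$, this forces $\langle x, y\rangle > 0$ and hence $\langle x, y\rangle \geq \sqrt{1-\epsilon^2}$.

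Next, I would plug this bound back into the identity $\|x-y\|^2 = 2 - 2\langle x, y\rangle$ to get
\begin{equation*}
\|x-y\|^2 \;\leq\; 2 - 2\sqrt{1-\epsilon^2} \;=\; \frac{2\epsilon^2}{1 + \sqrt{1-\epsilon^2}},
\end{equation*}
where the last equality is obtained by multiplying numerator and denominator by $1+\sqrt{1-\epsilon^2}$. The remaining step is to verify the elementary inequality $\dfrac{2}{1+\sqrt{1-\epsilon^2}} \leq 1 + \epsilon^2$ for $\epsilon \in [0,1)$, which after substituting $t = \sqrt{1-\epsilon^2}$ (so $\epsilon^2 = 1-t^2$) is equivalent to $2 \leq (2-t^2)(1+t) = 2 + 2t - t^2 - t^3$, i.e.\ $t(1+t) \leq 2$, which is immediate for $t \in [0,1]$. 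Combining these two bounds yields $\|x-y\|^2 \leq \epsilon^2(1+\epsilon^2)$, and taking square roots gives the claim.

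There is essentially no obstacle here; the only subtlety is ensuring that we pick the correct sign for $\langle x, y\rangle$, which is handled by the fact that $c>0$ in the application (or, more generally, by restricting the statement to positive multiples). The rest is elementary manipulation of the identity $\|x-y\|^2 = 2-2\langle x,y\rangle$ together with the one-dimensional minimization that relates $\|cx-y\|$ to $\langle x,y\rangle$.
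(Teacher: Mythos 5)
Your proof is correct and is essentially the paper's argument, just phrased algebraically via the inner product rather than geometrically via the angle $\theta$: your observation that $\min_c \|cx - y\|^2 = 1 - \langle x,y\rangle^2$ is exactly the paper's step $\sin\theta \le \|x' - y\|$, and $\|x-y\|^2 = 2 - 2\langle x,y\rangle$ is its $\sin^2\theta + (1-\cos\theta)^2$. One place where you are actually more careful than the paper: when passing from $\langle x,y\rangle^2 \ge 1-\epsilon^2$ to $\langle x,y\rangle \ge \sqrt{1-\epsilon^2}$, a sign must be fixed. The paper silently takes the positive root of $\cos\theta$, whereas you note explicitly that this is justified because $x'$ is a \emph{positive} multiple of $x$ in the application; as literally stated ("a multiple"), the claim would be false (e.g.\ $x' = -x$, $y = x$, $\epsilon$ small). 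The final elementary inequality is closed slightly differently --- the paper uses $1 - a \le \sqrt{1-a}$ with $a = \epsilon^2$ to get $(1-\sqrt{1-\epsilon^2})^2 \le \epsilon^4$, while you rationalize $2 - 2\sqrt{1-\epsilon^2}$ and check a cubic in $t=\sqrt{1-\epsilon^2}$ --- but these are cosmetic variants of the same computation.
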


\begin{proof}
We have that 
$\|x - y\|^2 = \sin^2 \theta + (1 - \cos \theta)^2$ where $\theta$ is the angle between $x$ and $y$. Note that $\sin \theta \leq \|x' - y\| \leq \epsilon$ so hence
$\|x - y\| \leq \sqrt{ \epsilon^2 + (1 - \sqrt{1-\epsilon^2})^2}$. Note that for $ 0 \leq a \leq 1$ we have $1 - a \leq \sqrt{1 - a}$ and this implies the claim. 
\end{proof}


This concludes the proof of Theorem~\ref{thm:inter}. To bound the running time, observe that for each sample, the main computations involve computing the pseudo-inverse of a $n\times k$ matrix, which takes $O(nk^2)$ time.

\section{A Higher Order Algorithm}\label{asec:higher}

\label{sec:higher}

\begin{fragment*}[t]
\caption{
\label{alg:ocluster2}{\sc OverlappingCluster2}, \textbf{Input: } $p$ samples $Y^{(1)}, Y^{(2)}, ..., Y^{(p)}$, integer $\ell$ \vspace*{0.01in}
}

\begin{enumerate} \itemsep 0pt
\small 
\item Compute a graph $G$ on $p$ nodes where there is an edge between $i$ and $j$ iff $|\langle Y^{(i)}, Y^{(j)} \rangle | > 1/2$
\item Set $T = \frac{pk}{Cm2^{\ell}}$
\item Repeat $\Omega(k^{\ell-2} m \log^2 m)$ times:
\item $\qquad$ Choose a random node $u$ in $G$, and $\ell-1$ neighbors $u_1, u_2, ... u_{\ell-1}$
\item $\qquad$ If $|\Gamma_G(u) \cap \Gamma_G(u_1) \cap ... \cap \Gamma_G(u_{\ell-1})| \geq T $
\item $\qquad$ $\qquad$  Set $S_{u_1, u_2, ... u_{\ell-1}} = \{ w \mbox{ s.t. } |\Gamma_G(u) \cap \Gamma_G(u_1) \cap ... \cap \Gamma_G(w)| \geq T \} \cup \{u_1, u_2, ... u_{\ell-1}\}$
\item Delete any set $S_{u_1, u_2, ... u_{\ell-1}}$ if $u_1, u_2, ... u_{\ell-1}$ are contained in a strictly smaller set $S_{v_1, v_2, ... v_{\ell-1}}$
\item Output the remaining sets $S_{u_1, u_2, ... u_{\ell-1}}$
\end{enumerate} 

\end{fragment*}

Here we extend the algorithm {\sc OverlappingCluster} presented in Section~\ref{sec:overlap} to succeed even when $k \leq c \min(m^{1/2 - \eta},$ $\sqrt{n}/\mu \log n)$. The premise of {\sc OverlappingCluster} is that we can distinguish whether or not a triple of samples $Y^{(1)},Y^{(2)}, Y^{(3)}$ has a common intersection based on their number of common neighbors in the connection graph. However for $k = \omega(m^{2/5})$ this is no longer true! But we will instead consider higher-order groups of sets. In particular, for any $\eta > 0$ there is an $\ell$ so that we can distinguish whether an $\ell$-tuple of samples $Y^{(1)}, Y^{(2)}, ..., Y^{(\ell)}$ has a common intersection or not based on their number of common neighbors, and this test succeeds even for $k = \Omega(m^{1/2 - \eta})$. 

The main technical challenge is in showing that if the sets $\Omega^{(1)}, \Omega^{(2)}, ..., \Omega^{(\ell)}$ do not have a common intersection, that we can upper bound the probability that a random set $\Omega$ intersects each of them. To accomplish this, we will need to bound the number of ways of piercing $\ell$ sets $\Omega^{(1)}, \Omega^{(2)}, ..., \Omega^{(\ell)}$ that have bounded pairwise intersections by at most $s$ points (see definitions below and Lemma~\ref{lemma:pierce}), and this is the key to analyzing our higher order algorithm {\sc OverlappingCluster2}. We will defer the proofs of the key lemmas and the description of the algorithm in this section to Appendix~\ref{asec:higher}. 

Nevertheless what we need is an analogue of Claim~\ref{claim:upper} and Lemma~\ref{lemma:lower}. The first is easy, but what about an analogue of Lemma~\ref{lemma:lower}? To analyze the probability that a set $\Omega$ intersects each of the sets $\Omega^{(1)}, \Omega^{(2)}, ..., \Omega^{(\ell)}$ we will rely on the following standard definition:

\begin{definition}
Given a collection of sets $\Omega^{(1)}, \Omega^{(2)}, ..., \Omega^{(\ell)}$, the {\em piercing number} is the minimum number of points $p_1, p_2, ..., p_r$ so that each set contains at least one point $p_i$. 
\end{definition}

The notion of piercing number is well-studied in combinatorics (see e.g. \cite{Mat}). However, one is usually interested in upper-bounding the piercing number. For example, a classic result of Alon and Kleitman concerns the $(p,q)$-problem \citep{AK}: Suppose we are given a collection of sets that has the property that each choice of $p$ of them has a subset of $q$ which intersect. Then how large can the piercing number be? Alon and Kleitman proved that the piercing number is at most a fixed constant $c(p,q)$ independent of the number of sets \citep{AK}. 

However, here our interest in piercing number is not in bounding the minimum number of points needed but rather in analyzing how many ways there are of piercing a collection of sets with at most $s$ points, since this will directly yield bounds on the probability that $\Omega$ intersects each of $\Omega^{(1)}, \Omega^{(2)}, ..., \Omega^{(\ell)}$. We will need as a condition that each pair of sets has bounded intersection, and this holds in our model with high-probability. 

\begin{claim}
With high probability, the intersection of any pair $\Omega^{(1)}, \Omega^{(2)}$ has size at most $Q$
\end{claim}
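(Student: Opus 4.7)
The plan is to derive this directly from the bounded moment assumption on the sparsity pattern, combined with a straightforward union bound over candidate intersection sets and over pairs of samples. First I would fix a pair of samples $i \neq j$ and a candidate subset $S \subseteq [m]$ of size $Q$. Since the sparsity patterns of distinct samples are independent, the event $S \subseteq \Omega^{(i)} \cap \Omega^{(j)}$ factors into independent events for each sample, so
\[
\Pr\bigl[S \subseteq \Omega^{(i)} \cap \Omega^{(j)}\bigr] = \Pr\bigl[S \subseteq \Omega^{(i)}\bigr] \cdot \Pr\bigl[S \subseteq \Omega^{(j)}\bigr].
\]

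Next I would invoke the bounded $\ell$-wise moment property of the distribution (Definition~\ref{def:gamma}), which we may apply whenever $Q \leq \ell$. This bounds each of the two factors by $c^{Q} \prod_{t \in S}\Pr[X_t \neq 0] = O\!\left((k/m)^{Q}\right)$, giving
\[
\Pr\bigl[S \subseteq \Omega^{(i)} \cap \Omega^{(j)}\bigr] \leq O\!\left(\left(k/m\right)^{2Q}\right).
\]
Then I would union bound over the $\binom{m}{Q} \le (em/Q)^{Q}$ possible $Q$-subsets and over the $\binom{p}{2} \le p^{2}$ pairs of samples, yielding an overall failure probability at most
\[
p^{2}\,\bigl(em/Q\bigr)^{Q}\,O\!\bigl((k/m)^{2Q}\bigr) \;=\; p^{2}\cdot O\!\left(\tfrac{k^{2}}{Qm}\right)^{Q}.
\]

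Finally I would choose $Q$ to be a constant large enough (depending on $\eta$ and the polynomial degree of $p$ in $m,n$) so that the above expression falls below $n^{-\Delta}$ for any desired $\Delta$. In the regime of this section the sparsity satisfies $k \leq cm^{1/2-\eta}$, so $k^{2}/m \leq c^{2}m^{-2\eta}$ and each factor in the $Q$-th power contributes a $m^{-2\eta}$ saving; since $p = \mathrm{poly}(m)$, a constant $Q = O(1/\eta)$ suffices. The only subtlety to verify is the compatibility condition $Q \leq \ell$: this amounts to choosing the moment order $\ell$ used by the higher-order algorithm at least as large as the constant $Q$ demanded above, which is harmless since $\ell$ is already a tunable parameter depending on $\eta$. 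No other step is delicate -- the argument is essentially a direct moment bound plus counting.
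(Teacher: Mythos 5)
The paper states this claim without a proof of its own; the only justification is the informal remark at the end of Section~\ref{sec:overlap}, which asserts that bounded higher-order moments yield a $(k^2/m)^Q$ bound on the probability of a $Q$-wise intersection. Your argument is essentially that remark made precise: factor the event over the two independent samples, bound each factor by $c^Q\prod_{t\in S}\Pr[X_t\ne 0]$, and union-bound over $Q$-subsets and over all $\binom{p}{2}$ pairs. So you have matched what the paper has in mind.

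One place where you are slightly loose is the compatibility condition at the end. You need $p^2\cdot O(k^2/m)^Q$ to be negligibly small, where $p$ is itself a polynomial in $m$ whose degree grows with $\ell$ (the algorithm needs $p = \Omega(m^2/k^2\log m + k^{\ell-2}m\log^2 m)$). Unwinding this in the regime $k \le cm^{(\ell-1)/(2\ell-1)}$, where $k^2/m \le c^2 m^{-1/(2\ell-1)}$, gives $Q = \Theta(\ell^2)$ rather than $Q = O(1/\eta)\approx\ell$, so the moment order you need strictly exceeds the $\ell$ that sets the sparsity threshold. Writing ``tune the algorithm's $\ell$ up to $Q$'' does not quite resolve this, because enlarging $\ell$ enlarges $p$ and hence enlarges the required $Q$ again. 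The clean way to state the hypothesis is that the distribution has bounded moments of some fixed constant order $Q=Q(\ell)$ depending on $\ell$ (this is what the paper's phrase ``bounded higher order moment'' is gesturing at, and it holds automatically for all constant orders in the canonical random-$k$-subset model). This is a small presentational gap rather than a broken step, and the paper is equally informal at this point.
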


\begin{definition}
We will call a set of $\ell$ sets a $(k, Q)$ family if each set has size at most $k$ and  the intersection of each pair of sets has size at most $Q$.
\end{definition}

\begin{lemma}\label{lemma:pierce}
The number of ways of piercing $(k, Q)$ family (of $\ell$ sets) with $s$ points is at most $(\ell k)^s$. And crucially if $\ell \geq s + 1$, then the number of ways of piercing it with $s$ points is at most $Qs(s+1)(\ell k)^{s-1}$.
\end{lemma}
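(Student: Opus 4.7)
The first bound is immediate: every piercing point must lie in the union $U = \bigcup_{i=1}^{\ell} \Omega^{(i)}$, which has size at most $\ell k$ since each $\Omega^{(i)}$ has size at most $k$. So the number of choices for an $s$-tuple of piercing points is at most $(\ell k)^s$. The real content is the second (sharper) bound, which exploits both the hypothesis $\ell \ge s+1$ and the pairwise-intersection bound $Q$.

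The key idea is a pigeonhole argument restricted to the first $s+1$ sets. Any piercing $\{p_1,\dots,p_s\}$ must hit each of $\Omega^{(1)},\dots,\Omega^{(s+1)}$, but only $s$ points are available. Therefore there must exist two indices $1 \le i < j \le s+1$ and a piercing point $p$ such that $p \in \Omega^{(i)} \cap \Omega^{(j)}$. The plan is to charge each piercing to such a triple $(i,j,p)$; the pigeonhole guarantees that at least one valid triple exists for every piercing, so the number of piercings is at most the number of such charges.

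Now I count the charges directly. There are $\binom{s+1}{2} \le s(s+1)/2$ choices for the pair $(i,j)$. Once $(i,j)$ is fixed, the shared point $p$ lies in $\Omega^{(i)} \cap \Omega^{(j)}$, which by the $(k,Q)$-family condition has size at most $Q$, so there are at most $Q$ choices for $p$. The remaining $s-1$ piercing points can be chosen freely from $U$, contributing a factor of $(\ell k)^{s-1}$. Multiplying these out gives at most $\tfrac{s(s+1)}{2} \cdot Q \cdot (\ell k)^{s-1} \le Q s (s+1) (\ell k)^{s-1}$, which is the claimed bound.

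I do not foresee a serious technical obstacle; the whole argument hinges on the pigeonhole step, and the only subtle point is to verify that the inequality $\ell \ge s+1$ is essential (it is: without it we cannot force a ``collision'' of piercing points among the chosen sets, and the argument collapses back to the naive $(\ell k)^s$ bound). Overcounting a piercing that admits several valid triples $(i,j,p)$ only weakens the inequality and is harmless for an upper bound, so no careful deduplication is needed. The factor $s(s+1)$ in the final estimate is precisely the count of ordered index pairs in $[s+1]$, which is why this specific combinatorial prefactor appears.
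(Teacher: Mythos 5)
Your argument is correct and is essentially the paper's own proof: both rely on the pigeonhole principle applied to the first $s+1$ sets to force a piercing point into some pairwise intersection, then count the remaining $s-1$ points trivially from the union. The only cosmetic difference is that you count triples $(i,j,p)$ directly, whereas the paper first forms the set $U$ of all pairwise intersections and bounds one point by $|U|$; the numbers work out the same.
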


\begin{proof}
The first part of the lemma is the obvious upper bound. Now let us assume $\ell \geq s + 1$: Then given a set of $s$ points that pierce the sets, we can partition the $\ell$ sets into $s$ sets based on which of the $s$ points is hits the set. (In general, a set may be hit by more than one point, but we can break ties arbitrarily). Let us fix any $s+1$ of the $\ell$ sets, and let $U$ be the the union of the pairwise intersections of each of these sets. Then $U$ has size at most $Q s(s+1)$. Furthermore by the Pigeon Hole Principle, there must be a pair of these sets that is hit by the same point. Hence one of the $s$ points must belong to the set $U$, and we can remove this point and appeal to the first part of the lemma (removing any sets that are hit by this point). This concludes the proof of the second part of the lemma, too.
\end{proof}

\begin{theorem}\label{thm:ocluster2}
The algorithm {\sc OverlappingCluster2}$(\ell)$ finds an overlapping clustering where each set corresponds to some $i$ and contains all $Y^{(j)}$ for which $X^{(j)}_i \neq 0$. The algorithm runs in time $\widetilde{O}(k^{\ell-2}m p + p^2 n)$ and succeeds with high probability if $k \leq c \min(m^{(\ell -1)/(2\ell -1)}, \frac{\sqrt{n}}{\mu \log n})$ and if $p =  \Omega(m^2/k^2 \log m + k^{\ell-2} m \log^2 m)$ 
\end{theorem}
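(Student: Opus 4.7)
The plan is to generalize the proof of Theorem~\ref{thm:ocluster} from triples to $\ell$-tuples, with the piercing-number bound of Lemma~\ref{lemma:pierce} playing the role that the ad hoc combinatorial bound of Lemma~\ref{lemma:lower} played before. As before, Lemma~\ref{lemma:hw} certifies that under $k\mu\le \sqrt{n}/(C'\log n)$ every edge of $G$ corresponds to an intersecting pair of supports, so the connection graph has no false positives. The $\ell$-tuple analogue of Claim~\ref{claim:upper} is straightforward: if $i\in\bigcap_{j=1}^{\ell}\Omega^{(j)}$, then by the bounded $\ell$-wise moment assumption the probability that a fresh sample has $\Omega\cap\bigcup_j\Omega^{(j)}=\{i\}$ is at least $(k/m)\bigl(1-O(\ell k^2/m)\bigr)=\Omega(k/m)$, and on this event $Y$ is (by Lemma~\ref{lemma:hw}) a common neighbor of all the $Y^{(j)}$.

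The main new ingredient is the upper bound. If $\bigcap_j \Omega^{(j)}=\emptyset$ then for a new sample's support $\Omega$ to hit every $\Omega^{(j)}$, the set $T=\Omega\cap\bigcup_j\Omega^{(j)}$ must pierce the family, so $|T|\ge 2$. Summing over possible piercing sizes $s$, invoking Lemma~\ref{lemma:pierce} for $s\le\ell-1$ and the trivial bound for $s=\ell$:
\[
\Pr[\Omega\text{ hits each }\Omega^{(j)}]\le\sum_{s=2}^{\ell-1} Q\,s(s+1)(\ell k)^{s-1}\left(\tfrac{k}{m}\right)^{\!s} + (\ell k)^\ell\left(\tfrac{k}{m}\right)^{\!\ell}.
\]
Under the hypothesis $k\le c\,m^{(\ell-1)/(2\ell-1)}$, the $s$-th term is $O\!\bigl((\ell c)^{\Theta(s)}\cdot k/m\bigr)$, with the critical $s=\ell$ term contributing $\ell^\ell\cdot k^{2\ell-1}/m^{\ell-1}\le \ell^\ell c^{2\ell-1}\cdot k/m$. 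Choosing the constant $c$ small enough (depending on $\ell$) makes the entire sum at most $k/(Cm2^\ell)$, which is strictly less than the lower bound $\Omega(k/m)$ from the previous paragraph. Therefore, with $p=\Omega((m^2/k^2)\log m)$ samples, a Chernoff bound and a union bound over all $\ell$-tuples considered by the algorithm guarantee that the threshold $T=pk/(Cm2^\ell)$ correctly separates $\ell$-tuples with a common intersection from those without.

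Finally, to show the random sampling discovers an identifying $\ell$-tuple for every column: for a uniformly random $u$ and $\ell-1$ uniformly random neighbors of $u$, the probability that the resulting $\ell$-tuple has supports intersecting in exactly $\{i\}$ for a given $i$ is $\Omega(1/(k^{\ell-2}m))$---a factor of $k/m$ from $i\in\Omega^{(u)}$, roughly $1/k^{\ell-1}$ from each random neighbor's intersection with $u$ landing at $i$, and a constant factor for unique pairwise intersection. With $\Omega(k^{\ell-2}m\log^2 m)$ iterations, every column is identified with high probability. The deletion step is justified exactly as in Theorem~\ref{thm:ocluster}: any non-identifying $\ell$-tuple that survives the threshold must have common intersection of size $\ge 2$, and hence its $S$-set strictly contains the $S$-set produced by an identifying $\ell$-tuple for any element in that common intersection. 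The running time is $O(p^2n)$ to build $G$, $\widetilde O(k^{\ell-2}mp)$ for the main loop (each iteration intersects $\ell$ neighborhoods of size $O(pk/m)$), and polynomial in $m$ for deletion, matching the stated bound.

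The main technical obstacle is the piercing-number sum in the upper bound: one must check that the $s=\ell$ term is indeed the binding one at $k\asymp m^{(\ell-1)/(2\ell-1)}$ (this is precisely where the exponent $(\ell-1)/(2\ell-1)$ comes from by balancing the $s=\ell$ contribution $(k^2/m)^\ell \cdot (m/k)$ against $1$) and that the smaller-$s$ contributions are subdominant, independent of $\ell$, once $c$ is a small enough constant depending on $\ell$.
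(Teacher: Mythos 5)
Your proposal is correct and matches the paper's approach: the paper's argument is a sketch that reduces to Claim~\ref{claim:upper2} for the lower bound, the piercing-number Corollary of Lemma~\ref{lemma:pierce} for the upper bound, and a balancing of the dominant $(\ell k^2/m)^\ell$ term against $k/m$ to obtain the $m^{(\ell-1)/(2\ell-1)}$ threshold, which is exactly what you do. You additionally spell out the coupon-collector argument justifying the $\Omega(k^{\ell-2}m\log^2 m)$ iteration count and the deletion step, both of which the paper leaves implicit by analogy with the proof of Theorem~\ref{thm:ocluster}.
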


In order to prove this theorem we first give an analogue of Claim~\ref{claim:upper}:

\begin{claim}\label{claim:upper2}
Suppose $\Omega^{(1)} \cap \Omega^{(2)} \cap ... \cap \Omega^{(\ell)} \neq \emptyset$, then $$Pr_Y[ \mbox{ for all } j = 1, 2, ..., \ell, |\langle Y, Y^{(j)} \rangle | > 1/2]  \geq \frac{k}{2m}$$
\end{claim}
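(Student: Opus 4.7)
The plan is to mirror the proof of Claim~\ref{claim:upper} almost verbatim, since conceptually nothing changes when moving from three sets to $\ell$ sets: I produce one good event $\mathcal{E}$ on which all $\ell$ inner products are simultaneously large, and then lower bound $\Pr[\mathcal{E}]$ by the pairwise moment condition.

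Concretely, I would pick any $i \in \Omega^{(1)} \cap \Omega^{(2)} \cap \cdots \cap \Omega^{(\ell)}$ (which exists by hypothesis), and define $\mathcal{E}$ to be the event that $\Omega \cap \Omega^{(j)} = \{i\}$ for every $j = 1,\ldots,\ell$. On $\mathcal{E}$, each pair of supports $\Omega$ and $\Omega^{(j)}$ intersects in exactly the single coordinate $i$, so the ``unique intersection'' consequence of Lemma~\ref{lemma:hw} noted inside the proof of Claim~\ref{claim:upper} immediately gives $|\langle Y,Y^{(j)}\rangle| > 1/2$ for all $j$ simultaneously. Hence the probability we want is at least $\Pr[\mathcal{E}]$.

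To lower bound $\Pr[\mathcal{E}]$, factor as
\[
\Pr[\mathcal{E}] \;=\; \Pr[i \in \Omega]\cdot \Pr\!\left[\Big(\bigcup_{j=1}^{\ell}\Omega^{(j)} \setminus \{i\}\Big)\cap \Omega = \emptyset \,\Big|\, i\in\Omega\right].
\]
The first factor is $\Theta(k/m)$ by the distributional assumption. For the second factor I would take a union bound over the $\le \ell k$ elements in $\bigcup_j \Omega^{(j)}\setminus\{i\}$; for each such coordinate $t$, the conditional probability $\Pr[t\in\Omega \mid i\in\Omega]$ is $O(k/m)$ by the bounded $2$-wise moment property of the distribution. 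This yields a second factor of $1 - O(\ell k^2/m)$.

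The only thing to verify (which is the mild ``obstacle'') is that $\ell k^2/m = o(1)$ in the parameter regime of Theorem~\ref{thm:ocluster2}, so the second factor is at least $1/2$ and the product is at least $k/(2m)$. But for $k \le c\, m^{(\ell-1)/(2\ell-1)}$ we have $k^2/m \le c^2 m^{-1/(2\ell-1)}$, and since $\ell$ is a constant (determined by the $\eta$ in the theorem), $\ell k^2/m \to 0$. This gives $\Pr[\mathcal{E}] \ge k/(2m)$ as desired, completing the proof.
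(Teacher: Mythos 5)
Your proposal is correct and takes essentially the same route as the paper: the paper simply remarks that the proof of this claim is identical to that of Claim~\ref{claim:upper}, namely defining the same event $\mathcal{E}$ that $\Omega\cap\Omega^{(j)}=\{i\}$ for all $j$ and lower bounding $\Pr[\mathcal{E}]\ge (k/m)(1-O(\ell k^2/m))\ge k/2m$ via the bounded second moment and a union bound. Your extra check that $\ell k^2/m = o(1)$ in the relevant parameter regime is a sensible explicit verification of a step the paper leaves implicit.
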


\noindent The proof of this claim is identical to the proof of Claim~\ref{claim:upper}. Next we give the crucial corollary of Lemma~\ref{lemma:pierce}. 

\begin{corollary}
The probability that $\Omega$ hits each set in a $(k, Q)$ family (of $\ell$ sets) is at most $$  \sum_{2 \leq s \leq \ell -1} (Qs(s+1) (\ell k)^{s-1}) \Big (\frac{k}{m} \Big )^s + \sum_{s \geq \ell} \Big (\frac{\ell k^2}{m} \Big )^s $$ where $C_s$ is a constant depending polynomially on $s$. 
\end{corollary}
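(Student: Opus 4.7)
The plan is to decompose the event $\{\Omega \text{ hits every } \Omega^{(j)}\}$ according to the minimum size $s$ of a subset of $\Omega$ that pierces the whole family. Whenever $\Omega$ hits every $\Omega^{(j)}$, there exists some $P \subseteq \Omega \cap (\bigcup_j \Omega^{(j)})$ of size at most $\ell$ that pierces the family (one element per set in the worst case), so by a union bound
$$\Pr[\Omega \text{ pierces all } \ell \text{ sets}] \;\leq\; \sum_{s \geq 2} N_s \cdot \max_{|P|=s} \Pr[P \subseteq \Omega],$$
where $N_s$ counts the piercing subsets of size exactly $s$. The lower limit $s \geq 2$ reflects the setting of interest where the family has no common intersection (so a single point cannot pierce everything); the $s=1$ case is handled separately in the applications of this corollary.

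For the count $N_s$, I split on whether the improved piercing bound is available. For $2 \leq s \leq \ell - 1$, Lemma~\ref{lemma:pierce} applies and gives $N_s \leq Q s(s+1)(\ell k)^{s-1}$. For $s \geq \ell$, the improved bound is no longer available, so I fall back on the trivial count $N_s \leq (\ell k)^s$ using $|\bigcup_j \Omega^{(j)}| \leq \ell k$. For the probability factor, the bounded-moment condition on the distribution $\Gamma$ (Definition~\ref{def:gamma}, or its weaker analogue in Definition~\ref{def:weak} when $s$ is small) guarantees $\Pr[P \subseteq \Omega] \leq O((k/m)^{|P|})$ for any fixed set $P$, with the constant absorbed by the leading factor. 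Multiplying and summing yields
$$\sum_{2 \leq s \leq \ell-1} Qs(s+1)(\ell k)^{s-1} \Big(\frac{k}{m}\Big)^s \;+\; \sum_{s \geq \ell} (\ell k)^s \Big(\frac{k}{m}\Big)^s,$$
which is exactly the claimed bound after observing $(\ell k)^s (k/m)^s = (\ell k^2/m)^s$.

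The only real difficulty is bookkeeping. One has to be careful that summing over minimum piercing sizes does not double-count: this is handled by ranging $P$ only over piercings of size exactly $s$ rather than ``at most $s$.'' The other subtle point is that the probability bound $(ck/m)^s$ is only available for $s$ up to the moment order guaranteed by the distribution; for the regime of interest (where Lemma~\ref{lemma:pierce} is being applied with $\ell$ a small constant) this holds by assumption. Everything else is a direct substitution of Lemma~\ref{lemma:pierce} into the union bound.
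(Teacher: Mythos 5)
Your proposal is correct and takes essentially the same approach as the paper: decompose by the size $s$ of the minimal piercing set, apply a union bound over piercing sets of size exactly $s$, use the second part of Lemma~\ref{lemma:pierce} to count for $s \le \ell-1$ and the trivial first part for $s \ge \ell$, and bound the probability that $\Omega$ contains a fixed $s$-point set by $(k/m)^s$ via the moment condition. Your additional remarks about the implicit $s \ge 2$ restriction and the moment-order caveat are sound clarifications of what the paper leaves tacit, not a different argument.
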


\begin{proof}
We can break up the probability of the event that $\Omega$ hits each set in a $(k, Q)$ family into another family of events. Let us consider the probability that $X$ pierces the family with $s \leq \ell -1$ points or $s \geq \ell$ points. In the former case, we can invoke the second part of Lemma~\ref{lemma:pierce} and the probability that $X$ hits any particular set of $s$ points is at most $(k/m)^s$. In the latter case, we can invoke the first part of Lemma~\ref{lemma:pierce}. 
\end{proof}

Note that if $k \leq m^{1/2}$ then $k/m$ is always greater than or equal to $k^{s-1} (k/m)^s$. And so asymptotically the largest term in the above sum is $(k^2/m)^\ell$ which we want to be asymptotically smaller than $k/m$ which is the probability in Claim~\ref{claim:upper2}. So if $k \leq c m^{(\ell -1)/(2\ell -1)}$ then above bound is $o(k/m)$ which is asymptotically smaller than the probability that a given set of $\ell$ nodes that have a common intersection are each connected to a random (new) node in the connection graph. So again, we can distinguish between whether or not an $\ell$-tuple has a common intersection or not and this immediately yields a new overlapping clustering algorithm that works for $k$ almost as large as $ \sqrt{m}$, although the running time depends on how close $k$ is to this bound.



\section{Discussion}
\label{sec:implement}
This paper shows it is possible to provably get around the chicken-and-egg problem inherent in dictionary
learning: not knowing $A$ seems to prevents recovering $X$'s and vice versa.
By using combinatorial techniques to recover the support of each $X$ without knowing the dictionary, our algorithm suggests a new way to design algorithms. 

Currently the running time is $\widetilde{O}(p^2n)$ time, which may be too slow for large-scale problems. But our algorithm suggests more heuristic versions of recovering the support
that are more efficient.  One alternative is to construct the connection graph $G$ and then find the overlapping clustering by running a truncated power method~\citep{YZ} on $e_i+e_j$ (a vector that is one on indices $i$, $j$ and zero elsewhere and $(i, j)$ is an edge). In experiments, this recovers a good enough approximation to the true clustering that can then be used to smartly initialize KSVD so that it does not have to start from scratch. In practice, this yields a hybrid method that converges much more quickly and succeeds more often. Thus we feel that in practice the best algorithm may use
algorithmic ideas presented here.

We note that for dictionary learning, making stochastic assumptions seems unavoidable. Interestingly, our experiments help to corroborate some of the assumptions. For instance, the condition $\E[X_i|X_i\ne 0] = 0$ used in our best analysis also seems necessary for KSVD; empirically we have seen its performance degrade
when this is violated.

\section*{Acknowledgements} We would like to thank Aditya Bhaskara, Tengyu Ma and Sushant Sachdeva for numerous helpful
discussions throughout various stages of this work.

\newpage

\bibliographystyle{plain}
\bibliography{dictionary}

\newpage

\appendix

\section{Clustering Using Only Bounded 3-wise Moment}\label{asec:cc}
\label{asec:boundedmoment}
When the support of $X$ has only bounded 3-wise moment, it is possible to have two supports $\Omega$ with large intersection. In that case checking the number of common neighbors cannot correctly identify whether the three samples have a common intersection. In particular, there might be false positives (three samples with no common intersection but has many common neighbors) but no false negatives (still all samples with common intersection will have many common neighbors). The algorithm can still work in this case, because it is unlikely for the two supports to have a very large intersection:

\begin{lemma} \label{lemma:intersection}
Suppose $\Gamma$ has bounded 3-wise moments, $k = cm^{2/5}$ for some small enough constant $c>0$. For any set $\Omega$ of size $k$, the probability that a random support $\Omega'$ from $\Gamma$ has intersection larger than $m^{1/5}/100$ with $\Omega$ is at most $O(m^{-6/5})$.
\end{lemma}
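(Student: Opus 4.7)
The plan is to apply Markov's inequality to the third \emph{falling factorial} moment of $Z := |\Omega \cap \Omega'|$, rather than to the raw third moment. Let $T = m^{1/5}/100$ and let $Z^{(3)} := Z(Z-1)(Z-2)$. Expanding this product gives
\[
Z^{(3)} = \sum_{(i,j,t)} \mathbf{1}[i \in \Omega']\,\mathbf{1}[j \in \Omega']\,\mathbf{1}[t \in \Omega'],
\]
where the sum ranges over ordered triples of \emph{distinct} elements $i,j,t \in \Omega$. Taking expectations and invoking the bounded 3-wise moment hypothesis from Definition~\ref{def:gamma}, every such triple contributes at most $O((k/m)^3)$ (using $\Pr[i \in \Omega'] = \Theta(k/m)$), so $\E[Z^{(3)}] \le k^3 \cdot O((k/m)^3) = O(k^6/m^3)$. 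Substituting $k = cm^{2/5}$ gives $\E[Z^{(3)}] = O(m^{12/5 - 3}) = O(m^{-3/5})$.

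To finish, note that for $m$ large enough $T \ge 3$, so the event $Z > T$ implies $Z^{(3)} \ge T(T-1)(T-2) = \Omega(T^3) = \Omega(m^{3/5})$. Markov's inequality then yields
\[
\Pr[Z > T] \;\le\; \frac{\E[Z^{(3)}]}{T(T-1)(T-2)} \;=\; \frac{O(m^{-3/5})}{\Omega(m^{3/5})} \;=\; O(m^{-6/5}),
\]
as required. (For the small values of $m$ where $T < 3$, the claimed bound is trivial since probabilities are at most $1$ while $m^{-6/5}$ is $\Omega(1)$.)

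The main obstacle, and the reason the falling factorial is needed, is that a naive bound using $\E[Z^3]$ fails: the diagonal terms with $i=j=t$ or with exactly two indices equal contribute $\Theta(\E[Z]) = \Theta(k^2/m) = \Theta(m^{-1/5})$ to $\E[Z^3]$, which dominates the off-diagonal contribution and would only yield a tail bound of $O(m^{-4/5})$. Restricting the sum to strictly distinct triples eliminates these diagonal contributions entirely, and the resulting sum is controlled \emph{exactly} by the 3-wise moment hypothesis that the lemma supplies, so no 2-wise bound or independence assumption beyond Definition~\ref{def:gamma} is needed.
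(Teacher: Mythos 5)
Your proof is correct and is essentially the paper's own argument: the paper applies Markov's inequality to $T$, the number of unordered triples in $\Omega \cap \Omega'$, while you apply it to $Z^{(3)} = Z(Z-1)(Z-2) = 6T$; both are the third falling-factorial moment controlled directly by the bounded 3-wise moment assumption. The extra discussion of why the raw third moment's diagonal terms would spoil the bound is a nice clarification but does not change the route.
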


\begin{proof}
Let $T$ be the number of triples in the intersection of $\Omega$ and $\Omega'$. For any triple in $\Omega$, the probability that it is also in $\Omega'$ is at most $O(k^3/m^3)$ by bounded 3-wise moment. Therefore $\E[T] \le {k\choose 3} O(k^3/m^3) = O(k^6/m^3)$.

On the other hand, whenever $\Omega$ and $\Omega'$ has more than $m^{1/5}/100$ intersections, $T$ is larger than ${m^{1/5}/100 \choose 3}$. By Markov's inequality we know $\Pr[|\Omega\cap \Omega'|\ge m^{1/5}/100] \le O(m^{-6/5})$.
\end{proof}

Since the probability of having false positives is small (but not negligible), we can do a simple trimming operation when we are computing the set $S_{u,v}$ in Algorithm~\ref{alg:ocluster}. We shall change the definition of $S_{u,v}$ as follows:
\begin{enumerate}
\item Set $S'_{u,v} = \{w:|\Gamma_G(u)\cap\Gamma_G(v)\cap\Gamma_G(w)|\ge T\} \cup \{u,v\}$.
\item Set $S_{u,v} = \{w: w\in S'_{u,v} \mbox{ and }|\Gamma_G(w)\cap S'_{u,v}| \ge T\}$.
\end{enumerate}

Now $S'_{u,v}$ is the same as the old definition and may have false positives. However, intuitively the false positives are not in the cluster so they cannot have many connections to the cluster, and will be filtered out in the second step. In particular, we have the following lemma:

\begin{lemma}\label{lemma:identifyingpair}
If $(u,v)$ is an indentifying pair (as defined in Definition~\ref{def:identifyingpair}) for $i$, then with high probability $S_{u,v}$ is the set $\mathcal{C}_i = \{j:i\in \Omega^{(j)}\}$.
\end{lemma}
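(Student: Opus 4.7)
The plan is to show that the two-step definition of $S_{u,v}$ exactly recovers $\mathcal{C}_i$ by analyzing inclusion/exclusion in both steps. Let me split the candidate $w$'s into three groups: (a) the true positives $w \in \mathcal{C}_i$ (i.e.\ $i \in \Omega^{(w)}$); (b) ``light'' false positives $w \notin \mathcal{C}_i$ with $\max(|\Omega^{(u)} \cap \Omega^{(w)}|,|\Omega^{(v)} \cap \Omega^{(w)}|) \le m^{1/5}/100$; and (c) ``heavy'' false positives $w \notin \mathcal{C}_i$ with large pairwise intersection with $\Omega^{(u)}$ or $\Omega^{(v)}$. The goal is to show (a) survives both steps, (b) is already eliminated in Step~1, and (c) --- which may sneak through Step~1 --- is eliminated in Step~2.

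First I would handle Step~1. For $w \in \mathcal{C}_i$, Claim~\ref{claim:upper} gives $\Pr[z \in \Gamma_G(u) \cap \Gamma_G(v) \cap \Gamma_G(w)] \ge k/(2m)$, so Chernoff puts $|\Gamma_G(u)\cap \Gamma_G(v)\cap\Gamma_G(w)| \ge T = pk/(10m)$ w.h.p. For a light false positive $w$, the triple $\Omega^{(u)},\Omega^{(v)},\Omega^{(w)}$ has empty common intersection and small pairwise intersections, so Lemma~\ref{lemma:lower} (with $a=1$, $b,c \le m^{1/5}/100$) gives $q_{uvw} \le k^6/m^3 + O(k^3 m^{1/5}/m^2) = O(c^3)\cdot (k/m)$ when $k \le cm^{2/5}$. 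For small $c$, Chernoff ensures the count is below $T$ w.h.p., so $w \notin S'_{u,v}$. By Lemma~\ref{lemma:intersection} together with a union bound, the number of heavy false positive $w$'s among the $p = \Theta((m^2/k^2)\log m)$ samples is at most $p\cdot O(m^{-6/5}) = O(\log m)$ w.h.p. Thus $S'_{u,v} \subseteq \mathcal{C}_i \cup H$ with $|H| = O(\log m)$.

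Next I analyze Step~2. For any $z \in \mathcal{C}_i$ and $w \in \mathcal{C}_i$, bounded $3$-wise moment gives $\Pr[|\Omega^{(z)} \cap \Omega^{(w)}| \ge 2 \mid z \in \mathcal{C}_i] \le \sum_{j\in \Omega^{(w)}\setminus\{i\}} \Pr[j \in \Omega^{(z)} \mid i \in \Omega^{(z)}] = O(k^2/m)$; summing over $z \in \mathcal{C}_i$ and applying Lemma~\ref{lemma:hw}, almost every pair of samples in $\mathcal{C}_i$ has a clean unique intersection at $i$ and is hence connected in $G$. Therefore $|\Gamma_G(w) \cap \mathcal{C}_i| \ge (1-o(1))|\mathcal{C}_i| \ge T$ w.h.p., so every true positive passes. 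For a heavy false positive $w \in H$, a sample $z \in \mathcal{C}_i$ can only be connected to $w$ if $\Omega^{(z)}$ meets $\Omega^{(w)} \setminus \{i\}$ (since $i \notin \Omega^{(w)}$). Bounded $3$-wise moment bounds this event's probability by $k \cdot O(k^2/m^2) = O(k^3/m^2)$, and Chernoff (over the $p$ samples) gives $|\Gamma_G(w) \cap \mathcal{C}_i| = O(pk^3/m^2) = o(T)$ when $k = O(m^{2/5})$. Combined with $|\Gamma_G(w) \cap H| \le |H| = O(\log m) \ll T$, this forces $|\Gamma_G(w)\cap S'_{u,v}| < T$, so $w$ is dropped in Step~2.

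Putting the cases together yields $S_{u,v} = \mathcal{C}_i$ with high probability. A final union bound over all $O(m)$ identifying pairs gives correctness for every column simultaneously. The main obstacle is the heavy false positives: without the second filtering step they can remain in $S'_{u,v}$, so the key calculation is showing that their number is small ($O(\log m)$), that none of them are well-connected to $\mathcal{C}_i$ (requiring the $3$-wise moment bound to rule out large pairwise support overlaps with individual samples in $\mathcal{C}_i$), and that these two estimates combine to fall below the threshold $T$.
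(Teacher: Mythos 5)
Your proof is correct and follows essentially the same structure as the paper's: show $S'_{u,v}$ equals $\mathcal{C}_i$ plus a small set of "heavy" false positives (samples whose support heavily overlaps $\Omega^{(u)}$ or $\Omega^{(v)}$), then argue the second filtering step keeps all of $\mathcal{C}_i$ and removes the heavy false positives because they have too few edges into $S'_{u,v}$. Your explicit three-way decomposition into true positives, light false positives, and heavy false positives is a slightly cleaner organization of the same idea, and your bound of $O(\log m)$ on the number of heavy false positives is tighter than (but implies) the paper's stated $p/m$ bound; the substance and key lemmas invoked (Claim~\ref{claim:upper}, Lemma~\ref{lemma:lower}, Lemma~\ref{lemma:intersection}, bounded moments plus Chernoff) are identical.
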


\begin{proof}
First we argue the set $S'_{u,v}$ is the union of $\mathcal{C}_i$ with a small set. By Claim~\ref{claim:upper} and Chernoff bound, for all $w\in \mathcal{C}_i$ $u,v,w$ has more than $T$ common neighbors, so $w\in S'_{u,v}$. On the other hand, if $w\not\in \mathcal{C}_i$ but $w\in S'_{u,v}$, then by Lemma~\ref{lemma:lower} we know $\Omega^{(w)}$ must have a large intersection with either $\Omega^{(u)}$ or $\Omega^{(v)}$, which has probability only $O(m^{-6/5})$ by Lemma~\ref{lemma:intersection}. Therefore again by concentration bounds with high probability $|S'_{u,v}\backslash\mathcal{C}_i| \le p/m \ll T$.

Now consider the second step. For the samples in $\mathcal{C}_i$, the probability that they are connected to another random sample in $\mathcal{C}_i$ is $1-O(k^2/m)$, so by concentration bounds with high probability they have at least $T$ neighbors in $\mathcal{C}_i$, and they will not be filtered and are still in $S_{u,v}$. On the other hand, for any vertex $w\not\in \mathcal{C}_i$, the expected number of edges from $w$ to $\mathcal{C}_i$ is only $O(k^2/m)|\mathcal{C}_i| \ll T$, and by concentration property, they are concentrated around the expectation with high probability. So for any $w\in S'_{u,v}\backslash \mathcal{C}_i$, it can only have $O(pk^3/m^2)$ edges to $\mathcal{C}_i$, and $O(p/m)$ edges to $S'_{u,v}\backslash \mathcal{C}_i$. The total number of edges to $S'_{u,v}$ is much less than $T$, so all of those vertices are going to be removed, and $S_{u,v} = \mathcal{C}$.
\end{proof}

This lemma ensures after we pick enough random pairs, with high probability all the correct clusters $\mathcal{C}_i$'s are among the $S_{u,v}$'s. There can be ``bad'' sets, but same as before all those sets contains some of the $\mathcal{C}_i$, so will be removed at the end of the algorithm:

\begin{claim}
For any pair $(u,v)$ with $i\in \Omega^{(u)}\cap\Omega^{(v)}$, let $\mathcal{C}_i =\{j:i\in \Omega^{(j)}\}$, then with high probability $\mathcal{C}_i\subseteq S_{u,v}$.
\end{claim}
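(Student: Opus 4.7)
The plan is to establish $\mathcal{C}_i \subseteq S_{u,v}$ in two stages mirroring the two stages in the redefinition of $S_{u,v}$: first show $\mathcal{C}_i \subseteq S'_{u,v}$, and then show every $w \in \mathcal{C}_i$ survives the second filtering step, i.e., has at least $T$ neighbors inside $S'_{u,v}$. Throughout we condition on the high-probability events from Lemma~\ref{lemma:hw}, Lemma~\ref{lemma:intersection}, and the assumption that pairwise support-intersections have the typical $O(k^2/m)$ behavior.

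For the first stage, fix any $w \in \mathcal{C}_i$. Since $i \in \Omega^{(u)} \cap \Omega^{(v)} \cap \Omega^{(w)}$, Claim~\ref{claim:upper} applies to the triple $(u,v,w)$: a fresh random sample $Y$ has $|\langle Y, Y^{(j)}\rangle| > 1/2$ simultaneously for $j=u,v,w$ with probability at least $k/(2m)$. Since the other $p - O(1)$ samples are mutually independent (given $u,v,w$), the expected number of common neighbors is at least $\Omega(pk/m)$, which strictly dominates $T = pk/(Cm 2^{\ell})$. A standard Chernoff bound yields at least $T$ common neighbors with high probability, so $w \in S'_{u,v}$. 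Taking a union bound over $w \in \mathcal{C}_i$ gives $\mathcal{C}_i \subseteq S'_{u,v}$ with high probability.

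For the second stage, fix any $w \in \mathcal{C}_i$; we must show $|\Gamma_G(w) \cap S'_{u,v}| \ge T$. Since we have just shown $\mathcal{C}_i \subseteq S'_{u,v}$, it suffices to lower-bound $|\Gamma_G(w) \cap \mathcal{C}_i|$. For a different $w' \in \mathcal{C}_i$, condition on $Y^{(w)}$ and on the event $i \in \Omega^{(w')}$. By the bounded pairwise moment property, $\Pr[\Omega^{(w)} \cap \Omega^{(w')} \ne \{i\} \mid i \in \Omega^{(w')}] = O(k^2/m) = o(1)$. When the intersection is exactly $\{i\}$, the Hanson-Wright argument of Lemma~\ref{lemma:hw} (applied to the ``residual'' supports $\Omega^{(w)} \setminus \{i\}$ and $\Omega^{(w')} \setminus \{i\}$) gives $|\langle Y^{(w)}, Y^{(w')}\rangle - X^{(w)}_i X^{(w')}_i| < 1/2$ with high probability, and hence $|\langle Y^{(w)}, Y^{(w')}\rangle| > 1/2$ since $|X^{(w)}_i X^{(w')}_i| \ge 1$. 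Therefore, conditionally on $Y^{(w)}$, each other sample in $\mathcal{C}_i$ is a neighbor of $w$ with probability $1 - o(1)$, independently; since $|\mathcal{C}_i| = \Omega(pk/m) \gg T$ with high probability, a Chernoff bound gives $|\Gamma_G(w) \cap \mathcal{C}_i| \ge T$.

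The main obstacle is bookkeeping the conditioning and the union bounds carefully. The probability estimates for common neighbors and for in-cluster neighbors must be derived conditionally on $Y^{(u)}, Y^{(v)}, Y^{(w)}$, exploiting the independence of the remaining samples; one must also union-bound over all $m$ possible values of $i$, all $w \in \mathcal{C}_i$, and implicitly over all pairs $(u,v)$ with $i \in \Omega^{(u)} \cap \Omega^{(v)}$, which contributes at most polynomial factors absorbed into ``with high probability''. Combining the two stages yields that every $w \in \mathcal{C}_i$ is in $S'_{u,v}$ and has enough neighbors in $S'_{u,v}$, so $w \in S_{u,v}$, completing the proof.
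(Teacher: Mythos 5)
Your proof follows the paper's argument exactly: the same two stages, first establishing $\mathcal{C}_i \subseteq S'_{u,v}$ via Claim~\ref{claim:upper} plus Chernoff, then showing that each $w\in\mathcal{C}_i$ has at least $T$ neighbors inside $\mathcal{C}_i\subseteq S'_{u,v}$ (because two random samples from $\mathcal{C}_i$ have unique support-intersection $\{i\}$ with probability $1-O(k^2/m)$ and are then connected by the Hanson--Wright argument), followed by a union bound. The paper states this tersely by referring back to the preceding lemma; your writeup simply spells out the same steps, including the conditioning on $Y^{(w)}$ that makes the in-cluster neighbor events independent.
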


\begin{proof}
This is essentially in the proof of the previous lemma.
As before by Claim~\ref{claim:upper} we know $\mathcal{C}_i\subseteq S'_{u,v}$. Now for any sample in $\mathcal{C}_i$, the expected number of edges to $\mathcal{C}_i$ is $(1-o(1))|\mathcal{C}_i|$, by concentration bounds we know the number of neighbors is larger than $T$ with high probability. Then we apply union bound for all samples in $\mathcal{C}_i$, and conclude that $\mathcal{C}_i\subseteq  S_{u,v}$.
\end{proof}

\section{Extensions: Proof Sketch of Theorem~\ref{thm:weakerassumption}}\label{sec:ext}

Let us first examine how the conditions in the hypothesis of
 Theorem~\ref{thm:mainfront} were used in its proof
and then discuss why they can be relaxed.   

Our algorithm is based on three steps: constructing the connection graph, finding the overlapping clustering, and recovering the dictionary. However if we invoke Lemma~\ref{lemma:graph} (as opposed to Lemma~\ref{lemma:hw}) then the properties we need of the connection graph follow from each $X$ being at most $k$ sparse for $k \leq n^{1/4}/\sqrt{\mu}$ without any distributional assumptions. 

Furthermore, the crucial steps in finding the overlapping clustering are bounds on the probability that a sample $X$ intersects a triple with a common intersection, and the probability that it does so when there is no common intersection (Claim~\ref{claim:upper} and Lemma~\ref{lemma:lower}). Indeed, these bounds hold whenever the probability of two sets intersecting in two or more locations is smaller (by, say, a factor of $k$) than the probability of the sets intersecting once. This can be true even if elements in the sets have significant positive correlation (but for the ease of exposition, we have emphasized the simplest models at the expense of generality). Lastly, Algorithm~\ref{alg:oaverage} we can instead consider the difference between the averages for $S_i$ and $\mathcal{C}_i \backslash S_i$ and this succeeds even if $\E[X_i]$ is non-zero. This last step does use the condition that the variables $X_i$ are independent, but if we instead use Algorithm~\ref{alg:osvd} we can circumvent this assumption and still recover a dictionary that is close to the true one. 

Finally, the ``bounded away from zero'' assumption in Definition~\ref{def:gamma} can be relaxed: the resulting algorithm recovers a dictionary that is close enough to the true one and still allows
sparse recovery. This is because when the distribution has the anti-concentration property, a slight variant of Algorithm~\ref{alg:ocluster} can still find most (instead of all) columns with $X_i \ne 0$.

Using the ideas from this part, we give a proof sketch for Theorem~\ref{thm:weakerassumption}

\begin{proof}[sketch for Theorem~\ref{thm:weakerassumption}]
The proof follows the same steps as the proof of Theorem~\ref{thm:main}. There are a few steps that needs to be modified:

\begin{enumerate}
\item Invoke Lemma~\ref{lemma:graph} instead of Lemma~\ref{lemma:hw}.
\item For Lemma~\ref{lemma:lower}, use the weaker bound on the 4-th moment. This is still OK because $k$ is smaller now.
\item In Definition~\ref{def:Ri}, redefine $R_i^2$ to be $\E_{x\in D^i}[\left<A_i, Ax\right>^2]$.
\item In Lemma~\ref{lemma:var}, use the bound $R_i^2 \alpha^2 + \alpha\sqrt{1-\alpha^2} 2k\sqrt{\mu}/n^{1/4} + (1-\alpha^2)k^2\mu/\sqrt{n}$ in order to take the correlations between $X_i$'s into account.
\end{enumerate}

\end{proof}

\paragraph{Remark:}
Based on different assumptions on the distribution, there are algorithms with different trade-offs. Theorem~\ref{thm:weakerassumption} is only used to illustrate the potential of our approach and does not try to achieve optimal trade-off in every case.

A major difference from class $\Gamma$ is that the $X_i$'s do not have expectation $0$ and are not forbidden from taking values close to $0$ (provided they do have reasonable probability of taking values away from $0$).  Another major difference is that the distribution of $X_i$ {\em can} depend upon 
the values of other nonzero coordinates. The weaker moment condition allows a fair bit of correlation among the set of nonzero coordinates.

\label{rem:anticoncentration}
It is also possible to relax the condition that each nonzero $X_i$ is in $[-C,-1]\cup[1,C]$. Instead we require $X_i$ has magnitude at most  $O(1)$, and has a weak {\em anti-concentration} property: for every $\delta >0$ it has probability at least $c_{\delta} >0$ of exceeding $\delta$ in magnitude. This requires changing Algorithm~\ref{alg:ocluster} in the following ways:

For each set $S$, let $T$ be the subset of vertices that have at least $1-2\delta$ neighbors in $S$: $T =\{i\in S, |\Gamma_G(i)\cap S| \ge (1-2\delta)|S|$. Keep sets $S$ that $1-2\delta$ fraction of the vertices are in $T$ ($|T| \ge (1-2\delta)|S|$).
Here the choice of $\delta$ depend on parameters $\mu, n,k$, and effects the final accuracy of the algorithm. This ensures for any remaining $S$, there must be a single coordinate that every $X^{(i)}$ for $i\in S$ is nonzero on. 

In the last step, only output sets that are significantly different from the previously outputted sets (significantly different means the symmetric difference is at least $pk/5m$)

\section{Discussion: Overlapping Communities}\label{sec:disc}

There is a connection between the approach used here, and the recent work on algorithms for finding overlapping communities (see in particular \cite{AGSS}, \cite{BBBCT}). We can think of the set of samples $Y$ for which $X_i \neq 0$ as a ``community". Then each sample is in more than one community, and indeed for our setting of parameters each sample is contained in $k$ communities. We can think of the main approach of this paper as:

\begin{center}
{\em If we can find all of the overlapping communities, then we can learn an unknown dictionary. }
\end{center}

So how can we find these overlapping communities? The recent papers \cite{AGSS}, \cite{BBBCT} pose deterministic conditions on what constitutes a community (e.g. each node outside of the community has fewer edges into the community than do other members of the community). These papers provide algorithms for finding all of the communities, provided these conditions are met. However for our setting of parameters, both of these algorithms would run in quasi-polynomial time. For example, the parameter ``$d$" in the paper \cite{AGSS} is an upper-bound on how many communities a node can belong to, and the running time of the algorithms in \cite{AGSS} are quasi-polynomial in this parameter. But in our setting, each sample $Y$ belongs to $k$ communities -- one for each non-zero value in $X$ -- and the most interesting setting here is when $k$ is polynomially large. Similarly, the parameter ``$\theta$" in \cite{BBBCT} can be thought of as: If node $u$ is in community $c$, what is the ratio of the edges incident to $u$ that leave the community $c$ compared to the number that stay inside $c$? Again, for our purposes this parameter ``$\theta$" is roughly $k$ and the algorithms in \cite{BBBCT} depend quasi-polynomially on this parameter. 

Hence these algorithms would not suffice for our purposes because when applied to learning an unknown dictionary, their running time would depend quasi-polynomially on the sparsity $k$. In contrast, our algorithms run in polynomial time in all of the parameters, albeit for a more restricted notion of what constitutes a community (but one that seems quite natural from the perspective of dictionary learning). Our algorithm {\sc OverlappingCluster} finds all of the overlapping ``communities" provided that whenever a triple of nodes shares a common community they have many more common neighbors than if they do not all share a single community. The correctness of the algorithm is quite easy to prove, once this condition is met; but here the main work was in showing that our generative model meets these neighborhood conditions.

\end{document}